\newtheorem{theorem}{Theorem}
\newtheorem{corollary}{Corollary}
\newtheorem{lemma}{Lemma}
\newtheorem{observation}{Observation}
\def\sp {{\bm\sigma}}
\def\ssp {{\bm\sigma}_{ij}}
\def\sspp {{\bm\sigma}_{ik}}
\def\u {{\sf U}}
\def\hpoa {{\sf PoA}}
\def\hpos {{\sf PoS}}
\def\ut {{\sf SUM}}
\def\hn {{\sf DoI}}
\def\i {1_{ij}({\bm\sigma})}
\def\k {1_{ik}({\bm\sigma})}
\newcommand{\ra}[1]{\renewcommand{\arraystretch}{#1}}
\title{Tolerance is Necessary for Stability: Single-Peaked Swap Schelling Games}
\author{
	Davide Bilò$^1$
	\and
	Vittorio Bilò$^2$\and
	Pascal Lenzner$^{3}$\And
	Louise Molitor$^3$
	\affiliations
	$^1$University of L'Aquila, Italy, davide.bilo@univaq.it \\
	$^2$University of Salento, Italy, vittorio.bilo@unisalento.it\\
	$^3$Hasso Plattner Institute, University of Potsdam, Germany, \{firstname.lastname\}@hpi.de
}
\begin{document}
	
	\maketitle
	
	\begin{abstract}
		Residential segregation in metropolitan areas is a phenomenon that can be observed all over the world. Recently, this was investigated via game-theoretic models. There, selfish agents of two types are equipped with a monotone utility function that ensures higher utility if an agent has more same-type neighbors. The agents strategically choose their location on a given graph that serves as residential area to maximize their utility. However, sociological polls suggest that real-world agents are actually favoring mixed-type neighborhoods, and hence should be modeled via non-monotone utility functions. 
		To address this, we study Swap Schelling Games with single-peaked utility functions.
		Our main finding is that tolerance, i.e., agents favoring fifty-fifty neighborhoods or being in the minority, is necessary for equilibrium existence on almost regular or bipartite graphs.  
		Regarding the quality of equilibria, we derive (almost) tight bounds on the Price of Anarchy and the Price of Stability. In particular, we show that the latter is constant on bipartite and almost regular graphs.    
	\end{abstract}
	
	\section{Introduction}
		Residential segregation is defined as the physical separation of two or more groups into different neighborhoods~\citep{MD88}. It is pervasive in metropolitan areas, where large homogeneous regions inhabited by residents belonging to the same ethnic group emerged over time\footnote{See the racial dot map~\citep{C13} at \url{https://demographics.coopercenter.org/racial-dot-map/} for examples from the US.}. 
		
		For more than five decades, residential segregation has been intensively studied by sociologists, as a high degree of segregation has severe consequences for the inhabitants of homogeneous neighborhoods. It negatively impacts their health~\citep{acevedo2003residential}, their mortality~\citep{jackson2000relation}, and in general their socioeconomic conditions~\citep{massey1993american}. While in the early days of research on segregation the emergence of homogeneous neighborhoods was attributed to the individual intolerance of the citizens, it was shown by~\citet{Schelling71} that residential segregation also emerges in a tolerant population. In his landmark model, he considers two types of agents that live on a line or a grid as residential area. Every agent has a tolerance level $\tau$ and is content with her position, if at least a $\tau$-fraction of her direct neighbors are of her type. Discontent agents randomly jump to other empty positions or swap positions with another discontent agent. Schelling found that even for $\tau < \frac{1}{2}$, i.e., even if everyone is content with being in the minority within her neighborhood, random initial placements are over time transformed to placements having large homogeneous regions, i.e., many agents are surrounded by same-type neighbors, by the individual random movements of the agents. It is important to note that the agent behavior is driven by a slight bias towards preferring a certain number of same-type neighbors and that this bias on the microlevel is enough to tip the macrolevel state towards segregation. Schelling coined the term ``micromotives versus macrobehavior'' for such phenomena~\citep{S06}.    
		
		Since its inception, Schelling's influential model was thoroughly studied by sociologists, mathematicians and physicists via computer simulations.
		But only in the last decade progress has been made to understand the involved random process from a theoretical point of view. Even more recently, the Algorithmic Game Theory and the AI communities became interested in residential segregation and game-theoretic variants of Schelling's model were studied~\citep{CLM18,E+19,BBLM20,AEGISV21,KKV21,BSV21}. In these strategic games the agents do not perform random moves but rather jump or swap to positions that maximize their utility. These models incorporate utility functions that are monotone in the fraction of same-type neighbors, i.e., the utility of an agent is proportional to the fraction of same-type neighbors in her neighborhood. See Figure~\ref{fig:utility}~(left).
		\begin{figure}
			\centering 
			\includegraphics[width=\linewidth]{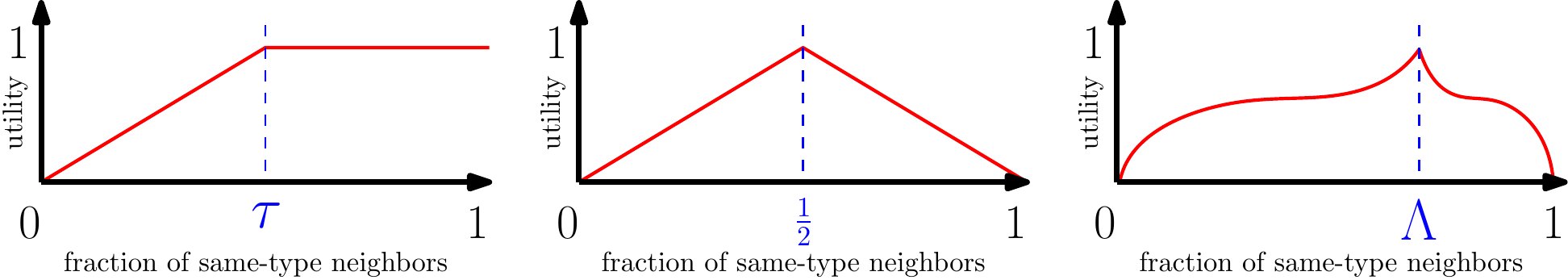}
			\caption{Left: example of the monotone utility functions employed in recent related work. Middle and right: example of a single-peaked utility function considered in this paper.}
			\label{fig:utility}
		\end{figure}
		However, representative sociological polls, in particular data from the General Social Survey\footnote{Since 50 years the GSS is regularly conducted in the US and it is a valuable and widely used data set for social scientists. } (GSS)~\citep{gss},  indicate that this assumption of monotone utility functions should be challenged. For example, in 1982 all black respondents where asked \textit{``If you could find the housing that you would want and like, would you rather live in a neighborhood that is all black; mostly black; half black, half white; or mostly white?''} and $54\%$ responded with \textit{``half black, half white''} while only $14\%$ chose \textit{``all black''}. Later, starting from 1988 until 2018 all respondents (of whom on average $78\%$ were white) where asked what they think of \textit{``Living in a neighborhood where half of your neighbors were blacks?''} a clear majority\footnote{In numbers: 1988: $57\%$, 1998: $70\%$, 2008: $79\%$, 2018: $82\%$. In 2018 $33\%$ answered with ``favor'' or ``strongly favor''.} responded \textit{``strongly favor''}, \textit{``favor''} or \textit{``neither favor nor oppose''}. This shows that the maximum utility should not be attained in a homogeneous neighborhood. 
	
		Based on these real-world empirical observations, this paper sets out to explore a game-theoretic variant of Schelling's model with non-monotone utility functions. In particular, we will focus on single-peaked utility functions with maximum utility at a $\Lambda$-fraction of same-type neighbors (see Figure~\ref{fig:utility} (middle and right)), with $\Lambda \in (0,1)$, satisfying mild assumptions. More precisely, we only require a function $p(x)$ to be zero-valued at $x=0,1$, to be strictly increasing in the interval $[0,\Lambda]$ and to be such that $p(x)=p(\frac{\Lambda(1-x)}{1-\Lambda})$ for each $x\in [\Lambda,1]$, that is, both sides of $p$ approach the peak, one from the left and the other from the right, in the same way, up to a rescaling due to the width of their domains ($[0,\Lambda]$, vs. $[\Lambda,1]$). Our main findings shed light on the existence of equilibrium states and their quality in terms of the recently defined degree of integration~\citep{AEGISV21} that measures the number of agents that live in a heterogeneous neighborhood.
	
		\paragraph{Model}
			We consider a strategic game played on a given underlying connected graph $G=(V,E)$, with $|V| = n$ and $|E| = m$.

			For any node $v \in V$, let the \emph{closed neighborhood} of $v$ in~$G$ be $$N[v]=\{v\}\cup\{u\in V:\{v,u\}\in E\}$$ where $\delta(v)=|N[v]|-1$ denotes the \emph{degree of~$v$}, and $\delta(G)$ and $\Delta(G)$ denote the minimum and the maximum degree over all nodes in $G$, respectively. We call a graph $G$ $\delta$-\emph{regular}, if $\delta(G) = \Delta(G) = \delta$, and \emph{almost regular}, if $\Delta(G)-\delta(G) \leq 1$. 
			We denote with $\alpha(G)$ the \emph{independence number} of $G$, i.e., the cardinality of the maximum independent set in $G$.
	
			A \emph{Single-Peaked Swap Schelling Game $(G,b,\Lambda)$}, called \emph{the game}, is defined by a graph $G = (V,E)$, a positive integer $b\leq n/2$ and a peak position $\Lambda$. There are $n$ strategic agents who choose nodes in~$V$ such that every node is occupied by exactly one agent. Every agent belongs to one of two types that are associated with the colors blue and red. There are $b$ blue agents and $r = n-b$ red agents, with blue being the color of the minority type. Let $c(i)$ be the color of agent~$i$.
	
			A \emph{strategy profile $\sp$} is an $n$-dimensional vector where all strategies are pairwise disjoint, i.e., $\sp$ is a permutation of $V$. The $i$-th entry of $\sp$ corresponds to the strategy of the $i$-th agent. We treat $\sp$ as a bijective function mapping agents to nodes, with $\sp^{-1}$ being its inverse function. Thus, any strategy profile $\sp$ corresponds to a bi-coloring of $G$ in which exactly~$b$ nodes of $G$ are colored blue and $n-b$ are colored red. We say that agent $i$ \emph{occupies node $v$ in~$\sp$} if the $i$-th entry of~$\sp$, denoted as $\sp(i)$, is $v$ and, equivalently, if $\sp^{-1}(v) = i$. We use the notation $\i$, with $\i = 1$ if agents~$i$ and $j$ occupy adjacent nodes in~$\sp$ and $\i = 0$ otherwise. When $\i=1$, we say that agents are {\em adjacent}.  
	
			For an agent $i$ and a feasible strategy profile $\sp$, we denote the set of nodes of $G$ which are occupied by agents having the same color as agent $i$ by $$C_i(\sp)=\{v\in V:c(\sp^{-1}(v))=c(i)\}.$$ Observe that $C_i(\sp)$ includes node $\sp(i)$. Let $$f_i(\sp):=\frac{|N[{\sp(i)}]\cap C_i(\sp)|}{|N[{\sp(i)}]|}$$ be the fraction of agents of her own color in $i$'s neighborhood including herself. Thus, agents are aware of their own contribution to the diversity of their neighborhood The utility of an agent~$i$ in $\sp$ is defined as $\u_i(\sp)=p(f_i(\sp))$, where~$p$ is a single-peaked function with domain $[0,1]$ and peak at $\Lambda \in (0,1)$ that satisfies the following two properties: 
			\begin{compactitem}
				\item[(i)] $p$ is a strictly monotonically increasing function in the interval $[0,\Lambda]$ with $p(0)=0$;
				\item[(ii)] for each $x\in [\Lambda,1]$,  $p(x)=p(\frac{\Lambda(1-x)}{1-\Lambda})$ and $p(\Lambda)=1$.
			\end{compactitem}
			Each agent aims at maximizing her utility. We say an agent~$i$ is {\em below the peak} when $f_i(\sp)<\Lambda$, {\em above the peak} when $f_i(\sp)>\Lambda$, at {\em the peak} when $f_i(\sp)=\Lambda$, and {\em segregated} when $f_i(\sp)=1$.	
			A game $(G,b,\Lambda)$ depends also on the choice of $p$. However, as all our results are independent of~$p$, we remove it from the notation for the sake of simplicity.
	
			An agent can change her strategy only via a \emph{swap}, i.e., exchanging node occupation with another agent. Consider two agents $i$ and $j$, on nodes $\sp(i)$ and $\sp(j)$, respectively, performing a swap. This yields the new strategy profile $\ssp$. As agents are strategic, we only consider {\em profitable swaps}, i.e., swaps which strictly increase the utility of both agents. Hence, profitable swaps can only occur between agents of different colors. A strategy profile $\sp$ is a \emph{swap equilibrium }(SE), if $\sp$ does not admit profitable swaps, i.e., if for each pair of agents $i,j$, we have $\u_i(\sp)\geq\u_i(\ssp)$ or $\u_j(\sp)\geq\u_j(\ssp)$. 
	
			We measure the quality of a strategy profile~$\sp$ via the \emph{degree of integration} ($\hn$), defined by the number of non-segregated agents. The $\hn$ is a simple segregation measure that captures how many agents have contact with other-type agents. We prefer it to the standard utilitarian welfare since it measures segregation independently of the value of $\Lambda$.
			For any fixed game $(G,b,\Lambda)$ let~$\sp^*$ denote a feasible strategy profile maximizing the $\hn$ and let~${\sf SE}(G,b,\Lambda)$ denote the set of swap equilibria for $(G,b,\Lambda)$.  
			We study the impact of the agents' selfishness by analyzing the \emph{Price of Anarchy} (PoA), which is defined as $${\sf PoA}(G,b,\Lambda) = \frac{\hn(\sp^*)}{\min_{\sp \in {\sf SE}(G,b,\Lambda)}\hn(\sp)}$$ and the \emph{Price of Stability} (PoS), which is defined as $${\sf PoS}(G,b,\Lambda) = \frac{\hn(\sp^*)}{\max_{\sp \in {\sf SE}(G,b,\Lambda)}\hn(\sp)}.$$
	
			We investigate the \emph{finite improvement property (FIP)} \citep{MS96}, i.e., if every sequence of profitable swaps is finite, which is equivalent to the existence of an ordinal potential function.	For this, let $$\Phi(\sp) = \left|\left\{\{u,v\}\in E : c(\sp^{-1}(u)) = c(\sp^{-1}(v))\right\}\right|,$$ counting the number of \emph{monochromatic edges} of $G$ under $\sp$, i.e., the edges whose endpoints are occupied by agents of the same color, the \emph{potential function of $\sp$}.
	
		\paragraph{Related Work}
			In the last decade progress has been made to thoroughly understand the involved random process in Schelling's influential model, e.g., \citep{BIK12,BEL14,BIK17}. 
	
			\cite{Zha04,Zha04b} investigated the random Schelling process via evolutionary game theory. In particular, \cite{Zha04b} proposes a model that is similar to our model. There, agents on a toroidal grid graph with degree 4 also have a non-monotone single-peaked utility function. However, in contrast to our model, random noise is added to the utilities and transferable utilities are assumed. Zhang analyzes the Markov process of random swaps and shows that this process converges with high probability to segregated states.  
	
			The investigation of game-theoretic models for residential segregation was initiated by~\cite{CLM18}. There, agents are equipped with a utility function as shown in Figure~\ref{fig:utility} (left) and the finite improvement property and the PoA in terms of the number of content agents is studied. Later, \cite{E+19} significantly extended these results and generalized them to games with more than two agent types. 
	
			\cite{AEGISV21} introduce a simplified model with $\tau=1$. They prove results on the existence of equilibria, in particular that equilbria are not guaranteed to exist on trees, and on the complexity of deciding equilibrium existence. Moreover, they study the PoA in terms of the utilitarian social welfare and in terms of the newly introduced degree of integration, that counts the number of non-segregated agents. For the latter, they give a tight bound of $\frac{n}{2}$ on the PoA and the PoS that is achieved on a tree. In contrast, they derive a constant PoS on paths. \cite{BBLM20} strengthened the PoA results for the simplified swap version w.r.t. the utilitarian social welfare function and investigated the model on almost regular graphs, grids and paths. Additionally, they introduce a variant with locality. The complexity results were extended by~\cite{KBFN21}. \cite{CIT20} and \cite{KKV21a} considered generalized variants.   
	
			Recently, a model was introduced where the agent itself is included in the computation of the fraction of same-type neighbors~\citep{KKV21}. We adopt this modified version also in our model. \cite{BSV21} consider the number of agents with non-zero utility as social welfare function. They prove hardness results for computing the social optimal state and they discuss other stability notations such as Pareto optimality.    
	
			Also related are hedonic games~\citep{DG80,BJ02} where selfish agents form coalitions and the utility of an agent only depends on her coalition. Especially close are hedonic diversity games~\citep{BEI19,BE20}, where agents of different types form coalitions and the utility depends also on the type distribution in a coalition. 
	
			Our main focus is on single-peaked utility functions. This can be understood as single-peaked preferences, which date back to~\cite{Black48} and are a common theme in the Economics and Game Theory literature. In particular, such preferences yield favorable behavior in the above mentioned hedonic diversity games and in the realm of voting and social choice~\citep{Walsh07,YCE13,BSU13,EFS14,BBHH15}.
	
		\paragraph{Our Contribution}
			\begin{table*}[h]
				\centering	
				\scriptsize
				\ra{1.3}
				\begin{adjustbox}{max width=\textwidth}
					\begin{tabular}{@{}lrlrl}
						\toprule
						graph classes & \multicolumn{2}{c}{Equilibrium Existence} & \multicolumn{2}{c}{Finite Improvement Property} \\
						\cmidrule(l{2em}r{1em}){2-3} \cmidrule(l{0.5em}r{0.5em}){4-5} 
						\textbf{arbitrary} & $\times$ (Thm.~\ref{thm:noSE}) & $\Lambda > 1/2$ & $\times$ (Thm.~\ref{thm:noSE}, \ref{thm:theoremirc}) & $\Lambda \geq 1/2$ \\
						& \checkmark (Thm.~\ref{thm:bipartite}) & $\frac{1}{\delta(G)+1}\leq\Lambda\leq 1/2$, $\alpha(G)+1\geq b$ \\
						\textbf{bipartite} & \checkmark (Cor.~\ref{cor:bipartite}) & $\Lambda = 1/2$ \\
						\textbf{$1$-regular} & \checkmark (Thm.~\ref{thm:conv_regular}) & $\Lambda \leq 1/2$ & \checkmark (Thm.~\ref{thm:conv_regular}) & $\Lambda \leq 1/2$ \\ 
						\textbf{$2$-regular} & $\times$ (Thm.~\ref{thm:noSE}) & $\Lambda > 1/2$ & $\times$ (Thm.~\ref{thm:noSE}) & $\Lambda > 1/2$ \\
						\midrule
						& \multicolumn{2}{c}{Price of Anarchy} & \multicolumn{2}{c}{Price of Stability} \\
						\cmidrule(l{2em}r{1em}){2-3} \cmidrule(l{0.5em}r{0.5em}){4-5}   
						\textbf{arbitrary} & $\leq  \min\bigl\{\Delta(G),\frac{n}{b+1}, \frac{(\Delta+1)b}{b+1}\bigr\}$ (Thm.~\ref{poa-gen}) & & $\geq \Omega\left(\sqrt{n\Lambda}\right)$ (Thm.~\ref{posLB_general}) \\
						\textbf{bipartite} & $\geq \frac{n-1}{3}$ (Thm.~\ref{poa-gen}) & $b = 1$ & $2$ (Thm.~\ref{thm:pos_bipartite}, \ref{thm:pos_bipartite_lb}) & $\Lambda = 1/2$ \\
						& $\geq \frac{n}{b+1}$ (Thm.~\ref{poa-gen}) & $b > 1$ & \\
						\textbf{regular} & $\leq \min\bigl\{(\delta+1)/2, n/2b\bigr\}$ (Thm.~\ref{ub-reg}) & $\Lambda<1/\delta$ \\
						& $\geq \frac{\delta+1}{2}-\frac{\delta+1}{4\delta+2}$ (Thm.~\ref{lb-reg}) & $\Lambda \leq 1/2$, $\delta \geq 2$ \\
						\textbf{1-regular} &&&$1$ (Thm.~\ref{thm:PoS_almost_cubic_graphs}), \ref{thm:pos_for_regular_graphs_and_large_b})& $\Lambda \leq 1/2$, $\Delta(G) \leq 3$ or \\
						&&&& $\Lambda \in \left[\frac{1}{\delta(G)+1}, 1/2\right]$, $b \geq \alpha(G)$ \\
						&&& $\min\bigl\{\Delta(G)+1,\text{O}(1/\Lambda)\bigr\}$ (Thm.~\ref{thm:pos_for_regular_graphs_and_small_b}) & $\Lambda \leq 1/2$, $b < \alpha(G)$ \\
						&&& $\text{O}(1)$  (Cor.~\ref{cor:pos_almost_regular}) & $\Lambda \leq 1/2$ \\
						\textbf{ring} & $>2 -\epsilon$ (Thm.~\ref{thm:theorempoarings}) \\
						& $>3/2 -\epsilon$ (Thm.~\ref{thm:theorempoarings}) & $\Lambda < 1/2$ \\	
						\bottomrule
					\end{tabular}
				\end{adjustbox}
				\vspace*{+0.1cm}
				\caption{Result overview. We investigate the existence of equilibria, the finite improvement property, the PoA and the PoS. The ``\checkmark'' symbol denotes that the respective property holds, the ``$\times$'' means the opposite. The respective conditions are stated next to the result. $\epsilon$ is a constant larger than zero. ``$1$-regular'' stands for almost regular graphs. Note, PoS results for almost regular graphs hold for regular graphs as well. For the PoA the stated lower bounds of other graph classes hold for arbitrary graphs as well.}\label{tbl:previous_results}
			\end{table*}
	
			In this work we initiate the study~of game-theoretic models for residential segregation with non-monotone utility functions. This departs from the recent line of work focusing on monotone utility functions and it opens up a promising research direction. Non-monotone utility functions are well-justified by real-world data and hence might be more suitable for modeling real-world segregation. 
	
			We focus on a broad class of non-monotone utility functions well-known in Economics and Algorithmic Game Theory: single-peaked utilities. We emphasize that our results hold for all such functions that satisfy our mild assumptions. See Table~\ref{tbl:previous_results} for a detailed result overview.
	
			For games with integration-oriented agents, i.e., $\Lambda \leq 1/2$, we show that swap equilibria exist on almost regular graphs and that improving response dynamics are guaranteed to converge to such stable states. Moreover, for $\Lambda = \frac{1}{2}$ swap equilibria exist on the broad class of graphs that admit an independent set that is large enough to accommodate the minority type agents. In particular, this implies equilibrium existence and efficient computability on bipartite graphs, including trees, which is in contrast to the non-existence results by~\cite{AEGISV21}. 
	
			Another contrast are our bounds on the PoA. On general graphs we prove a tight bound on the PoA that depends on~$b$, the number of agents of the minority color, and we give a bound of $\Delta(G)$ on all graphs~$G$, that is asymptotically tight on $\delta$-regular graphs. Also for the PoS we get stronger positive results compared to~\citep{AEGISV21}. For $\Lambda = \frac12$ we give a tight PoS bound of $2$ on bipartite graphs and show that the PoS is $1$ on almost regular graphs with maximum degree $3$, or if the size of the maximum independent set of the graph is at most $b$. The latter implies a PoS of $1$ on regular graphs for balanced games, i.e., if there are equally many agents of both colors. Even more general, for constant $\Lambda \leq \frac12$ we prove a constant PoS on almost regular graphs via a sophisticated proof technique that relies on the greedy algorithm for the \textsc{k-Max-Cut} problem.  
	
	\section{Preliminaries}\label{sec:prelimins}
		In this section, we provide some facts and lemmas that will be widely exploited throughout the paper. We start by observing the following fundamental relationship occurring between $f_i(\sp)$ and $f_j(\ssp)$ for two swapping agents~$i$ and~$j$: 
		\begin{equation}\label{eq1}
			\textrm{if } f_i(\sp)=\frac{x}{y},\footnote{For the sake of conciseness, from now on, whenever we write $f_i(\sp)=x/y$ for some agent $i$, we implicitly mean that $x:=|N[{\sp(i)}]\cap C_i(\sp)|$ and $y:=|N[{\sp(i)}]|$. Observe that, under
			this assumption, $f_i(\sp)=3/6$ is different than $f_i(\sp)=1/2$.}\textrm{ then } f_j(\ssp)=\frac{y+1-x-\i}{y}.
		\end{equation}
		Using property (1), we claim the following observation.
		\begin{restatable}{observation}{observationone}\label{ob1}
			If $f_i(\sp)=x/y < 1/2$, then $f_j(\ssp)>1/2$. If $f_i(\sp)=x/y > 1/2$, then $f_j(\ssp)\leq 1/2$ unless $y=2x-1$ and $\i=0$, for which $f_j(\ssp)=f_i(\sp)=x/y > 1/2$.
		\end{restatable}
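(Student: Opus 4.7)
The proof is essentially a direct computation starting from equation~(1), which gives $f_j(\ssp)=\frac{y+1-x-\i}{y}$. The plan is to translate each of the two claimed dichotomies about $f_j(\ssp)$ versus $1/2$ into an arithmetic inequality in the integers $x,y,\i$, using the fact that $\i\in\{0,1\}$ and that $x,y$ are integers with $y\geq 1$.

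First, for the claim ``$x/y<1/2 \Rightarrow f_j(\ssp)>1/2$'', I would rewrite the target $f_j(\ssp)>1/2$ as $2\i<y+1-2x$ after multiplying through by $2y$. Because $x/y<1/2$ and $x,y$ are integers, integrality forces $2x\leq y-1$, so $y+1-2x\geq 2$. Since $\i\leq 1$, we get $2\i\leq 2\leq y+1-2x$, and in fact a strict inequality when $y+1-2x\geq 3$, while in the tight case $y+1-2x=2$ we still have $2\i\leq 2<3$, which also suffices for strict inequality. This finishes the first part.

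Second, for ``$x/y>1/2 \Rightarrow f_j(\ssp)\leq 1/2$ except when $y=2x-1$ and $\i=0$'', I would rewrite the target $f_j(\ssp)\leq 1/2$ as $2\i\geq y+2-2x$. From $x/y>1/2$ and integrality, $2x\geq y+1$, so $y+2-2x\leq 1$. Split into two subcases: if $2x\geq y+2$, then $y+2-2x\leq 0$ and the inequality holds trivially for any $\i\in\{0,1\}$; if $2x=y+1$ (equivalently $y=2x-1$), then $y+2-2x=1$, so the inequality becomes $2\i\geq 1$, which holds iff $\i=1$. Hence the only exception is $y=2x-1$ together with $\i=0$, and in that exceptional case substituting $\i=0$ and $y=2x-1$ into $(y+1-x-\i)/y$ directly yields $x/y$, matching the statement.

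No step is really an obstacle here; the only subtlety worth flagging is the integrality jump from $x/y<1/2$ to $2x\leq y-1$ (and symmetrically from $x/y>1/2$ to $2x\geq y+1$), which is what makes the strict inequality $f_j(\ssp)>1/2$ come out and what isolates the exceptional boundary case $y=2x-1$ in the second part.
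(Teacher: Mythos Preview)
Your approach is the same as the paper's---a direct computation from equation~(1)---and your second part is correct and matches the paper's case split. However, your first part contains an arithmetic slip: multiplying $(y+1-x-\i)/y>1/2$ through by $2y$ yields $2\i<y+2-2x$, not $2\i<y+1-2x$. With the correct right-hand side the argument is clean (from $2x\le y-1$ you get $y+2-2x\ge 3>2\ge 2\i$), whereas your ``tight case $y+1-2x=2$'' paragraph is confused precisely because of this off-by-one. Note also that the paper's proof of the first implication is shorter and does not invoke integrality at all: it simply writes $f_j(\ssp)=1-x/y+(1-\i)/y$ and observes that $1-x/y>1/2$ and $(1-\i)/y\ge 0$.
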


		\begin{proof}
			If $f_i(\sp)=x/y < 1/2$, we have $$f_j(\ssp)=\frac{y+1-x-\i}{y}=1-\frac{x}{y}+\frac{1-\i}{y}>\frac 1 2.$$ If $f_i(\sp)=x/y>1/2$, we distinguish among different cases. If $\i=1$, we get $$f_j(\ssp)=\frac{y+1-x-\i}{y}=1-\frac x y<\frac 1 2,$$ while, if $y<2x-1$, which implies $x\geq\frac{y+2}{2}$, it follows 
			\begin{eqnarray*}
				f_j(\ssp)&=&\frac{y+1-x-\i}{y} \\
				&=&1-\frac x y+\frac{1-\i}{y}\\
				&\leq&\frac 1 2-\frac{\i}{y}\leq\frac{1}{2}. 
			\end{eqnarray*}
			Finally, for $y=2x-1$ and $\i=0$, we get $$f_j(\ssp)=\frac{y+1-x-\i}{y}=\frac x y > \frac 1 2.$$
		\end{proof}
	
		\noindent The following series of lemmas characterizes the conditions under which a profitable swap can take place.
	
		\begin{restatable}{lemma}{lemmaone}\label{lemma0} 
			For any $\Lambda\leq 1/2$, no profitable swaps can occur between agents below the peak.
		\end{restatable}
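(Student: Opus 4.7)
The plan is to assume for contradiction that agents $i$ and $j$, both below the peak in $\sp$, perform a profitable swap, and then show that the first half of Observation~\ref{ob1} combined with property (ii) of $p$ forces at least one of them to end up with no larger utility. Since $f_i(\sp)<\Lambda\leq 1/2$ and $f_j(\sp)<\Lambda\leq 1/2$, Observation~\ref{ob1} applied in both directions tells me that $f_i(\ssp)>1/2\geq\Lambda$ and $f_j(\ssp)>1/2\geq\Lambda$, so after the swap \emph{both} agents sit strictly above the peak. Thus every pre-swap value lives in the strictly increasing region $[0,\Lambda]$ while every post-swap value lives in the decreasing region $[\Lambda,1]$.

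The key idea is to use the symmetry in property (ii) to reduce a cross-peak comparison to a same-side comparison. Define the reflection $g(t):=\frac{\Lambda(1-t)}{1-\Lambda}$, which maps $[\Lambda,1]$ into $[0,\Lambda]$. By (ii), $p(f_i(\ssp))=p(g(f_i(\ssp)))$, and by (i) the value $p(g(f_i(\ssp)))$ can be compared with $p(f_i(\sp))$ using strict monotonicity on $[0,\Lambda]$. Hence the swap is profitable for agent $i$ if and only if $g(f_i(\ssp))>f_i(\sp)$.

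Next I would substitute Equation~(\ref{eq1}), written for agent $i$ (so with $f_j(\sp)=x'/y'$ and $f_i(\ssp)=(y'+1-x'-\i)/y'$), into $g(f_i(\ssp))>f_i(\sp)=x'/y'$. After clearing denominators this becomes an inequality of the form $(2\Lambda-1)x'>\Lambda(1-\i)$. Because $\Lambda\leq 1/2$ the left-hand side is non-positive, while the right-hand side is non-negative, and the two edge cases ($\i=1$ with $\Lambda=1/2$, or $x'=0$) are easily dismissed using $x'\geq 1$ (agent $j$ belongs to $C_j(\sp)$) and the strictness required for a profitable swap. Symmetrically the same inequality fails for agent $j$, so the swap cannot be profitable.

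The main obstacle is purely conceptual rather than computational: one has to see that the single-peaked symmetry (ii) is what allows the comparison between the two sides of the peak to be carried out inside the monotone interval $[0,\Lambda]$. Once this reflection trick is in place, the algebra is short and the sign argument driven by $\Lambda\leq 1/2$ finishes the proof cleanly.
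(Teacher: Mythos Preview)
Your argument contains a substitution slip that breaks the logical structure. You correctly derive that profitability for agent $i$ is equivalent to $g(f_i(\ssp))>f_i(\sp)$, but then you write ``$f_i(\sp)=x'/y'$'' right after defining $x'/y':=f_j(\sp)$. These are values at \emph{different} nodes ($\sp(i)$ versus $\sp(j)$), so there is no reason for them to coincide. The clean inequality $(2\Lambda-1)x'>\Lambda(1-\i)$ that you obtain is in fact equivalent to $g(f_i(\ssp))>f_j(\sp)$, i.e.\ to $\u_i(\ssp)>\u_j(\sp)$; it is \emph{not} the profitability condition $\u_i(\ssp)>\u_i(\sp)$. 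And the direct approach genuinely fails: if, say, $\Lambda=1/2$, $f_i(\sp)=1/100$ and $f_j(\sp)=49/100$ with $\i=0$, then $f_i(\ssp)=52/100$ and $g(f_i(\ssp))=48/100>1/100=f_i(\sp)$, so agent $i$ \emph{does} strictly improve. You cannot rule out profitability agent by agent.

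What your computation actually establishes is the pair of cross-inequalities $\u_i(\ssp)\leq\u_j(\sp)$ and (symmetrically) $\u_j(\ssp)\leq\u_i(\sp)$. To finish, chain these with the assumed strict improvements:
\[
\u_i(\sp)<\u_i(\ssp)\leq\u_j(\sp)<\u_j(\ssp)\leq\u_i(\sp),
\]
which is the desired contradiction. This is exactly the paper's proof; your reflection map $g$ is just a repackaging of the use of property~(ii), and your sign argument $(2\Lambda-1)x'\leq 0\leq\Lambda(1-\i)$ is the same as the paper's observation that $\frac{\Lambda}{1-\Lambda}\leq 1$ and $1-\i\geq 0$.
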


		\begin{proof}
		Fix a strategy profile $\sp$ and two agents~$i$ and $j$, below the peak, who can perform a profitable swap in $\sp$.
		By Observation \ref{ob1}, both $i$ and $j$ are above the peak in $\ssp$. Assume, w.l.o.g., that $f_i(\sp)=x/y<\Lambda$ which yields $$f_j(\ssp)=\frac{y+1-x-\i}{y}>\Lambda.$$ We claim that $\u_j(\ssp)\leq\u_i(\sp)$. By the definition of~$p$, this holds whenever 
		\begin{eqnarray*}
			\frac{x}{y}&\geq&\frac{\Lambda}{1-\Lambda}\left(1-\frac{y+1-x-\i}{y}\right)\\
			&=&\frac{\Lambda}{1-\Lambda}\left(\frac{x}{y}-\frac{1-\i}{y}\right)
		\end{eqnarray*}
		which holds true as $\frac{\Lambda}{1-\Lambda}\leq 1$ and $1-\i\geq 0$.
		
		By applying the same argument, with $i$ and $j$ swapped, we also get $\u_i(\ssp)\leq\u_j(\sp)$. As the swap is profitable, we have $\u_i(\sp)<\u_i(\ssp)$ and $\u_j(\sp)<\u_j(\ssp)$. Putting all these inequalities together, we conclude that $$\u_j(\ssp)\leq\u_i(\sp)<\u_i(\ssp)\leq \u_j(\sp)<\u_j(\ssp),$$ which yields a contradiction.
		\end{proof}
	
		\begin{restatable}{lemma}{lemmatwo}\label{lemma0.1} 
			For any $\Lambda\leq 1/2$, no profitable swaps can occur between adjacent agents at different sides of the peak.
		\end{restatable}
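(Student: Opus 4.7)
The plan is to use the adjacency hypothesis to pin down the post-swap fractions exactly. Without loss of generality assume $f_i(\sp) = x/y < \Lambda$ while $f_j(\sp) > \Lambda$; since a profitable swap requires agents of different colors and $\i = 1$ by adjacency, equation~(\ref{eq1}) yields $f_j(\ssp) = (y-x)/y = 1 - f_i(\sp)$, and by the symmetric application with the roles of $i$ and $j$ swapped, $f_i(\ssp) = 1 - f_j(\sp)$. These two identities are the engine of the whole argument.

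First I would observe that $j$ remains above the peak in $\ssp$: since $\Lambda \leq 1/2$, we have $f_j(\ssp) = 1 - f_i(\sp) > 1 - \Lambda \geq \Lambda$. Hence both $f_j(\sp)$ and $f_j(\ssp)$ sit on the decreasing side of $p$, so I would apply property~(ii), the mirror identity $p(z) = p(\Lambda(1-z)/(1-\Lambda))$ for $z \in [\Lambda,1]$, to both. Combined with the strict monotonicity of $p$ on $[0,\Lambda]$, the profitability condition $\u_j(\ssp) > \u_j(\sp)$ reduces to the clean inequality $f_i(\sp) + f_j(\sp) > 1$.

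Next I would contradict this from the profitability for $i$, splitting on which side of the peak $f_i(\ssp) = 1 - f_j(\sp)$ lies. In the sub-case $f_i(\ssp) \leq \Lambda$ (equivalently $f_j(\sp) \geq 1 - \Lambda$), both $f_i(\sp)$ and $f_i(\ssp)$ are on the increasing side of $p$, so profitability directly gives $f_i(\ssp) > f_i(\sp)$, i.e.\ $f_i(\sp) + f_j(\sp) < 1$. In the sub-case $f_i(\ssp) > \Lambda$ (equivalently $f_j(\sp) < 1 - \Lambda$), one does not even need the mirror identity for $i$: the ambient bounds $f_i(\sp) < \Lambda$ and $f_j(\sp) < 1 - \Lambda$ already force $f_i(\sp) + f_j(\sp) < \Lambda + (1 - \Lambda) = 1$. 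Either way, this contradicts the condition derived from $j$'s profitability.

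The main obstacle is handling the case split for $i$ cleanly — in particular, spotting that in the second sub-case the mirror identity is unnecessary because the ambient bounds already suffice. Once the two post-swap identities are in hand (and one remembers that the simplification $f_j(\ssp) = 1 - f_i(\sp)$ depends crucially on $\i = 1$, i.e.\ precisely on the adjacency hypothesis that distinguishes this lemma from the previous one), the remainder is routine bookkeeping with the single-peaked property of $p$.
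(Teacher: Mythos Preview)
Your proof is correct and follows essentially the same route as the paper's. The paper's version is slightly more streamlined: once $j$'s improvement yields $f_i(\sp)+f_j(\sp)>1$, this is exactly $f_i(\ssp)=1-f_j(\sp)<f_i(\sp)<\Lambda$, so your Case~2 is already vacuous and the paper proceeds directly to the contradiction on $i$'s side without a case split.
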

		\begin{proof}
			Assume towards a contradiction, that $i$ and $j$ can perform a profitable swap in $\sp$, and, w.l.o.g., that $f_i({\bm\sigma})=x/y<\Lambda$ and $f_j({\bm\sigma})=x'/y'>\Lambda$.
			By Observation~\ref{ob1}, $j$ ends up above the peak in $\ssp$. As $j$ improves after the swap, we have $${\sf U}_j(\ssp)=p(1-x/y)>{\sf U}_j(\sp)=p(x'/y')$$ which, given that $1-x/y>\Lambda$ and $x'/y'>\Lambda$, yields $$1-x/y<x'/y'.$$ This implies that $$f_i(\ssp)=1-x'/y'<1-1+x/y=x/y=f_i(\sp)$$ which, given that $f_i(\sp)<\Lambda$, contradicts the fact that $i$ improves after the swap.
		\end{proof}
	
		\noindent In the following we present a technical result that will be helpful in proving Lemma~\ref{almost}.
		
		\begin{lemma}\label{lemma4} 
			For any $\Lambda\leq 1/2$, any profitable swap occurring between two agents $i$ and $j$ in a strategy profile $\sp$, with $f_i(\sp)< \Lambda$ and $f_j(\sp)> \Lambda$, requires $\delta(\sp(i))>\delta(\sp(j))$.
		\end{lemma}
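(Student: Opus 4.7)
The plan is to reduce both agents' profitability conditions to explicit inequalities involving $f_i(\sp)$, $f_j(\sp)$, $y:=\delta(\sp(i))+1$, and $y':=\delta(\sp(j))+1$, and then argue those inequalities force $y>y'$. First, I would apply Lemma~\ref{lemma0.1} to rule out that $i$ and $j$ are adjacent in $\sp$, so $\i=0$; equation~\eqref{eq1} then yields the clean formulas $f_j(\ssp)=1-f_i(\sp)+\tfrac{1}{y}$ and, symmetrically, $f_i(\ssp)=1-f_j(\sp)+\tfrac{1}{y'}$. Observation~\ref{ob1} applied to $i$ (for which $f_i(\sp)<\Lambda\le\tfrac{1}{2}$) places $f_j(\ssp)$ strictly above the peak.

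The next step exploits that both $f_j(\sp)$ and $f_j(\ssp)$ lie in $(\Lambda,1]$: by property (ii) combined with strict monotonicity of $p$ on $[0,\Lambda]$, $p$ is strictly decreasing on $[\Lambda,1]$, so $j$'s profitability reduces to $f_j(\ssp)<f_j(\sp)$. Rearranging yields the workhorse inequality
\[
f_i(\sp)+f_j(\sp) \;>\; 1+\tfrac{1}{y}. \qquad (\star)
\]

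I would then branch on where $f_i(\ssp)$ lands. If $f_i(\ssp)\le\Lambda$, strict monotonicity of $p$ on $[0,\Lambda]$ turns $i$'s profitability into $f_i(\ssp)>f_i(\sp)$, i.e., $\tfrac{1}{y'}>f_i(\sp)+f_j(\sp)-1$; combining with $(\star)$ immediately gives $\tfrac{1}{y'}>\tfrac{1}{y}$. If instead $f_i(\ssp)>\Lambda$, the defining inequality rewrites as $f_j(\sp)<1-\Lambda+\tfrac{1}{y'}$; plugging this upper bound into $(\star)$ produces $\tfrac{1}{y'}>\tfrac{1}{y}+(\Lambda-f_i(\sp))$, and since $f_i(\sp)<\Lambda$ the extra slack is strictly positive, so again $\tfrac{1}{y'}>\tfrac{1}{y}$. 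In both branches $y>y'$, which is exactly $\delta(\sp(i))>\delta(\sp(j))$.

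The main obstacle I anticipate is the second branch, in which $i$ crosses the peak so that both agents end above $\Lambda$ and the standard ``$p$ is increasing'' argument for $i$ does not apply directly. The resolution is that $i$'s profitability is not actually needed in that case: merely landing above the peak, together with $(\star)$, already exposes the positive gap $\Lambda-f_i(\sp)$ that separates $y$ from $y'$.
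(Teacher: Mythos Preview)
Your proof is correct and uses the same ingredients as the paper's: Lemma~\ref{lemma0.1} to force $\i=0$, equation~\eqref{eq1} for the post-swap fractions, Observation~\ref{ob1} to place $j$ above the peak, and then the monotonicity of $p$ on each side of the peak to turn profitability into inequalities on the fractions. The only structural difference is that the paper argues by contradiction: assuming $y\le y'$, it combines $j$'s improvement with $1/y'\le 1/y$ to deduce $f_i(\ssp)<f_i(\sp)<\Lambda$ directly, which immediately contradicts $i$'s improvement and obviates your case split (under the contradiction hypothesis your branch~2 simply cannot occur). Your direct argument is equally valid; the paper's packaging is just a touch shorter.
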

		\begin{proof}
			Assume towards a contradiction, that $$\delta(\sp(i))\leq\delta(\sp(j))$$ and $i$ and $j$ can perform a profitable swap in $\sp$, and, w.l.o.g., that $f_i({\bm\sigma})=x/y<\Lambda$ and $f_j({\bm\sigma})=x'/y'>\Lambda$. 
			
			By Lemma \ref{lemma0.1}, it must be $\i=0$.
			By Observation~\ref{ob1}, $j$ ends up above the peak in $\ssp$. As $j$ improves after the swap, we have $${\sf U}_j(\ssp)=p(1-x/y+1/y)>{\sf U}_j(\sp)=p(x'/y')$$ which, given that $1-x/y+1/y>\Lambda$ and $x'/y'>\Lambda$, yields $$1-x/y+1/y<x'/y'.$$ This implies that 
			\begin{eqnarray*}
				f_i(\ssp)&=&1-x'/y'+1/y'\\
				&<&1-1+x/y-1/y+1/y'\\
				&=&x/y+1/y'-1/y.
			\end{eqnarray*}
			Now, as the hypothesis $\delta(\sp(i))\leq\delta(\sp(j))$ can be restated as $y\leq y'$, we derive $$f_i(\ssp)<x/y+1/y'-1/y\leq x/y=f_i(\sp),$$ which, given that $f_i(\sp)<\Lambda$, contradicts the fact that $i$ improves after the swap.
		\end{proof}
	
		\begin{restatable}{lemma}{corollaryone}\label{almost}
			For any $\Lambda\leq 1/2$, no profitable swaps can occur between agents at different sides of the peak in games on almost regular graphs.
		\end{restatable}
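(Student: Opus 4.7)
The plan is to derive a contradiction from the supposed existence of a profitable swap between two agents $i,j$ on opposite sides of the peak, by coupling Lemma~\ref{lemma4} with the almost-regularity assumption and then closing the remaining gap via an integrality-style argument. Write, w.l.o.g., $f_i(\sp)=x/y<\Lambda$ and $f_j(\sp)=x'/y'>\Lambda$. Lemma~\ref{lemma0.1} yields $\i=0$, and Lemma~\ref{lemma4} yields $\delta(\sp(i))>\delta(\sp(j))$, i.e., $y>y'$. Since $\Delta(G)-\delta(G)\leq 1$, we in fact have $y=y'+1$.

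The next step is to recycle the calculation in the proof of Lemma~\ref{lemma4}. By Observation~\ref{ob1}, $j$ lands above the peak after the swap, so the strict monotonic decrease of $p$ on $[\Lambda,1]$ combined with the improvement of $j$ yields $f_i(\ssp)<x/y+\tfrac{1}{y'}-\tfrac{1}{y}=\tfrac{x}{y}+\tfrac{1}{y(y-1)}$. The crucial structural fact is that $f_i(\ssp)=k/(y-1)$ for some integer $k$, so this bound carves out a very narrow arithmetic window on $k$.

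I would then split on the position of $f_i(\ssp)$ relative to $\Lambda$. If $f_i(\ssp)\leq\Lambda$, the improvement of $i$ forces $f_i(\ssp)>x/y$ by strict monotonicity of $p$ on $[0,\Lambda]$; multiplying the chain $x/y<k/(y-1)<x/y+1/(y(y-1))$ through by $y(y-1)$ wedges the integer $ky$ strictly between the integers $x(y-1)$ and $x(y-1)+1$, which is impossible. If instead $f_i(\ssp)>\Lambda$, the same upper bound forces $k\leq x-1$ (using $x/y\in(0,1)$), so $\Lambda(y-1)<k\leq x-1$; combining with $x<\Lambda y$ then yields $\Lambda(y-1)<\Lambda y-1$, i.e., $\Lambda>1$, contradicting $\Lambda\leq 1/2$.

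The main obstacle is that the generic slack term $1/y'-1/y$ inherited from Lemma~\ref{lemma4} can be large when the degrees of $\sp(i)$ and $\sp(j)$ differ by more than one, in which case neither sub-case closes. The almost-regular hypothesis squeezes this slack down to $1/(y(y-1))$, and this is precisely the regime in which the integrality argument of the first case and the ``$\Lambda>1$'' contradiction of the second case jointly eliminate every candidate value of $f_i(\ssp)$.
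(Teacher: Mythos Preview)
Your proof is correct and follows the same high-level structure as the paper's: invoke Lemma~\ref{lemma0.1} and Lemma~\ref{lemma4} to reduce to the non-adjacent case with $\delta(\sp(i))=\delta(\sp(j))+1$, then split on whether $f_i(\ssp)$ lands below or above the peak and derive a contradiction in each branch via integrality.

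The execution differs in two ways. First, you use Lemma~\ref{lemma4} to pin down $y=y'+1$ immediately, whereas the paper carries a parameter $t\in\{0,1\}$ throughout (the case $t=0$ being redundant given Lemma~\ref{lemma4}). Second, and more interestingly, your treatment of the case $f_i(\ssp)>\Lambda$ is genuinely leaner than the paper's. The paper's argument there invokes property~(ii) of $p$ (the rescaling symmetry $p(x)=p(\Lambda(1-x)/(1-\Lambda))$) to compare $\u_i(\ssp)$ with $\u_i(\sp)$ across the peak, and then combines several inequalities to reach $x't\geq yt+1$. You never touch $i$'s improvement condition or property~(ii) in that case: the upper bound on $f_i(\ssp)$ inherited from $j$'s improvement, together with $f_i(\ssp)>\Lambda$ and $f_i(\sp)<\Lambda$, already forces $\Lambda>1$ by pure arithmetic. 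This makes your argument shorter and shows that, in this lemma, the symmetry assumption on $p$ is not actually needed---strict monotonicity on each side of the peak suffices.
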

	
		\begin{proof}
			Fix a strategy profile $\sp$ and two agents~$i$ and~$j$ at different sides of the peak admitting a profitable swap in~$\sp$. As the game is played on an almost regular graph, by Lemma~\ref{lemma4}, it must be $f_i(\sp)=x/(y+t)<\Lambda$, $f_j(\sp)=x'/y>\Lambda$ and $t\in\{0,1\}$. Moreover, by Lemma \ref{lemma0.1}, we have $\i=0$.
		
			Since $j$ improves after the swap, we have $${\sf U}_j(\ssp)=p((y+t-x+1)/(y+t))>{\sf U}_j(\sp)=p(x'/y)$$ which, given that $(y+t-x+1)/(y+t)>\Lambda$ and $x'/y>\Lambda$, yields $$x'/y>(y+t-x+1)/(y+t).$$ We derive $x'(y+t)>y(y+t-x+1)$, which, given that both sides of the inequality are integers, yields
			\begin{equation}\label{equa1}
				x'(y+t)\geq y(y+t-x+1)+1.
			\end{equation}
			Since $i$ improves after the swap, we have $${\sf U}_i(\ssp)=p((y-x'+1)/y)>{\sf U}_i(\sp)=p(x/(y+t)).$$ We now distinguish between two possible cases: 
			
			{\em (i)} $(y-x'+1)/y\leq\Lambda$ and 
			
			{\em (ii)} $(y-x'+1)/y>\Lambda$.
		
			If case {\em (i)} occurs, it must be $$x/(y+t)<(y-x'+1)/y$$ which is equivalent to $$x'(y+t)<(y+t)(y+1)-xy.$$ Together with inequality (\ref{equa1}), this yields $$y(y+t-x+1)+1<(y+t)(y+1)-xy$$ which is satisfied if and only if $t>1$. Given that $t\in\{0,1\}$, we derive a contradiction.
			
			If case {\em (ii)} occurs, from $(y-x'+1)/y>\Lambda$, we get
			\begin{equation}\label{equa2}
				x'<y(1-\Lambda)+1.
			\end{equation} 
			Since ${\sf U}_i(\ssp)=p((y-x'+1)/y)>{\sf U}_i(\sp)=p(x/(y+t))$, $x/(y+t)<\Lambda$ and $(y-x'+1)/y>\Lambda$, by the definition of~$p$, we derive $$x/(y+t)<\frac{\Lambda}{1-\Lambda}(1-(y-x'+1)/y)=\frac{\Lambda}{1-\Lambda}(x'-1)/y,$$ by which we get $x<\frac{\Lambda(x'-1)(y+t)}{(1-\Lambda)y}$. Together with inequality~(\ref{equa2}), this yields 
			\begin{equation}\label{equa3}
				x<\Lambda(y+t).
			\end{equation}
			By summing up inequalities (\ref{equa2}) and (\ref{equa3}), we get $$x+x'<y+1+\Lambda t.$$ As $x$, $x'$, $y$ are integers, $t\in\{0,1\}$ and $\Lambda\in [0,1/2]$, we derive 
			\begin{equation}\label{equa4}
				x+x'\leq y+1.
			\end{equation}
			Starting from inequality (\ref{equa1}) and then using inequality (\ref{equa4}), we derive
			\begin{eqnarray*}
				x't & \geq & y(y+t-x-x'+1)+1\\
				& \geq & yt+1,
			\end{eqnarray*}
			which, given that $x'\leq y$, is never satisfied when $t\in\{0,1\}$. Thus, also in this case, we derive a contradiction. 
		\end{proof}
		
	\section{Existence of Equilibria}\label{sec-existence}
	
		In this section, we provide existential results for games played on some specific graph topologies. We start by showing that games on almost regular graphs enjoy the FIP property and converge to a SE in at most $m$ swaps in any game in which the peak does not exceed $1/2$. This result does not hold when the peak exceeds $1/2$, as we prove the existence of a game played on a $2$-regular graph (i.e., a ring) admitting no SE.
		
		\begin{theorem}\label{thm:conv_regular}
			For any $\Lambda\leq 1/2$, fix a game $(G,b,\Lambda)$ on an almost regular graph $G$ and a strategy profile~$\sp$. Any sequence of profitable swaps starting from $\sp$ ends in a SE after at most~$m$ swaps.
		\end{theorem}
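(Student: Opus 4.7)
The plan is to show that the monochromatic-edge count $\Phi(\sp)$, already introduced in the model section, serves as an ordinal potential here: since $\Phi(\sp)\in\{0,1,\dots,m\}$ for every profile, once every profitable swap is shown to decrease $\Phi$ by at least one, the bound of at most $m$ swaps is immediate. I first restrict the shape of a profitable swap. Lemma~\ref{lemma0} rules out profitable swaps between two below-peak agents, Lemma~\ref{almost} rules out profitable swaps across the peak on almost regular graphs, and an agent with $f_i(\sp)=\Lambda$ already has maximal utility $p(\Lambda)=1$ and cannot improve. Hence every profitable swap $\sp\to\ssp$ in this setting involves agents $i,j$ with $f_i(\sp),f_j(\sp)>\Lambda$. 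Properties (i) and (ii) together force $p$ to be strictly decreasing on $[\Lambda,1]$ (it is the composition of the strictly increasing $p|_{[0,\Lambda]}$ with the order-reversing map $x\mapsto\Lambda(1-x)/(1-\Lambda)$), so the strict profitability of the swap together with $f_i(\sp),f_j(\sp)>\Lambda$ yields $f_i(\ssp)<f_i(\sp)$ and $f_j(\ssp)<f_j(\sp)$ (either the new $f$-value lies in $[0,\Lambda]$ and is immediately smaller than $\Lambda$, or it lies in $(\Lambda,1]$ and one invokes strict monotonicity of $p$ there).

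Writing $D_i=\delta(\sp(i))+1$ and $D_j=\delta(\sp(j))+1$, and letting $b_i,r_i$ (resp.\ $b_j,r_j$) denote the number of blue and red neighbors of $\sp(i)$ (resp.\ $\sp(j)$), a direct count of the edges around $\sp(i)$ and $\sp(j)$ that change status under the swap yields
\[
\Phi(\ssp)-\Phi(\sp)\;=\;r_i+b_j-b_i-r_j-2\cdot\i.
\]
Via the definition of $f$ together with~\eqref{eq1}, the two strict inequalities $f_i(\ssp)<f_i(\sp)$ and $f_j(\ssp)<f_j(\sp)$ rewrite as
\[
(b_j-\i+1)D_i<(b_i+1)D_j\quad\text{and}\quad(r_i-\i+1)D_j<(r_j+1)D_i.
\]

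The hard part is turning these two strict rational inequalities into an integer bound on $\Phi(\ssp)-\Phi(\sp)$ while only assuming $|D_i-D_j|\leq 1$: in the regular subcase $D_i=D_j$ one gets directly the pointwise bounds $b_j\leq b_i+\i-1$ and $r_i\leq r_j+\i-1$, whence already $\Phi(\ssp)-\Phi(\sp)\leq -2$, but when $|D_i-D_j|=1$ these pointwise bounds lose an $O(1/D)$ slack that no naive rounding can absorb. The remedy is to first sum the two displayed inequalities, obtaining the aggregate bound $2D_iD_j<(2b_i+1+\i)D_j+(2r_j+1+\i)D_i$, and then combine it with $|D_i-D_j|\leq 1$ and the trivial bounds $b_i\leq D_i-1$, $r_j\leq D_j-1$. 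A short calculation in both subcases shows that this forces $\min\{D_i,D_j\}-b_i-r_j-\i\leq 0$, which substituted back into the expression for $\Phi(\ssp)-\Phi(\sp)$ yields $\Phi(\ssp)-\Phi(\sp)\leq -1$ in every configuration of $D_j-D_i\in\{-1,0,1\}$ and $\i\in\{0,1\}$, completing the proof.
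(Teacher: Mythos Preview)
Your argument is correct and follows the same overall strategy as the paper: use Lemmas~\ref{lemma0} and~\ref{almost} to reduce to the above--above case, then show $\Phi$ drops by at least one. The only difference is in how the integer bound is extracted. You sum the two strict inequalities coming from $f_i(\ssp)<f_i(\sp)$ and $f_j(\ssp)<f_j(\sp)$ and then wrestle with the $|D_i-D_j|=1$ slack; the paper instead simply labels so that $\delta(\sp(j))\geq\delta(\sp(i))$ (writing $D_i=y$, $D_j=y+t$ with $t\in\{0,1\}$) and uses \emph{only} the inequality $f_j(\ssp)<f_j(\sp)$, which via~\eqref{eq1} immediately gives $x'\geq y-x+2-\i$ and hence $\Phi(\sp)-\Phi(\ssp)\geq 2-t\geq 1$. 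So your summing trick works, but it is doing more than necessary: once you orient the labeling so that the $t$-term is nonnegative, a single inequality and one rounding step suffice.
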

		\begin{proof}
			We show that, after a profitable swap, $\Phi$ decreases by at least $1$. Consider a profitable swap performed by agents~$i$ and $j$ such that $f_i(\sp)=x/y$ and $f_j(\sp)=x'/(y+t)$, with $t\in\{0,1\}$ since $G$ is almost regular. 
			By Lemmas \ref{lemma0} and \ref{almost}, we have that both $i$ and $j$ are above the peak, i.e., $x/y>\Lambda$ and $x'/(y+t)>\Lambda$. 
			Thus, a necessary condition for the swap to be profitable is that $f_i(\ssp)<f_i(\sp)$ and $f_j(\ssp)<f_j(\sp)$. By Observation \ref{ob1}, the latter yields $$x'/(y+t)>1-x/y+(1-\i)/y,$$ which gives 
			\begin{eqnarray*}
				x'&>&y-x+1-\i+t(1-x/y+(1-\i)/y) \\
				&\geq& y-x+1-\i.
			\end{eqnarray*}
			Since $x$, $x'$, $y$ and $\i$ are integers, we derive $$x'\geq y-x+2-\i.$$ As it holds that $\Phi(\sp)-\Phi(\ssp)$ equals 
			\begin{eqnarray*}
				&~& x-1+x'-1-\\
				&~& \left(y-x-\i+y+t-x'-\i\right)\\
				&=& 2(x+x'-1+\i)-2y-t,
			\end{eqnarray*}
			we get $\Phi(\sp)-\Phi(\ssp) \geq 1.$
		\end{proof}
		
		\begin{restatable}{theorem}{theoremnoSE} \label{thm:noSE}
			For any $\Lambda>1/2$, there exists a game played on a $2$-regular graph admitting no SE.
		\end{restatable}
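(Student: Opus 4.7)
My plan is to exhibit a single ring on which no SE exists for every $\Lambda>1/2$. Since any connected $2$-regular graph is a cycle, I take $G=C_6$ with $b=r=3$ and argue directly that every feasible strategy profile admits a profitable swap.

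The first ingredient is that on any $2$-regular graph $|N[\sp(i)]|=3$, so $f_i(\sp)\in\{1/3,2/3,1\}$, and for every $\Lambda>1/2$ I will verify the strict utility ordering $p(2/3)>p(1/3)>p(1)=0$. The identity $p(1)=p(0)=0$ is immediate from property (ii) applied at $x=1$, and $p(1/3)>0$ follows from strict monotonicity of $p$ on $[0,\Lambda]$ together with $p(0)=0$. For the key comparison $p(2/3)>p(1/3)$ I split into two subcases: if $\Lambda\geq 2/3$ both $1/3$ and $2/3$ lie in $[0,\Lambda]$ and strict monotonicity suffices; if $\Lambda\in(1/2,2/3)$ I apply property (ii) to rewrite $p(2/3)=p\bigl(\Lambda/(3(1-\Lambda))\bigr)$ and verify that $\Lambda/(3(1-\Lambda))\in(1/3,\Lambda)$ (the two inequalities are equivalent to $\Lambda>1/2$ and $\Lambda<2/3$ respectively), after which strict monotonicity on $[0,\Lambda]$ closes the case.

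The second ingredient is an enumeration of the placements of three blues on $C_6$ modulo the dihedral group. A short Burnside count leaves exactly three orbits, represented by $BBBRRR$, $BBRBRR$, and $BRBRBR$. For each I exhibit a profitable swap. In $BBBRRR$ the two interior agents of the monochromatic blocks both sit at $f=1$, and swapping them moves each to $f=1/3$, i.e.\ from utility $0$ to $p(1/3)>0$. In $BBRBRR$ the two adjacent agents of opposite colour that both sit at $f=1/3$ swap into the segregated pattern and therefore both jump to $f=2/3$. In $BRBRBR$ every agent is at $f=1/3$; swapping any adjacent pair (necessarily of different colours) produces two length-$2$ monochromatic blocks containing the two swapped agents, pushing each of them to $f=2/3$. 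By the ordering established above, all three of these swaps are strictly profitable for both participants, so $C_6$ with $b=3$ has no SE.

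The main subtlety is the utility comparison in the narrow regime $\Lambda\in(1/2,2/3)$, where the admissible values $1/3$ and $2/3$ flank the peak from below and above respectively and one must unfold property (ii) quantitatively to compare them; once this step is in place, the remainder of the argument is a finite check over three configurations.
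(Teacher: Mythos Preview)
Your proof is correct and follows essentially the same approach as the paper: both use the cycle $C_6$ with $b=r=3$, split into the all-blues-contiguous case (where two segregated agents of opposite colour swap) versus the remaining cases (where two adjacent opposite-colour agents at $f=1/3$ swap to $f=2/3$). Your version is slightly more detailed, making the three dihedral orbits explicit and giving a careful derivation of $p(2/3)>p(1/3)$ from properties (i)--(ii), whereas the paper simply asserts this inequality.
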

		
		\begin{proof}
			\noindent Consider an instance of a game played on a ring with~$6$ nodes, where $b=r=3$. Only the following two complementary cases may occur:
			
			Either, the blue agents occupy nodes that induce a path of length 2. In this case, there are two segregated agents of different colors, both with utility $0$. As $p(0)=0$ and $p(x)> 0$ for $x \in (0,1)$, the two agents swap their positions.
			
			Or, there are two neighboring agents $i$ and $j$ of different colors being below the peak. In this case, as $p(1/3)<p(2/3)$, both $i$ and $j$ prefer to swap their positions.
		\end{proof}
		
		\noindent A fundamental question is whether a SE always exists in games with tolerant agents, i.e., for $\Lambda\leq 1/2$. Next result shows that Theorem \ref{thm:conv_regular} cannot be generalized to all graphs.
		
		\begin{restatable}{theorem}{theoremirc}
			\label{thm:theoremirc}
			There cannot exist an ordinal potential function in games on arbitrary graphs for $\Lambda=1/2$.
		\end{restatable}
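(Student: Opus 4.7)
The plan is to exhibit an explicit improvement cycle, i.e., a finite sequence of profitable swaps $\sp^{(0)}\to\sp^{(1)}\to\cdots\to\sp^{(k)}=\sp^{(0)}$, in some game on an arbitrary graph with $\Lambda=1/2$. Since the existence of an ordinal potential function is equivalent to the finite improvement property (\cite{MS96}, as already invoked in the paper), exhibiting such a cycle suffices.

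Before constructing the example, I would use the preceding lemmas to narrow the search. By Lemma~\ref{lemma0}, no profitable swap can involve two below-peak agents; by Lemma~\ref{lemma0.1}, no profitable swap can involve adjacent agents on different sides of the peak; and by Lemma~\ref{almost}, no swap on an almost regular graph can connect the two sides of the peak. Hence the supporting graph must have degree spread at least $2$, and every edge in the cycle must either involve two above-peak agents, or be a non-edge between agents on different sides of the peak such that the below-peak agent occupies a strictly higher-degree node than the above-peak agent (Lemma~\ref{lemma4}). This points to a construction with a small ``high-degree'' hub attached to one or more ``low-degree'' pendants, so that an above-peak agent sitting on the hub can gain by swapping with a below-peak agent on a pendant, and conversely.

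Concretely, I would build a smallest-possible graph $G$ (I expect a handful of nodes with one vertex of degree $\geq 3$ and some leaves or degree-$2$ vertices), fix $b$ and $r$, and search for a two- or three-swap cycle. Starting from a profile $\sp^{(0)}$ in which one agent on the hub is strictly above the peak (i.e., $f>1/2$) while a non-adjacent agent on a low-degree vertex is strictly below, I would swap them to obtain $\sp^{(1)}$, verifying via (\ref{eq1}) that both agents strictly increase their utilities (using that $p$ is symmetric around $\Lambda=1/2$ up to the rescaling and strictly increases on $[0,1/2]$). I would then identify a second profitable swap available in $\sp^{(1)}$ — typically between a different pair whose roles above/below the peak have been flipped by the first swap — and iterate until the initial profile is recovered. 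To close the cycle with the fewest swaps, it is convenient to arrange the graph so that each swap only toggles the neighborhood composition of the two hubs/pendants involved, leaving the rest of the configuration intact.

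The main obstacle is designing the graph and initial coloring so that the ``below'' and ``above'' agents can interchange roles cleanly after each swap and the cycle actually closes; once the example is fixed, verifying each swap is profitable reduces to direct computations of $f_i(\sp)$ and $f_j(\sp)$ and invoking property (ii) of $p$ to compare utilities at symmetric points around $1/2$. Because the claim must hold for every admissible $p$, I would make sure the verifications only use the strict monotonicity of $p$ on $[0,1/2]$ and the symmetry $p(x)=p(1-x)$ at $\Lambda=1/2$, rather than any specific functional form, so that the constructed cycle is an improvement cycle uniformly in the choice of $p$.
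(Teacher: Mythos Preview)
Your approach is exactly the paper's: exhibit an explicit improving response cycle, which is well known to preclude an ordinal potential. Your use of Lemmas~\ref{lemma0}, \ref{lemma0.1}, \ref{lemma4} and \ref{almost} to constrain the search is also sound and matches the structural features of the paper's example (non-almost-regular graph, cross-peak swaps only between non-adjacent agents with the below-peak agent on the higher-degree node).

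The gap is that you never actually construct the cycle; you only describe how you would look for one. That is the entire content of the proof, and your proposal leaves it undone. Moreover, your expectation of ``a handful of nodes'' and a ``two- or three-swap cycle'' is almost certainly too optimistic: the paper's construction uses eight swaps and a graph with vertices of degree up to~$12$ (closed-neighborhood size~$13$), precisely because the constraints you correctly identified make short cycles on small graphs very hard to close. In particular, a swap between two above-peak agents on a non-almost-regular graph need not decrease $\Phi$, but each such swap still tends to ``move in one direction'' in a way that makes returning to the start delicate; the paper's cycle interleaves several cross-peak swaps (exploiting large degree gaps) with above-peak swaps to undo the drift. If you attempt this, plan for a substantially larger gadget than you suggest, and verify each step solely via strict monotonicity on $[0,1/2]$ and the identity $p(x)=p(1-x)$, as you rightly note.
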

	
		\begin{proof}
			We prove the statement by providing an improving response cycle where in every step a profitable swap is possible. The construction and arising strategy profiles are shown in Figure~\ref{IRC}. For the sake of simplicity we assume $p(x) = x$ for $x\in [0,1/2]$. However, our result is independent of the choice of $p$. Remember that, since $\Lambda = \frac12$, $p(x) = p(1-x)$ for $x\in [1/2,1]$.
			
			In the initial placement, cf.~Figure~\ref{initialplacement}, agents $a$ and $b$ can swap. By swapping their positions, agent $a$ can increase her utility from $1-\frac57 = \frac27$ to $1-\frac{9}{13} = \frac{4}{13}$ and agent $b$ increases her utility from $\frac{5}{13}$ to $\frac37$. 
			
			Next, agents $c$ and $d$ can swap, cf.~Figure~\ref{firstswap}. Swapping with agent $d$ increases agent $c$'s utility from $1-\frac47 = \frac37$ to~$\frac{6}{13}$, and agent $d$ can increase her utility from $1-\frac{8}{13} = \frac{5}{13}$ to~$1-\frac47 = \frac37$. 
			
			After this, cf.~Figure~\ref{secondswap} agents~$e$ and $f$, and agents $g$ and $h$, respectively, have the opportunity to swap and increase their utility. Agent $e$ increases her utility from $\frac37$ to $\frac12$ while agent~$f$ increases her utility from $1-\frac22 = 0$ to $1-\frac57=\frac27$. Agent $g$ improves her utility from $1-\frac22 = 0$ to $1-\frac35=\frac25$ and agent~$h$ increases her utility from $1-\frac35=\frac25$ to $\frac12$. 
			
			Next, cf.~Figure~\ref{thirdswap}, swaps between the agents $i$ and $j$, and~$k$ and $l$, respectively, are possible. Agent~$i$ can increase her utility from $1-\frac22 = 0$ to $1-\frac45=\frac15$ and~$j$ can increase her utility from $\frac25$ to $\frac12$. Agent $k$ improves her utility from $1-\frac22 = 0$ to $1-\frac35 = \frac25$ and agent $l$ increases her utility from $1-\frac35 = \frac25$ to $\frac12$. 
			
			In the next step, agents $e$ and~$m$ can swap. Agent $e$ increases her utility from $1-\frac22=0$ to $1-\frac35 = \frac25$ and~$m$ increases her utility from $1-\frac35 = \frac25$ to~$\frac12$. 
			
			In a final step agents $n$ and $o$ swap. Swapping with agent $o$ increases agent $n$'s utility from $1-\frac22 = 0$ to $1-\frac{8}{13} = \frac{5}{13}$ and agent $o$ increases her utility from $\frac{6}{13}$ to $\frac12$. The now reached placement, cf.~Figure~\ref{lastswap} is equivalent to the initial placement, cf.~Figure~\ref{initialplacement}. 
			
			However, note that although convergence is not guaranteed there still exists a stable state, cf. Figure~\ref{NE}.	
			\begin{figure}
				\begin{subfigure}{0.4\textwidth}
					\centering
					\includegraphics[scale=0.7]{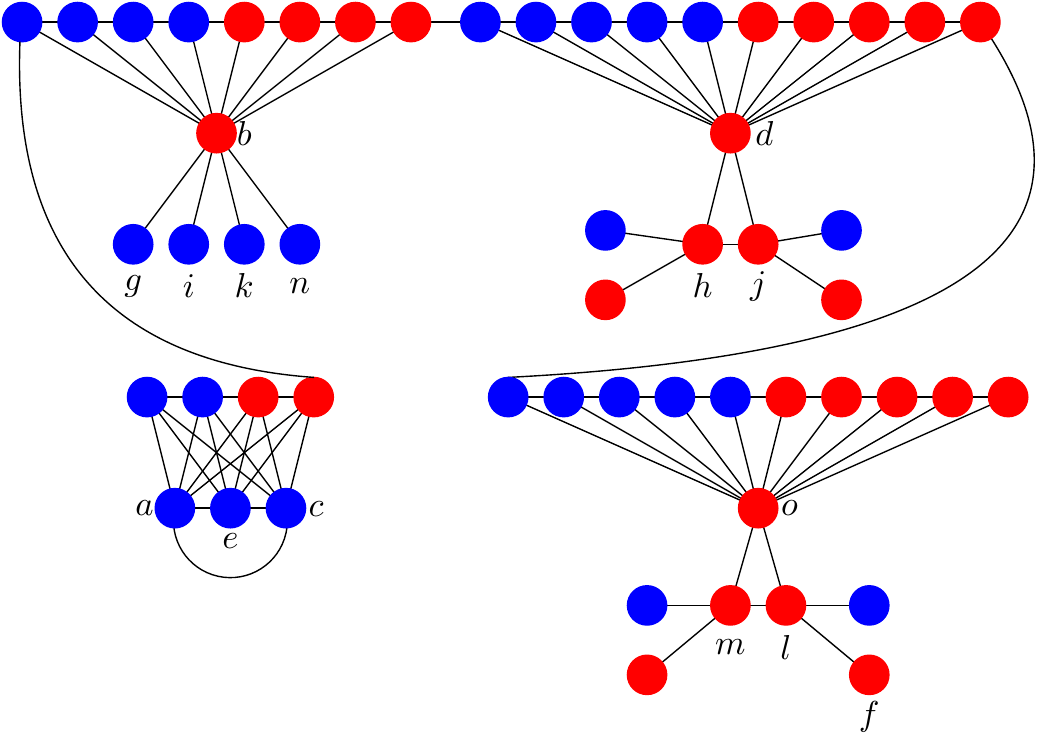}
					\caption{Initial placement}\label{initialplacement}
				\end{subfigure}	
				\begin{subfigure}{0.4\textwidth}
					\centering
					\includegraphics[scale=0.7]{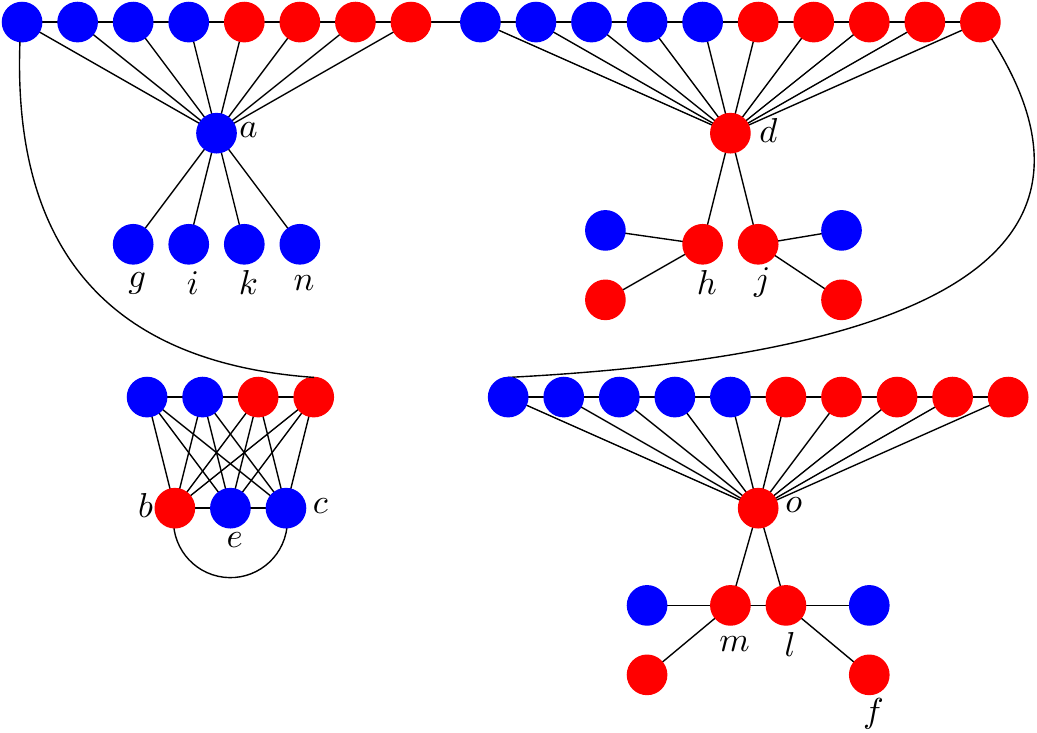}
					\caption{Placement after the first swap of $a$ and $b$}\label{firstswap}
				\end{subfigure}	
				\begin{subfigure}{0.4\textwidth}
					\centering
					\includegraphics[scale=0.7]{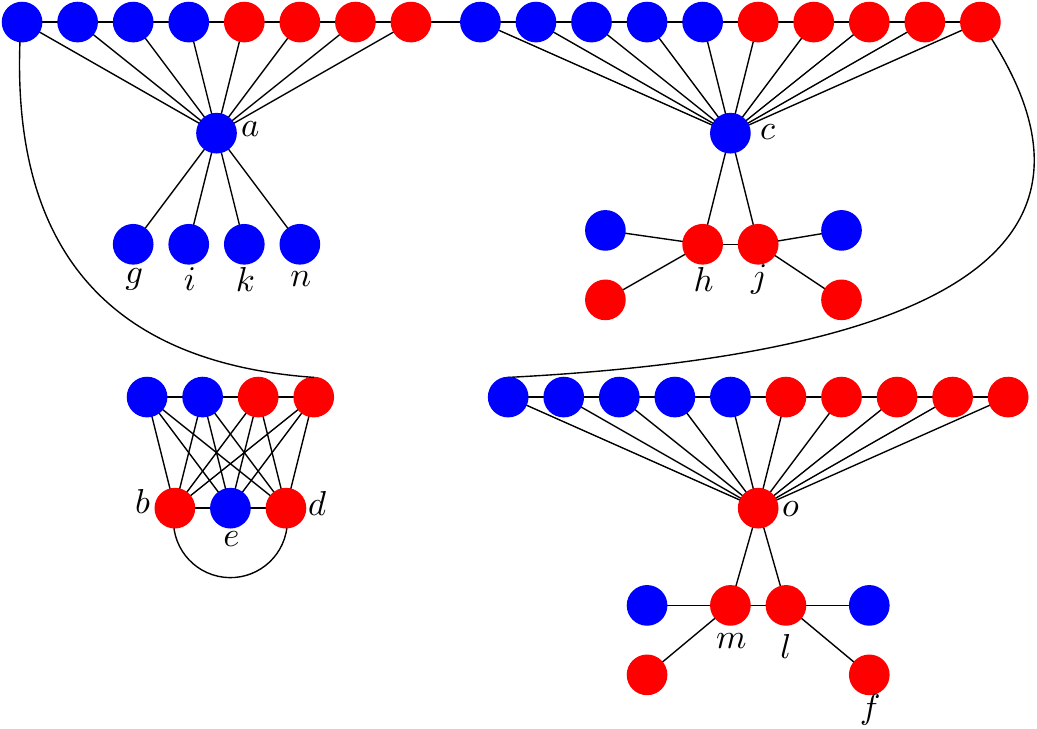}
					\caption{Placement after the second swap of $c$ and $d$}\label{secondswap}
				\end{subfigure}	
				\begin{subfigure}{0.4\textwidth}
					\centering
					\includegraphics[scale=0.7]{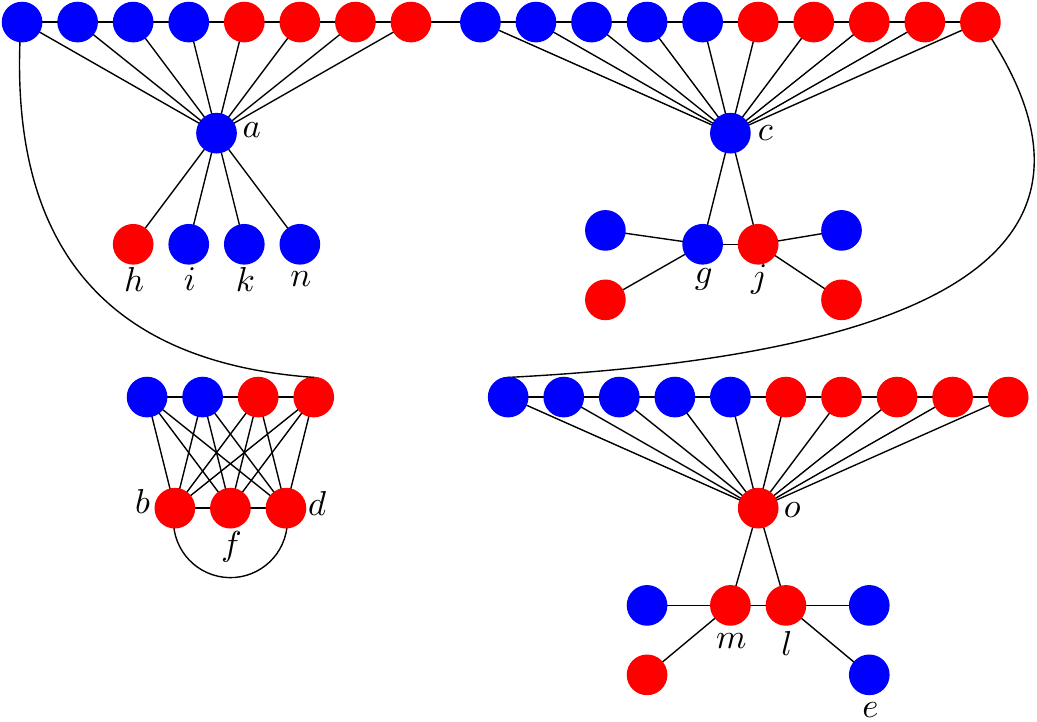}
					\caption{Placement after the third and forth swap of $e$ and $f$, and $g$ and $h$, respectively.}\label{thirdswap}
				\end{subfigure}	
			\end{figure}
			\begin{figure}\ContinuedFloat
				\begin{subfigure}{0.4\textwidth}
					\centering
					\includegraphics[scale=0.7]{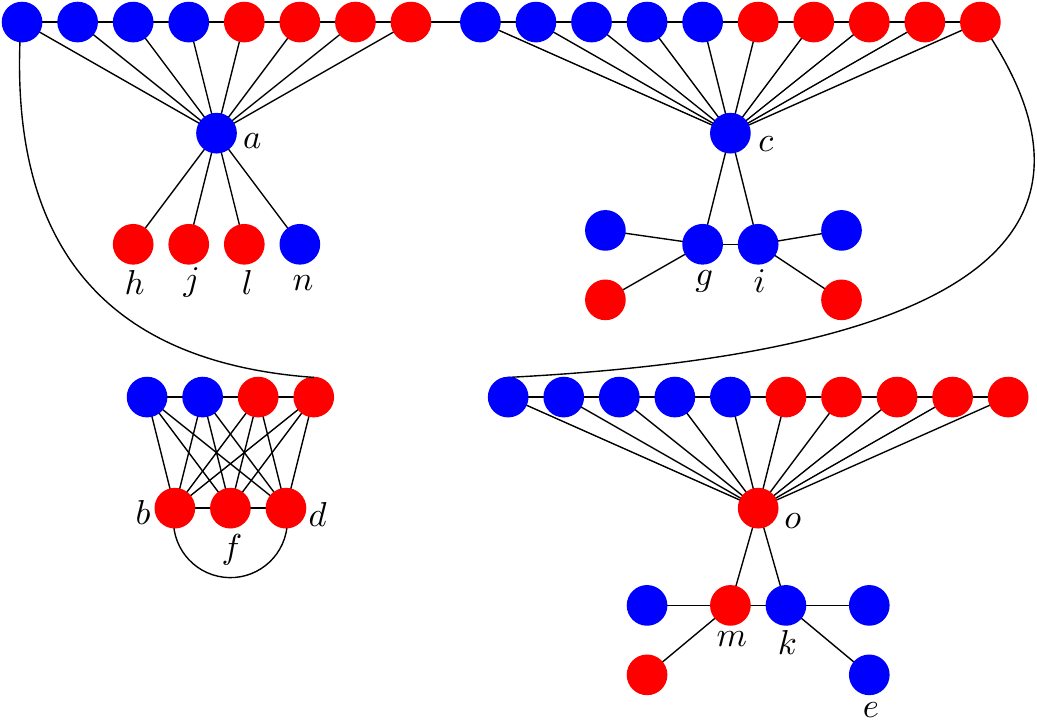}
					\caption{Placement after the fifth and sixth swap of $i$ and $j$, and $k$ and $l$, respectively.}\label{forthswap}
				\end{subfigure}	
				\begin{subfigure}{0.4\textwidth}
					\centering
					\includegraphics[scale=0.7]{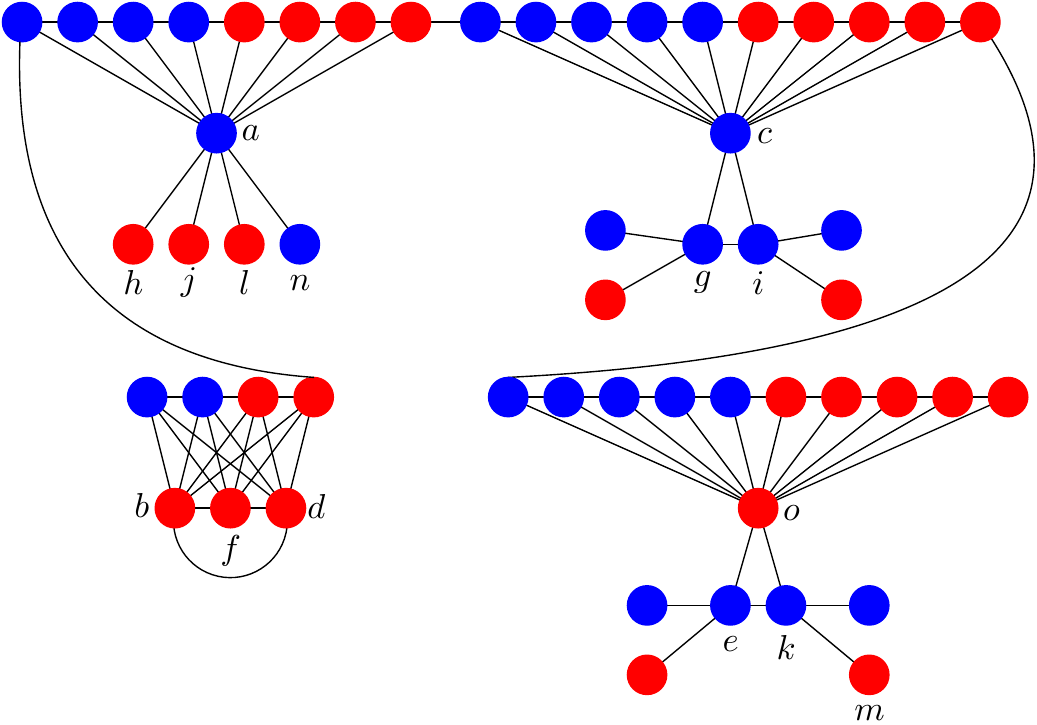}
					\caption{Placement after the seventh swap of $e$ and $m$.}\label{fifthswap}
				\end{subfigure}	
				\begin{subfigure}{0.4\textwidth}
					\centering
					\includegraphics[scale=0.7]{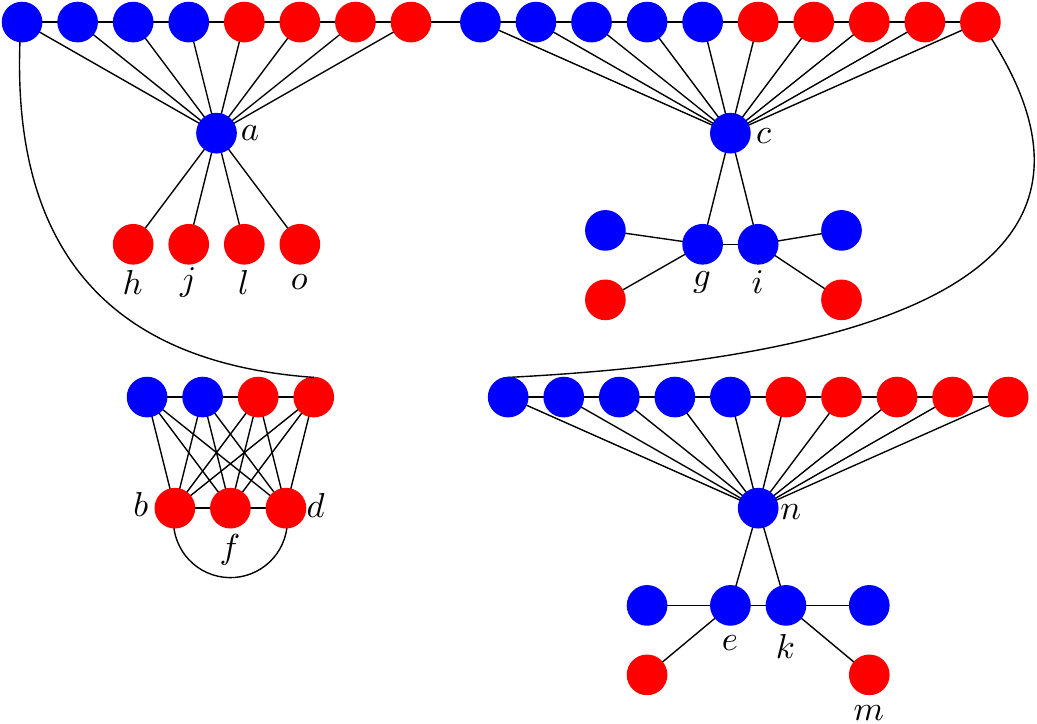}
					\caption{Placement after the last swap of $n$ and $o$.}\label{lastswap}
				\end{subfigure}	
				\begin{subfigure}{0.4\textwidth}
					\centering
					\includegraphics[scale=0.7]{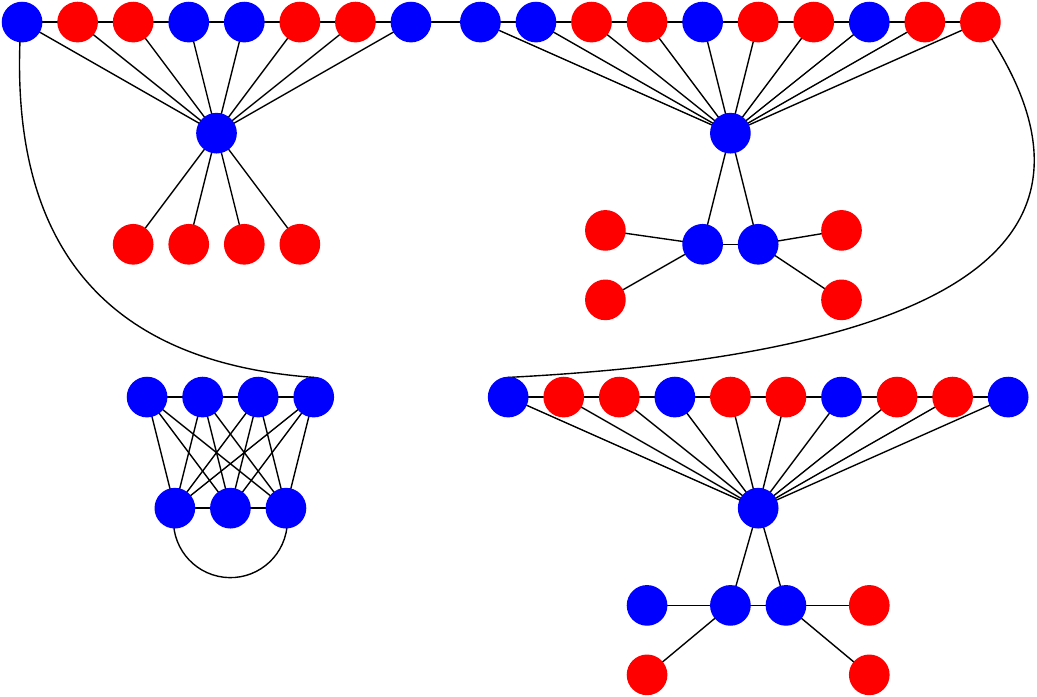}
					\caption{The SE for the same instance.}\label{NE}
				\end{subfigure}	
				\caption{An IRC and the SE for a game $(G,b, \frac12)$.}\label{IRC}
			\end{figure}
		\end{proof}
	
		\noindent For the special case of $\Lambda=1/2$, however, existence of a SE is guaranteed in any graph whose independence number is at least the number of blue agents.
		
		\begin{theorem}\label{thm:bipartite}
			Fix a game $(G,b,\Lambda)$ with $\frac{1}{\delta(G)+1}\leq\Lambda\leq 1/2$. Any strategy profile in which all agents of a same color are located on an independent set of $G$ is a SE.
		\end{theorem}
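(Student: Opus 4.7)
The plan is to verify, case by case, that no profitable swap can occur in such a profile $\sp$. Without loss of generality let the color placed on an independent set be blue; then every blue agent $i$ has only red neighbors, so
$$f_i(\sp) \;=\; \frac{1}{\delta(\sp(i))+1} \;\leq\; \frac{1}{\delta(G)+1} \;\leq\; \Lambda,$$
meaning every blue agent is below or at the peak. Since profitable swaps must involve agents of different colors, and since any agent at the peak already enjoys the maximum utility $p(\Lambda)=1$ and cannot strictly improve, it suffices to rule out a profitable swap between a blue $i$ strictly below the peak and a red $j$ strictly away from the peak.

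I would then split on the side of the peak occupied by $j$. If $f_j(\sp)<\Lambda$, both agents are below the peak and Lemma~\ref{lemma0} immediately excludes the swap. If $f_j(\sp)>\Lambda$ and $\i=1$, then $i$ and $j$ are adjacent and lie on different sides of the peak, so Lemma~\ref{lemma0.1} excludes the swap. The only remaining case is $f_j(\sp)>\Lambda$ with $\i=0$, which is where the independence hypothesis is used crucially: plugging $x=1$, $y=\delta(\sp(i))+1$ and $\i=0$ into formula~(\ref{eq1}) gives $f_j(\ssp) = y/y = 1$. Thus the red agent $j$ lands in $\sp(i)$ surrounded entirely by reds and becomes segregated, so $\u_j(\ssp)=p(1)=0$. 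By properties (i)–(ii) of $p$, we have $p(f_j(\sp))>0$ whenever $f_j(\sp)\in(0,1)$, so the swap either strictly decreases $j$'s utility (if $f_j(\sp)<1$) or leaves it at zero (if $j$ was already segregated); in neither subcase is it profitable for $j$.

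The main obstacle is really just the third case above, and the key insight is exactly the observation that upon swapping, the red $j$ inherits the independent-set neighborhood of the blue $i$ and is therefore segregated. Once this is spotted, the rest of the argument is pure bookkeeping on top of Lemmas~\ref{lemma0} and~\ref{lemma0.1}, and no additional machinery beyond formula~(\ref{eq1}) is needed.
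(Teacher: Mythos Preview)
Your proof is correct and follows essentially the same approach as the paper's: both exploit that a blue agent on the independent set has $f_i(\sp)\leq\Lambda$, invoke Lemmas~\ref{lemma0} and~\ref{lemma0.1} for the adjacent/below-peak cases, and use the key observation that if the swap is between non-adjacent agents then the red agent becomes segregated after landing in $\sp(i)$. The only difference is organizational---the paper splits first on $\i\in\{0,1\}$ while you split first on which side of the peak $j$ lies---but the ingredients and the crucial insight are identical.
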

	
		\begin{proof}
			Let $\sp$ be a strategy profile in which all agents of a same color are located on an independent set of $G$. Assume, w.l.o.g., that all blue agents are assigned to the nodes of an independent set of $G$ and consider a profitable swap performed by a blue agent $i$ and a red agent $j$.
			If $\i=0$, since~$i$ is only adjacent to red agents other than~$j$, it holds that $f_j(\ssp) = 1$, which gives $\u_j(\ssp) = 0$, thus contradicting the fact that $j$ performs a profitable swap.
			If $\i=1$, instead, we obtain $f_i(\sp) = \frac{1}{\delta(\sp(i))+1}\leq\frac{1}{\delta(G)+1} \leq \Lambda$. The numerator comes from the fact that $i$ is only adjacent to red agents. Knowing that $i$ cannot be at the peak, we conclude that she is below the peak. If $j$ is also below the peak, Lemma~\ref{lemma0} contradicts the fact that the swap is profitable, while, if $j$ is above the peak, the contradiction comes from Lemma \ref{lemma0.1}.
		\end{proof}
		
		\begin{corollary}\label{cor:bipartite}
			For $\Lambda=1/2$, games played on bipartite graphs always admit a SE which can be efficiently computed.
		\end{corollary}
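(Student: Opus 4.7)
The plan is to deduce this directly from Theorem~\ref{thm:bipartite}. For $\Lambda=1/2$, that theorem says any strategy profile placing all agents of a single color on an independent set of $G$ is a SE, provided $\frac{1}{\delta(G)+1}\leq 1/2$, i.e.\ provided $\delta(G)\geq 1$. Since the game is played on a connected graph with $n\geq 2$ nodes, $\delta(G)\geq 1$ automatically holds, so the hypothesis of Theorem~\ref{thm:bipartite} is satisfied as soon as one exhibits an independent set of size at least $b$.

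The key step is to produce such an independent set on a bipartite graph. If $G=(L\cup R, E)$ is bipartite with parts $L$ and $R$, then both $L$ and $R$ are independent sets, and one of them has cardinality at least $\lceil n/2\rceil\geq b$, where the last inequality uses the standing assumption $b\leq n/2$. Place all $b$ blue agents on nodes of the larger part (arbitrarily, filling the remaining nodes with red agents); this is a strategy profile in which the minority color occupies an independent set, hence a SE by Theorem~\ref{thm:bipartite}.

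For efficient computation, a bipartition $(L,R)$ of $G$ can be found in linear time via BFS/DFS two-coloring, after which selecting the larger side and assigning blue agents to any $b$ of its nodes takes linear time as well. There is no real obstacle here: the only thing to verify is that the degree condition $\delta(G)\geq 1$ holds (trivial by connectivity) and that one side of the bipartition is large enough (immediate from $b\leq n/2$), so the corollary follows.
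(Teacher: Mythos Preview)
Your proof is correct and follows exactly the route the paper intends: the corollary is stated without proof immediately after Theorem~\ref{thm:bipartite}, and your argument fills in precisely the two details one needs---that $\delta(G)\geq 1$ holds by connectivity, and that a bipartite graph has an independent set of size at least $\lceil n/2\rceil\geq b$ (namely the larger side of the bipartition), so the blue agents can be placed there and Theorem~\ref{thm:bipartite} applies. The efficiency remark via BFS/DFS two-coloring is also exactly what is meant.
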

	
	\section{Price of Anarchy}\label{sec:PoA}
	
		In this section, we give bounds on the PoA for games played on different topologies, even in those cases for which existence of a SE in not guaranteed.
	
		\paragraph{General Graphs}
			Next lemmas provide a necessary condition that needs to be satisfied by any SE and an upper bound of the value on the social optimum, respectively.
			
			\begin{lemma}\label{lemmar1}
				In a SE for any game $(G,b,\Lambda)$, no agents of different colors can be segregated.
			\end{lemma}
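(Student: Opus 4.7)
The plan is to proceed by contradiction: assume that $\sp$ is a SE in which both a blue agent $i$ and a red agent $j$ are segregated, i.e., $f_i(\sp)=f_j(\sp)=1$, and then exhibit that the swap between $i$ and $j$ is profitable.

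First I would observe that, by property (ii) of $p$ with $x=1$, we have $p(1)=p(0)=0$, so $\u_i(\sp)=\u_j(\sp)=0$. Next I would show $\i=0$: since $i$ is segregated, every node in $N[\sp(i)]$ is occupied by a blue agent, whereas $j$ is red, so $j$ cannot occupy a neighbor of $\sp(i)$.

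The key step is to compute the utilities after the swap. Because $\i=0$ and $j$ was segregated, all neighbors of $\sp(j)$ in $\sp$ are red agents (none of them is $i$), so after the swap agent $i$ sits at $\sp(j)$ surrounded entirely by red agents, yielding
\[
f_i(\ssp)=\frac{1}{\delta(\sp(j))+1}.
\]
By the symmetric argument, $f_j(\ssp)=\tfrac{1}{\delta(\sp(i))+1}$. Since $G$ is connected and $n\geq 2$, both $\delta(\sp(i))\geq 1$ and $\delta(\sp(j))\geq 1$, so $f_i(\ssp),f_j(\ssp)\in(0,1)$. Because $p$ vanishes only at $0$ and $1$ (it is strictly increasing on $[0,\Lambda]$ from $p(0)=0$ and, by property (ii), its values on $(\Lambda,1]$ correspond to values of $p$ on $[0,\Lambda)$, vanishing only at $x=1$), we obtain $\u_i(\ssp)>0=\u_i(\sp)$ and $\u_j(\ssp)>0=\u_j(\sp)$. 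This contradicts the SE assumption.

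I do not anticipate a genuine obstacle here; the only subtlety is to justify that $f_i(\ssp)$ and $f_j(\ssp)$ are strictly in $(0,1)$ so that $p$ evaluates to a strictly positive quantity at these points, which follows from connectedness of $G$ together with property (ii) forcing $p$ to be zero only at the two endpoints of its domain.
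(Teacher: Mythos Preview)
Your proof is correct and follows essentially the same approach as the paper's own argument: both show that two segregated agents of different colors can profitably swap because each ends up with a fraction strictly in $(0,1)$, hence with positive utility. Your version is simply more detailed, explicitly verifying $\i=0$, computing the post-swap fractions, and justifying from properties (i) and (ii) of $p$ that $p$ vanishes only at the endpoints; the paper compresses all of this into the single remark that $f_i(\ssp),f_j(\ssp)\notin\{0,1\}$.
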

			\begin{proof}
				Fix a strategy profile $\sp$. If there exist two agents $i$ and~$j$ such that $f_i(\sp)=f_j(\sp)=1$, they can perform a profitable swap, as $f_i(\sp)=f_j(\sp)=1$ and $f_i(\ssp)=f_j(\ssp)\notin\{0,1\}$. So, $\sp$ cannot be a SE for $(G,b,\Lambda)$.
			\end{proof}
			
			\begin{lemma}\label{optreg}
				For any game $(G,b,\Lambda)$, we have $$\hn(\sp^*)\leq\min\{(\Delta(G)+1)b,n\}.$$
			\end{lemma}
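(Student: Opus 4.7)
The plan is to bound $\hn(\sp^*)$ by separately counting non-segregated agents of each color. The bound $\hn(\sp^*)\leq n$ is immediate, since there are only $n$ agents in total, so the real work lies in establishing $\hn(\sp^*)\leq (\Delta(G)+1)b$.

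First I would unpack the definition: an agent $i$ is segregated in $\sp^*$ precisely when $f_i(\sp^*)=1$, i.e.\ when every neighbor of $\sp^*(i)$ is occupied by an agent of color $c(i)$. Hence the non-segregated agents are exactly those with at least one neighbor of the opposite color. I would then split the count of non-segregated agents into a blue contribution and a red contribution.

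The blue contribution is trivially at most $b$, since there are only $b$ blue agents. For the red contribution, I would observe that every non-segregated red agent occupies a node adjacent to some node hosting a blue agent, so the non-segregated red agents are contained in $\bigcup_{i \text{ blue}} N[\sp^*(i)] \setminus \{\sp^*(i)\}$. The size of this union is at most $\sum_{i \text{ blue}} \delta(\sp^*(i)) \leq b\,\Delta(G)$. Adding the two contributions yields $\hn(\sp^*)\leq b + b\,\Delta(G) = (\Delta(G)+1)b$, and combining with the trivial bound by $n$ gives the claim.

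There is no real obstacle here; the argument is a direct counting estimate. The only subtlety is to make sure that the ``self'' node $\sp^*(i)$ of each blue agent is accounted for by the additive term $b$ (which handles all blue agents at once) rather than double-counted in the neighborhood sum, which is why I exclude $\sp^*(i)$ from $N[\sp^*(i)]$ when bounding the red contribution.
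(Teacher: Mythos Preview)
Your argument is correct and is essentially the same counting argument the paper gives: each blue agent can make at most $\Delta(G)$ red neighbors non-segregated, so together with the (at most) $b$ non-segregated blue agents this yields the bound $(\Delta(G)+1)b$, while the bound $n$ is trivial. The paper just states this in one sentence, but your more explicit split into blue and red contributions is exactly the underlying reasoning.
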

			\begin{proof}
				As a blue node can be adjacent to at most $\Delta(G)$ red ones, it follows that, in any strategy profile, there cannot be more than $(\Delta(G)+1)b$ non-segregated agents, so that $\hn(\sp^*)\leq\min\{(\Delta(G)+1)b,n\}$.
			\end{proof}
	
			\noindent We now give (almost) tight bounds on the PoA for general graphs.
		
			\begin{theorem}\label{poa-gen}
				For any game $(G,b,\Lambda)$, $$\hpoa(G,b,\Lambda) \leq\min\bigl\{\Delta(G),\frac{n}{b+1}, \frac{(\Delta+1)b}{b+1}\bigr\}.$$ Moreover, there exists a game on a bipartite graph such that $\hpoa(G,b,\Lambda)\geq\frac{n}{b+1}$ when $b > 1$ and $\hpoa(G,b,\Lambda)\geq\frac{n-1}{3}$ when $b=1$.
			\end{theorem}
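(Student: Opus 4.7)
The plan is to handle the upper bound via Lemmas~\ref{lemmar1} and~\ref{optreg}, and to prove the lower bound with explicit bipartite constructions.

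For the upper bound, Lemma~\ref{optreg} yields $\hn(\sp^*)\le\min\{(\Delta(G)+1)b,n\}$, so the task reduces to lower-bounding $\hn(\sp)$ for every SE $\sp$. Lemma~\ref{lemmar1} guarantees that at most one color admits segregated agents. A short case analysis on which color is entirely non-segregated, together with $b\le n/2$ and the connectedness of $G$, produces at least one non-segregated agent of the opposite color, so $\hn(\sp)\ge b+1$; this already yields $\hpoa\le n/(b+1)$ and $\hpoa\le(\Delta(G)+1)b/(b+1)$. To obtain the remaining $\Delta(G)$ bound, I sharpen the estimate by a pigeonhole argument: each non-segregated agent of the fully non-segregated color has an opposite-color neighbor who is therefore also non-segregated, and each such vertex covers at most $\Delta(G)$ opposite-color neighbors, giving at least $\lceil b/\Delta(G)\rceil$ non-segregated agents of the opposite color. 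Hence $\hn(\sp)\ge b(\Delta(G)+1)/\Delta(G)$, which combined with $\hn(\sp^*)\le(\Delta(G)+1)b$ gives $\hpoa\le\Delta(G)$.

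For the lower bound I will exhibit two explicit bipartite witnesses. For $b>1$, the graph $G$ is a two-level star with a central hub $h$, $b$ star-centers $s_1,\ldots,s_b$ each adjacent to $h$, and $k$ further leaves attached to each $s_i$, so $n=1+b(k+1)$. Placing the $b$ blue agents on $b$ leaves of $s_1$ gives a SE with $\hn(\sp)=b+1$ (only the blue leaves and $s_1$ are non-segregated), while placing them on the star-centers $s_1,\ldots,s_b$ makes every agent non-segregated, hence $\hn(\sp^*)=n$ and $\hpoa\ge n/(b+1)$. For $b=1$, the graph is $K_{1,n-2}$ augmented with a pendant vertex attached to one of its leaves $l_1$: placing the blue agent on $l_1$ gives a SE with $\hn(\sp)=3$ (the blue leaf, the hub, and the pendant), while placing it at the hub gives $\hn(\sp^*)=n-1$, so $\hpoa\ge(n-1)/3$.

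The main obstacle is verifying SE stability of these configurations uniformly in~$p$. Lemmas~\ref{lemma0} and~\ref{lemma0.1} already rule out same-side and adjacent cross-side swaps under $\Lambda\le 1/2$, so only non-adjacent cross-side swaps remain. The plan is to tune the peak $\Lambda$ to the SE configuration: $\Lambda=1/2$ for the two-level star (where every blue leaf already attains $f=1/2=\Lambda$, the maximum of~$p$), and $\Lambda=1/3$ for the star-with-pendant (where the blue leaf attains $f=1/3=\Lambda$). Under this choice any alternative placement of a blue agent yields an $f$-value different from~$\Lambda$, hence a strictly smaller utility by monotonicity and the symmetry condition~(ii), which rules out every candidate swap. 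Since the theorem asks only for the existence of a witness game, this tailored choice of $\Lambda$ is enough.
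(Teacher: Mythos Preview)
Your proposal is correct and follows essentially the same route as the paper. The upper-bound argument is identical: Lemma~\ref{lemmar1} forces one color to be fully non-segregated, yielding $\hn(\sp)\ge b+1$, and the pigeonhole count on bichromatic edges gives the sharper $\hn(\sp)\ge b(\Delta(G)+1)/\Delta(G)$; combining with Lemma~\ref{optreg} delivers all three bounds. Your lower-bound constructions produce exactly the values the paper's Figure~\ref{fig2} claims ($\hn(\sp^*)=n$ vs.\ $\hn(\sp)=b+1$ for $b>1$, and $\hn(\sp^*)=n-1$ vs.\ $\hn(\sp)=3$ for $b=1$), and your device of choosing $\Lambda$ so that every blue agent sits exactly at the peak makes the SE verification immediate without appealing to Lemmas~\ref{lemma0} or~\ref{lemma0.1}. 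One cosmetic remark: your reference to those two lemmas as ruling out ``same-side'' swaps is superfluous here (and slightly imprecise, since Lemma~\ref{lemma0} only covers the below--below case), but since your actual argument bypasses them entirely this does not affect correctness.
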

			\begin{proof}
				For the upper bound, fix a game $(G,b,\Lambda)$ and a SE~$\sp$. By Lemma~\ref{lemmar1}, only agents of one color, say $c$, can be segregated in $\sp$. Thus, we get $\hn(\sp)\geq b+1.$ Let $V$ be the set of nodes of color $c'\neq c$. Every node in $V$ has to be adjacent to a node of color $c$. So, there are at least $|V|\geq b$ non-monochromatic edges in the coloring induced by $\sp$. As every node of color $c$ can be adjacent to at most $\Delta(G)$ nodes of color $c'$, there must be at least $\left\lceil b/\Delta(G)\right\rceil$ nodes of color~$c$ incident to a non-monochromatic edge, that is, being non-segregated in $\sp$. Thus, we get $$\hn(\sp)\geq\frac{(\Delta(G)+1)b}{\Delta(G)}.$$ We conclude that $$\hn(\sp)\geq\max\bigl\{\frac{(\Delta(G)+1)b}{\Delta(G)},b+1\bigr\}.$$ The upper bounds follow from Lemma \ref{optreg}.
				For the lower bounds, consider the games defined in Figure \ref{fig2}.
			\end{proof}
	
			\begin{figure}[h]
				\center
				\begin{subfigure}{0.4\textwidth}
					\centering
					\includegraphics[scale=0.8]{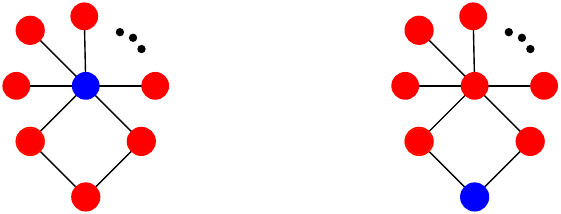}
					\caption{An instance with $b=1$ blue agents. Left: $\sp^*$ with $\hn(\sp^*) = n-1$. Right: a SE $\sp$ with $\hn(\sp) = 3$.}
				\end{subfigure}	\hspace*{2em} 
				\begin{subfigure}{0.4\textwidth} \vspace*{1em}
					\centering
					\includegraphics[scale=0.8]{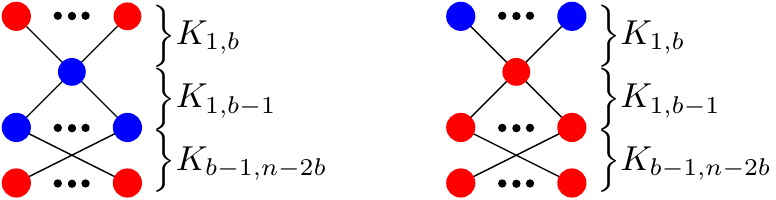}
					\caption{An instance with $b\geq 2$ blue agents. Left: $\sp^*$ with $\hn(\sp^*) = n$. Right: a SE $\sp$ with $\hn(\sp) = b+1$.}
				\end{subfigure}	
				\caption{Lower bounds for $\hpoa(G,b,\Lambda)$ when (a) $b = 1$, and (b) $b > 1$. Left: the socially optimal placement $\sp^*$. Right: the SE $\sp$ with minimum social welfare.}\label{fig2}
			\end{figure}
	
		\paragraph{Regular Graphs}
			For $\delta$-regular graphs, we derive an upper bound of $\delta$ on the PoA from Theorem \ref{poa-gen}. A better result is possible when $\Lambda$ is sufficiently small.
	
			\begin{theorem}\label{ub-reg}
				For any game $(G,b,\Lambda)$ on a $\delta$-regular graph $G$ with $\Lambda<1/\delta$, $\hpoa(G,b,\Lambda)\leq \min\bigl\{(\delta+1)/2,n/2b\bigr\}$.
			\end{theorem}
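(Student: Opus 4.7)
The plan is to derive a uniform lower bound $\hn(\sp) \geq 2b$ on every SE $\sp$, and combine it with the upper bound $\hn(\sp^*) \leq \min\{(\delta+1)b, n\}$ from Lemma~\ref{optreg} to conclude
$\hpoa(G,b,\Lambda) \leq \min\{(\delta+1)b, n\}/(2b) = \min\{(\delta+1)/2, n/(2b)\}$.

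To lower bound $\hn(\sp)$, I would first invoke Lemma~\ref{lemmar1} to see that all segregated agents in a SE must share one color; if none are segregated then $\hn(\sp)=n\geq 2b$ immediately. Otherwise, WLOG let the segregated agents be red. The key structural claim is that no blue agent can have a same-color neighbor in $\sp$. Granting this, the $b$ blue agents form an independent set whose $b\delta$ blue-red edges are incident to reds of degree at most $\delta$, so at least $b$ reds are non-segregated, yielding $\hn(\sp)\geq 2b$. The symmetric scenario in which the segregated agents are blue forces $r\leq b$, hence $b=r=n/2$, and the analogous counting gives $\hn(\sp)\geq 2r=n\geq 2b$.

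For the structural claim, I would argue by contradiction: suppose $k$ is a segregated red while some blue $i$ has $f_i(\sp)=x/(\delta+1)$ with $x\geq 2$. The hypothesis $\Lambda<1/\delta$ ensures $f_i(\sp)\geq 2/(\delta+1)>\Lambda$ (for $\delta\geq 2$; the $\delta=1$ case is vacuous since then $x\geq 2$ would segregate $i$), so $i$ is above the peak. Because $k$'s neighborhood is entirely red, $\i=0$, and swapping sends $k$ to $\sp(i)$ with $f_k(\ssp)=(\delta+2-x)/(\delta+1)<1$, strictly improving her zero utility, while $i$ lands at $\sp(k)$ with $f_i(\ssp)=1/(\delta+1)$. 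Folding $f_i(\sp)$ onto the left side of the peak via property (ii) of $p$, agent $i$'s utility strictly grows whenever $\Lambda(\delta+2-x)<1$, which is implied by $\Lambda<1/\delta\leq 1/(\delta+2-x)$ for every $x\geq 2$. The swap is therefore profitable for both agents, contradicting the SE hypothesis.

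The principal obstacle is precisely this last utility comparison for $i$: since $f_i(\ssp)=1/(\delta+1)$ can sit either below or above the peak depending on whether $\Lambda\gtrless 1/(\delta+1)$, a naive case split is awkward. Using the symmetry of $p$ to reflect everything onto the interval $[0,\Lambda]$ handles both regimes uniformly, and the inequality $\Lambda<1/\delta$ is exactly what the worst case $x=2$ demands.
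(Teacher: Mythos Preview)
Your proposal is correct and follows essentially the same route as the paper: invoke Lemma~\ref{lemmar1}, show that the non-segregated color forms an independent set (via a profitable swap with a segregated agent of the other color), count bichromatic edges to get $\hn(\sp)\geq 2b$, and finish with Lemma~\ref{optreg}. The paper asserts the profitable-swap step in one line, whereas you spell out the utility comparison for~$i$ in detail; your observation that the blue-segregated case forces $r\delta\le b\delta$ and hence $r=b$ is a nice sharpening the paper leaves implicit under ``$|V|\ge b$''.
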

			
			\begin{proof}
				Fix a SE $\sp$. By Lemma \ref{lemmar1}, only agents of a unique color, say $c$, can be segregated in $\sp$. Let $V$ be the set of nodes of color $c'\neq c$.
				As $\Lambda<1/\delta$, every node in $V$ has to be adjacent to nodes of color $c$ only. Otherwise, any agent in~$V$ that is adjacent to an agent of color $c'$ can perform a profitable swap with a segregated agent of color $c$. Thus, there are at least $\delta|V|\geq\delta b$ non-monochromatic edges in the coloring induced by $\sp$. As every agent of one color can be adjacent to at most~$\delta$ agents of the other one, there are at least~$b$ non-segregated agents of color $c$. Together with the at least~$b$ agents of color~$c'$, this gives $\hn(\sp)\geq 2b$ which, together with Lemma~\ref{optreg}, yields the claim.
			\end{proof}
	
			\noindent As a lower bound, we have the following.
			
			\begin{restatable}{theorem}{theoremlbpoareg}\label{lb-reg}
				For every $\delta\geq 2$ and $\Lambda\leq 1/2$, there exists a game $(G,b,\Lambda)$ on a $\delta$-regular graph such that $$\hpoa(G,b,\Lambda) \geq \frac{\delta(\delta+1)}{2\delta+1}=\frac{\delta+1}{2}-\frac{\delta+1}{4\delta+2}.$$ 
			\end{restatable}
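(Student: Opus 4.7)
The plan is to exhibit, for each $\delta\geq 2$, a $\delta$-regular graph $G$ on $n=\delta(\delta+1)$ vertices together with $b=\delta$ blue agents and two profiles $\sp,\sp^*$, such that $\sp$ is a SE with $\hn(\sp)=2\delta+1$ while $\sp^*$ attains $\hn(\sp^*)=n$. The ratio then equals $\frac{\delta(\delta+1)}{2\delta+1}$, as claimed. The $\delta=2$ case is the instance $(C_6,2,\Lambda)$ with SE profile BRBRRR ($\hn=5$) against the alternating optimum BRBRBR ($\hn=6$), and I would generalise its structure.

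I would describe $G$ in three layers: (i) $\delta$ ``blue positions'' $v_1,\ldots,v_\delta$ forming an independent set; (ii) a red hub $r_0$ adjacent to all $v_j$, together with, for each $i\in\{1,\ldots,\delta\}$, a ``semi-shared'' red $r_i$ adjacent to the $\delta-1$ blues $\{v_j:j\neq i\}$ plus one neighbour in the third layer; (iii) $\delta^2-\delta-1$ segregated reds whose induced subgraph, together with the $\delta$ ``bridges'' coming from the $r_i$'s, closes $G$ into a connected $\delta$-regular graph. In the profile $\sp$ that places the blues at $v_1,\ldots,v_\delta$, a direct computation yields $f=\tfrac{1}{\delta+1}$ for each blue and for $r_0$, $f=\tfrac{2}{\delta+1}$ for each $r_i$ with $i\geq 1$, and $f=1$ for every vertex of layer (iii).

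To show that $\sp$ is a SE, I would invoke the preliminary lemmas. Lemma~\ref{lemma0} rules out below/below swaps; Lemma~\ref{almost} rules out cross-peak swaps on the regular graph $G$; at-peak agents already enjoy the maximum utility $1$ and cannot strictly improve. The only remaining case is a profitable swap between two above-peak agents of different colours; using property (ii) of the single-peaked utility $p$, I would verify directly that for each such candidate swap (namely the swap of a blue $v_i$ with some $r_j$, or of a blue with a segregated red), at least one of the two agents is either made strictly worse off or left indifferent. Counting then gives $\hn(\sp)=\delta+(\delta+1)=2\delta+1$. For the optimum I would exhibit a placement that puts one blue at some $v_i$, one at some $r_j$ with $j\geq 1$, and distributes the remaining blues inside layer~(iii), arranged so that every vertex has at least one opposite-coloured neighbour; this certifies $\hn(\sp^*)=n=\delta(\delta+1)$.

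The main obstacle is constructing layer~(iii) uniformly in $\delta$: one needs, for every $\delta\geq 2$, a graph on $\delta^2-\delta-1$ vertices with $\delta$ distinguished half-edges that, once attached to the $r_i$'s, produces a connected $\delta$-regular $G$ on which the spread-out $\sp^*$ above is realisable. For $\delta\in\{2,3\}$ this is verified by explicit small graphs (the $6$-vertex $C_6$ and a suitable $12$-vertex graph whose five segregated reds form a $5$-cycle with one chord); for larger $\delta$ I expect a generic pairing-plus-padding construction, with a small parity adjustment when $\delta^2-\delta-1$ is odd, to suffice.
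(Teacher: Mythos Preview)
Your SE gadget is close in spirit to the paper's (both place the $\delta$ blues on an independent set so that each blue has $f=\tfrac{1}{\delta+1}$, and both obtain $\hn(\sp)=2\delta+1$), and your stability argument via Lemmas~\ref{lemma0} and~\ref{almost} together with a direct check of the remaining above/above swaps is sound. The difference lies entirely in how the optimum is handled, and this is where your proof has a real gap.

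You insist on a graph of size exactly $n=\delta(\delta+1)$ and then need $\hn(\sp^*)=n$. Since each blue can cover at most $\delta+1$ vertices, this forces the $\delta$ blues in $\sp^*$ to form a \emph{perfect} dominating set of $G$ (pairwise disjoint closed neighbourhoods that tile $V$). Whether such a set exists depends delicately on the structure of layer~(iii) and on how the $\delta$ bridges are wired into it; your ``pairing-plus-padding'' remark does not establish this, and your sketched optimum (one blue at $v_i$, one at $r_j$, the rest in layer~(iii)) already runs into trouble: the vertex $r_i$ is adjacent only to $\{v_k:k\neq i\}$ and its bridge, none of which is blue unless you additionally force one of the layer-(iii) blues onto that specific bridge, which cascades further constraints. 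You verify this only for $\delta\in\{2,3\}$; for general $\delta$ the construction and the perfect-domination claim are left open.

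The paper sidesteps this entirely. Its graph is not of size $\delta(\delta+1)$: it attaches an \emph{arbitrarily large} $\delta$-regular ``lower right'' gadget (with one edge redirected) to a fixed gadget that plays the role of your layers (i)--(ii). The SE still has $\hn(\sp)=2\delta+1$, but for the optimum one only needs $\hn(\sp^*)\geq\delta(\delta+1)$, not equality with $n$. In any sufficiently large $\delta$-regular graph one can greedily pick $\delta$ vertices at pairwise distance at least $3$, giving $\delta$ blues with disjoint closed neighbourhoods and hence $\hn(\sp^*)=\delta(\delta+1)$ immediately. So the paper trades your tight-instance elegance for a trivial optimum argument; if you want to salvage your approach, the cleanest fix is to do the same: enlarge layer~(iii) to $\omega(\delta^3)$ vertices and spread the blues there, abandoning the $n=\delta(\delta+1)$ constraint.
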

		
			\begin{proof}
			Consider graph $G$ shown in Figure \ref{fig:poaregular}. $G$ is made of three combined gadgets that we call the {\em left gadget}, the {\em upper right gadget} and the {\em lower right gadget}. The left gadget is essentially the complete bipartite graph $K_{\delta,\delta}$ with a missing edge: the one connecting the last two nodes of the respective partitions. These nodes are connected to the upper right gadget and the lower right one, respectively. So each node in the left gadget has degree $\delta$. The upper right gadget consists of a clique $K_{\delta-1}$ whose nodes are all connected to two special nodes, one on the left of $K_{\delta-1}$ and one on the right. The node on the left is the one adjacent to the node from the left gadget, while the node on the right connects the gadget with the lower right one. Thus, every node in this gadget has degree $\delta$. Finally, the lower right gadget is any $\delta$-regular graph with $n'$ nodes with a missing edge: the one connecting the node incident to the edge coming from the left gadget with the node incident to the edge coming from the upper right gadget. So, every node in this gadget has degree equal to $\delta$ and we conclude that $G$ is $\delta$-regular.
			
			Now set $b=\delta$. We claim that the strategy profile $\sp$ depicted in Figure \ref{fig:poaregular} is a SE. If a red agent $i$ swaps with a non-adjacent blue agent $j$, we have $f_i(\ssp)=1$ which results in a non-profitable swap. So a red agent can profitably swap only with an adjacent blue agent. Any blue agent $j$ has $f_j(\sp)=\frac{1}{\delta+1}$. By swapping with an adjacent red agent $i$, we have that either $f_j(\ssp)=\frac{1}{\delta+1}$ or $f_j(\ssp)=\frac{\delta}{\delta+1}$ which never yields an improvement when $\Lambda\leq 1/2$ or $f_j(\ssp)=\frac{\delta-1}{\delta+1}$ when swapping one blue agent $j$ with the rightmost red
			one of the left gadget. In this case, it is red agent $i$ that has no
			improvement in swapping since $f_i(\sp)=\frac{2}{\delta+1}$ and $f_i(\ssp)=\frac{\delta}{\delta+1}$ which never yields an improvement when $\Lambda\leq 1/2$. So, $\sp$ is a SE such that $\hn(\sp)=2\delta+1$.
			
			By letting $n'$ go to infinity, it is always possible to select $b=\delta$ nodes in the lower right gadget such that their closed neighborhoods are pairwise disjoint, which yields $\hn(\sp^*)=\delta(\delta+1)$ and thus the desired lower bound.
			\end{proof}
			\begin{figure}[t]
				\centering
				\includegraphics[width=0.8\linewidth]{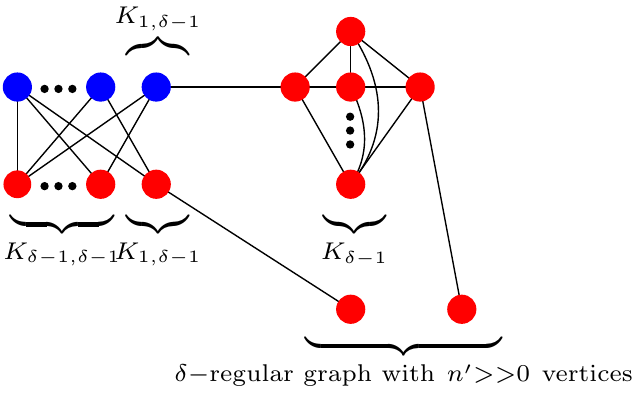}
				\caption{Lower bound construction for the PoA on $\delta$-regular graphs. The SE $\sp$ has $\hn(\sp) = 2\delta+1$.}
				\label{fig:poaregular}
			\end{figure}	
	
			\noindent The lower bound given in Theorem \ref{lb-reg} holds for all values of~$\delta$. It may be the case then that, for fixed values of $\delta$, better bounds are possible. For $\delta=2$ indeed, lower bounds matching the upper bounds given in Theorems \ref{poa-gen} and \ref{ub-reg} can be derived.
			
			\begin{restatable}{theorem}{theorempoarings}
				\label{thm:theorempoarings}
				For any $\epsilon>0$, there exists a game $(G,b,\Lambda)$ on a ring such that $\hpoa(G,b,1/2)>2 -\epsilon$ and $\hpoa(G,b,\Lambda)>3/2 -\epsilon$ for $\Lambda<1/2$.
			\end{restatable}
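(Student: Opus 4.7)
The plan is to prove both inequalities via explicit periodic constructions on rings, exploiting the fact that every closed neighborhood on a ring has size $3$, so the same-color fraction $f$ takes only the values $1/3$, $2/3$, $1$; hence $p(1)=0$ and the maximum achievable utility is $p(1/3)$, which equals $p(2/3)$ when $\Lambda=1/2$ and strictly exceeds $p(2/3)>0$ when $\Lambda<1/2$. The unifying SE argument is that if a profile places every blue agent at this maximum utility, then no profitable swap can exist: every swap pairs a blue with a red, but the blue (already at max) cannot strictly improve. The two constructions differ in how densely the blues can be packed while still saturating this maximum, which is what determines the ratio.

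For $\Lambda=1/2$, the identity $p(1/3)=p(2/3)$ makes $BB$-blocks admissible (each blue in a $BB$-block has $f=2/3$, still at max). For any integer $m\ge 1$, I would take $G$ to be the ring of length $n=6m$ whose cyclic coloring consists of $m$ adjacent $BB$-blocks separated by single red agents and followed by a long run of $3m+1$ consecutive red agents, giving $b=2m$. Each blue has $f=2/3$, each red separator has $f=1/3$, each of the two endpoints of the long red run has $f=2/3$ (all at utility $p(1/3)$), and the remaining $3m-1$ interior reds have $f=1$ and utility $0$. By the unifying argument, this profile $\sp$ is a SE, so $\hn(\sp)=2m+(m-1)+2=3m+1$; placing the blues as an independent set at pairwise distance exactly $3$ achieves $\hn(\sp^*)=3b=6m$, yielding $\hpoa(G,b,1/2)\ge 6m/(3m+1)$, which exceeds $2-\epsilon$ once $m$ is sufficiently large.

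For $\Lambda<1/2$, only $f=1/3$ attains the maximum $p(1/3)$, so every blue must have both neighbors red; this forces singleton blue-blocks. For $m\ge 1$, I would take $G$ to be the ring of length $n=3(m+1)$ whose coloring consists of $m+1$ singleton blues separated by single red agents followed by a run of $m+2$ consecutive red agents, giving $b=m+1$. Each blue has $f=1/3$, each red separator has $f=1/3$ (both at utility $p(1/3)$), the two boundary reds of the long run have $f=2/3$, and the $m$ interior reds have $f=1$. By the unifying argument, $\sp$ is a SE with $\hn(\sp)=2m+3$; and $\hn(\sp^*)=3b=3(m+1)$ via isolated blues, so $\hpoa(G,b,\Lambda)\ge 3(m+1)/(2m+3)$, which exceeds $3/2-\epsilon$ for $m$ large. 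The main technical point, namely ruling out long-range profitable swaps in either construction, is precisely what the unifying argument accomplishes, reducing to the observation that the blues already saturate the ring-maximum utility.
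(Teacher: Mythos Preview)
Your proof is correct and follows essentially the same approach as the paper: both of your constructions are, up to a simple reparametrization ($b=2m$ in the first case, $b=m+1$ in the second), identical to the paper's ring constructions with $n=3b$, and your ``unifying SE argument'' (all blues already at the ring-maximum utility, hence cannot strictly improve) is exactly the stability justification the paper gives.
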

		
			\begin{proof}
			For any even $b\geq 2$, let $G$ be a ring defined by the sequence of nodes $v_1,v_2,\ldots,v_n$, with $n=3b$.
			
			Assume $\Lambda=1/2$. In this case, $$p(1/3)=p(2/3)>p(0)=p(1).$$ Let $\sp$ be the strategy profile obtained as follows: starting from $v_1$, assign two blue agents followed by a red as long as this is possible. At this time, nodes up to $v_x$, with $x=3b/2$, have been assigned an agent. All the remaining nodes are assigned the remaining red agents. Since for any blue agent $i$ we have $f_i(\sp)=2/3$, all blue agents are getting the largest possible utility and are not interested in swapping. Thus, $\sp$ is a SE. As all agents residing at nodes $v_n,v_1,\ldots,v_x$ are not segregated, we have $\hn(\sp)=3b/2+1$. 
			
			Now assume $\Lambda<1/2$. In this case, $$p(1/3)>p(2/3)>p(0)=p(1).$$ Let $\sp$ be the strategy profile obtained by alternating blue and red agents for as much as possible. Since for any blue agent~$i$ we have $f_i(\sp)=1/3$, all blue agents are getting the largest possible utility and are not interested in swapping. Thus, $\sp$ is a SE. As all agents residing at nodes $v_n,v_1,\ldots,v_{2b}$ are not segregated, we have $\hn(\sp)=2b+1$.
			
			A strategy profile of social value $3b$ can be obtained by sequencing triplets made of two red agents with a blue one in between. Both claims follow by choosing $b$ sufficiently large.
			\end{proof}
	
	\section{Price of Stability}\label{sec:PoS}
	
		In this section, we give bounds on the PoS for games played on different topologies.
	
		\paragraph{General Graphs}
			We give a lower bound on the PoS on general graphs which asymptotically matches the upper bound on the PoA when $b=\Theta(\sqrt{n})$ and $\Lambda$ is a constant w.r.t~$n$.
	
			\begin{restatable}{theorem}{theoremnine}\label{posLB_general}
				For every $\Lambda$, there is a game $(G,b,\Lambda)$ such that $\hpos(G,b) = \Omega(\sqrt{n\Lambda})$.
			\end{restatable}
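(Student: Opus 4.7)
The plan is to exhibit, for every $\Lambda\in(0,1)$, a connected graph $G$ on $n$ vertices and an integer $b=\Theta(\sqrt{n\Lambda})$ such that $\hn(\sp^*)=\Omega(n)$ while every SE $\sp$ satisfies $\hn(\sp)=O(b/\Lambda)$; this immediately yields $\hpos(G,b,\Lambda)\geq\Omega(n\Lambda/b)=\Omega(\sqrt{n\Lambda})$.

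Set $d:=\lceil 1/\Lambda\rceil-1$. The graph is built around $b$ ``attractor'' vertices $v_1,\dots,v_b$, each of degree exactly $d$, together with a large ``hub region'' containing at least one vertex $h$ of degree $\Theta(n/b)=\Theta(\sqrt{n/\Lambda})$ and enough additional vertices to pad $G$ to $n$ nodes; the regions are glued together so that the attractors retain degree $d$ (for instance, the connecting edges are incident to the leaves of the attractors, never to the attractor centers themselves). Degrees of non-attractor vertices are chosen either much smaller or much larger than $d$, so that a lone blue placed at a non-attractor is strictly below the peak and thus obtains utility strictly less than $p(\Lambda)=1$.

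For the social-optimum side I would place one blue at $h$ and distribute the remaining $b-1$ blues at other high-degree vertices of the hub region so that their closed neighborhoods jointly cover (almost) all of $V$, producing $\hn(\sp^*)=\Omega(n)$. For the equilibrium side the key claim is that every SE confines the $b$ blue agents to the $b$ attractors. A blue at an attractor seeing only red neighbors has $f_i=1/(d+1)\approx\Lambda$ and attains the global maximum utility $p(\Lambda)=1$, so it never swaps. If some blue $i$ sat at a non-attractor $u$, then by pigeonhole some attractor $v_j$ would host a red agent $r$, and I would verify — using Observation~\ref{ob1} together with Lemmas~\ref{lemma0} and~\ref{lemma0.1} — that the swap between $i$ and $r$ is simultaneously profitable: agent $i$ jumps to peak utility $1$, while agent $r$, whose neighborhood at $v_j$ is entirely (or almost entirely) red and so forces $f_r$ close to $1$ and utility close to $0$, strictly benefits from landing at the heterogeneous vertex $u$, where at least one original neighbor of $i$ other than $r$ is of $r$'s color. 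Since an SE admits no profitable swap, all blues must sit at attractors, giving $\hn(\sp)\le b(d+1)=O(b/\Lambda)$ in every SE.

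The main obstacle is the construction itself: the attractors must be the \emph{unique} peak-utility positions for a lone blue — otherwise blues could scatter among ``almost-peak'' vertices and inflate the SE DoI — while the graph must simultaneously contain a vertex of degree $\Theta(\sqrt{n/\Lambda})$ (so that the social optimum reaches $\hn=\Theta(n)$) and still fit on $n$ vertices without creating new attractor-like spots. Balancing these requirements without introducing spurious equilibria requires careful control of the degree sequence and a clean case analysis exploiting the strict monotonicity of $p$ on $[0,\Lambda]$ from property~(i) and the reflection property~(ii) of the utility function.
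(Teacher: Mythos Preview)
Your strategy---low-degree ``attractors'' that absorb the blues in every SE while the optimum spreads them over high-degree vertices---matches the paper's, but the execution has real gaps. The claim that every SE puts \emph{all} blues on attractors is both stronger than needed and likely false for any concrete instantiation: nothing stops a blue from sitting on an attractor leaf (so the red at the center is no longer segregated and your swap breaks) or on a moderate-degree hub vertex. Your swap argument also has holes: with $d=\lceil 1/\Lambda\rceil-1$ one has $1/(d+1)<\Lambda$ in general, so a lone blue at an attractor is \emph{below} the peak, not at it, and you cannot conclude it never swaps; and Lemmas~\ref{lemma0}--\ref{lemma0.1} are the wrong tools---they \emph{rule out} profitable swaps, only for $\Lambda\le 1/2$, whereas here you must \emph{exhibit} one for every $\Lambda$.

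The paper avoids all of this with an explicit graph: a $b$-clique, each clique vertex carrying $(q{-}1)b$ private leaves plus one pendant $q$-star, where $q$ satisfies $1/q\le\Lambda<1/(q{-}1)$. It then proves only the weak necessary condition ``at most one blue on the clique in any SE'': if two blues sit there then by pigeonhole some star center is a segregated red $j$, and swapping a clique blue $i$ with $j$ gives $f_i(\sp)\le b/(qb+1)<1/q=f_i(\ssp)\le\Lambda$ and $f_j(\ssp)\notin\{0,1\}$, so both strictly improve directly from the monotonicity of $p$ on $[0,\Lambda]$ and $p(1)=0$. A degree count then bounds $\hn$ over all placements with at most one clique blue by $O(qb)$, against $\hn(\sp^*)=\Theta(qb^2)$. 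The missing piece in your proposal is precisely what you flag as ``the main obstacle'': the concrete construction.
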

		
			\begin{proof}
				Let $q\geq 2$ be an integer such that $\frac{1}{q}\leq\Lambda<\frac{1}{q-1}$. Consider the instance in Figure~\ref{fig:pos_general} in which there is a clique of $b$ nodes such that every node in the clique is additionally adjacent to $(q-1)b$ leaves (depicted to the up) and to the leaf of a star with $q$ nodes (depicted at the bottom). 
				\begin{figure}[t]
					\centering
					\includegraphics[width=0.6\linewidth]{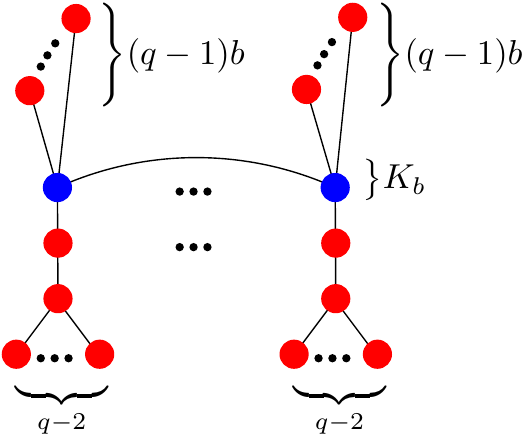}
					\caption{The instance used in the proof of Theorem \ref{posLB_general}. Shown is the socially optimal placement $\sp^*$.}
					\label{fig:pos_general}
				\end{figure}
				It is easy to check that $n=(q-1)b^2+b+bq$. Letting $b$ go to infinity, we get $n = \Theta((q-1)b^2)$, by which $b=\Omega(\sqrt{\frac{n}{q-1}})=\Omega(\sqrt{n\Lambda})$.
				
				The socially optimal placement $\sp^*$, depicted in Figure~\ref{fig:pos_general} has all blue agents on the nodes of the clique and thereby achieves $$\hn(\sp^*) = (q-1)b^2+2b=\Omega((q-1)b^2).$$ 
				
				We claim that, in contrast, any SE $\sp$ can have at most one blue agent in the clique. Assume, by way of contradiction, that there are two or more blue agents in the clique, and let $i$ be one of these agents. Then, there is at least one center of a star of $q$ nodes that is occupied by a segregated red agent~$j$. If $i$ and $j$ swap, we have $f_j(\sp)=1$ and $f_j(\ssp)\notin\{0,1\}$ so that $j$ improves, and $$f_i(\sp)\leq\frac{b}{qb+1}$$ and $$f_i(\ssp)=\frac{1}{q}\in\left(\frac{b}{qb+1},\Lambda\right],$$ so that $i$ improves too. This contradict that $\sp$ is a~SE. 
				Hence, for any SE $\sp$, it holds that $$\hn(\sp)\leq (b-1)q+qb+1=O(2qb)$$ (the fact that a SE exists can be easily checked by considering the strategy profile obtained by placing one blue agent in a node $v$ of the clique and all the remaining $b-1$ ones to the $b-1$ center of a star of $q$ nodes that are not appended to $v$).
				It follows that the price of stability is $$\frac{\Omega((q-1)b^2)}{O(2qb)}=\Omega(b)=\Omega(\sqrt{n\Lambda}).$$
			\end{proof}
	
			\paragraph{Bipartite Graphs}
				For bipartite graphs, we provide a tight bound of $2$ for the PoS of games for which the peak is at $1/2$. We start with the upper bound.
		
				\begin{restatable}{theorem}{theoremten}\label{thm:pos_bipartite}
					For any game $(G,b,1/2)$ on a bipartite graph~$G$, we have $\hpos(G,b,1/2) \leq 2$.
				\end{restatable}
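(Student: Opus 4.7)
The plan is to exhibit, for every $b \leq n/2$, a SE $\sp$ with $\hn(\sp) \geq \hn(\sp^*)/2$, which gives $\hpos(G,b,1/2) \leq 2$. Write $V = L \cup R$ for the bipartition and assume without loss of generality that $|L| \leq |R|$, so $|R| \geq n/2 \geq b$. The workhorse is Theorem~\ref{thm:bipartite}: since $L$ and $R$ are independent sets and $\Lambda = 1/2 \geq 1/(\delta(G)+1)$ (using that $G$ is connected with $n \geq 2$), any profile placing an entire color class on $L$ or on $R$ is a SE.

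I would first dispatch the subcase $b > |L|$. Here $n-b \leq |R|$, so I would place all $n-b$ red agents on an $(n-b)$-subset of $R$ and fill $L$ together with the remaining $b-|L|$ nodes of $R$ with blue. Theorem~\ref{thm:bipartite}, applied to the red color class, certifies this profile as a SE. Each red $R$-agent sees only blue $L$-neighbors and is therefore integrated, yielding $\hn \geq n-b \geq n/2 \geq \hn(\sp^*)/2$.

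For the subcase $b \leq |L|$, I would build two SE candidates directly from $\sp^*$. Writing $B^*$ for the blue set in $\sp^*$, $b_L = |B^* \cap L|$, $b_R = |B^* \cap R|$, define $\sp_L$ by placing the blues on $(B^* \cap L) \cup S_L$ for some $S_L \subseteq L \setminus B^*$ with $|S_L| = b_R$, and define $\sp_R$ analogously with blues on $(B^* \cap R) \cup S_R \subseteq R$. Both are SE by Theorem~\ref{thm:bipartite}. In $\sp_L$ every blue $L$-node is integrated (its neighbors lie in $R$, which is fully red), and every $R$-node that had a blue $L$-neighbor in $\sp^*$ still has one, since $B^* \cap L$ remains blue; writing $I^{R,r}$ for the red $R$-nodes integrated in $\sp^*$, this yields $\hn(\sp_L) \geq b + |I^{R,r}|$, and symmetrically $\hn(\sp_R) \geq b + |I^{L,r}|$. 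Splitting $\hn(\sp^*)$ into the four classes $|I^{L,b}|, |I^{L,r}|, |I^{R,b}|, |I^{R,r}|$ indexed by part and color, the trivial bounds $|I^{L,b}| \leq b_L$ and $|I^{R,b}| \leq b_R$ give $|I^{L,b}| + |I^{R,b}| \leq b$, so
\[
\hn(\sp_L) + \hn(\sp_R) \geq 2b + |I^{L,r}| + |I^{R,r}| \geq \hn(\sp^*) + b,
\]
whence $\max(\hn(\sp_L), \hn(\sp_R)) \geq \hn(\sp^*)/2$.

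The main delicacy I expect lies in this second subcase: verifying that the integrated $R$-nodes of $\sp^*$ created by $B^* \cap L$ really do persist in $\sp_L$ (and symmetrically for $\sp_R$), and then tidying the four-class decomposition so the slack $|I^{L,b}| + |I^{R,b}| \leq b$ does the heavy lifting. The combinatorial content is routine rather than deep, but writing it without ambiguity is where the real care lies.
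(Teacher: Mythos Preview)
Your proposal is correct and follows essentially the same approach as the paper: build two swap equilibria by pushing the blue agents entirely into one side of the bipartition while preserving $B^*\cap L$ (resp.\ $B^*\cap R$), then average. The paper's $R_1,R_2$ are exactly your $I^{L,r},I^{R,r}$, and its $\sp_1,\sp_2$ correspond to your $\sp_L,\sp_R$; the only cosmetic differences are that the paper folds your case $b>|L|$ into the construction of $\sp_1$ (``swap as long as possible'') rather than treating it separately, and that you extract the slightly sharper bound $\hn(\sp_L)\geq b+|I^{R,r}|$ on both sides whereas the paper only uses $\hn(\sp_1)\geq |R_2|$.
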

			
				\begin{proof}
					Let $(V_1,V_2)$, with $|V_1|\leq |V_2|$, be the bipartition of the nodes of $G$. For a fixed optimal profile $\sp^*$, denote by~$B_1$ (resp. $B_2$) the set of nodes of $V_1$ (resp. $V_2$) occupied by a blue agent in $\sp^*$. Moreover, denote by $R_1$ (resp. $R_2$) the set of nodes occupied by a red agent in $\sp^*$ falling in the neighborhood of some node in $B_2$ (resp. $B_1$). Clearly, we have $\hn(\sp^*)\leq b+|R_1|+|R_2|$. We shall prove the existence of two swap equilibria, namely $\sp_1$ and $\sp_2$, whose performance compare nicely with that of $\sp^*$.
					
					To construct $\sp_1$, start from $\sp^*$ and swap all blue agents in~$B_2$ with red agents in $V_1$ as long as this is possible. If all blue agents end up in $V_1$, we have that all blue agents occupy the nodes of an independent set of $G$ and so, by Theorem~\ref{thm:bipartite}, $\sp_1$ is an SE. If some blue agents are left out from $V_1$, then all red agents are located in $V_2$. So, we have that all red agents occupy the nodes of an independent set of $G$ and, by Theorem~\ref{thm:bipartite}, $\sp_1$ is an SE also in this case. As the set of nodes in $B_1$ are blue in both $\sp^*$ and $\sp_1$, we obtain that $\hn(\sp_1)\geq |R_2|$.
					
					Equilibrium $\sp_2$ is obtained symmetrically by swapping all blue agents in $B_1$ with red agents in $V_2$ as long as this is possible. In this case, as $b\leq n/2\leq |V_2|$, all blue agents end up in $V_2$ and, by Theorem \ref{thm:bipartite}, $\sp_2$ is an SE. As the set of nodes in $B_2$ are blue in both $\sp^*$ and $\sp_2$ and all blue agents are adjacent to some red agent in $\sp_2$, we obtain that $$\hn(\sp_2)\geq b+|R_1|.$$	
					Thus, we conclude that 
					\begin{eqnarray*}
						\hpos(G,b,1/2)&\leq&\frac{\hn(\sp^*)}{\max\{\hn(\sp_1),\hn(\sp_2)\}} \\
						&\leq&\frac{b+|R_1|+|R_2|}{\max\{b+|R_1|,|R_2|\}}\leq 2.
					\end{eqnarray*}
				\end{proof}
				\noindent We now give the matching lower bound.
		
				\begin{restatable}{theorem}{theoremeleven}\label{thm:pos_bipartite_lb}
					There exists a game $(G,b,1/2)$ on a bipartite graph such that $\hpos(G,b,1/2) \geq 2$.
				\end{restatable}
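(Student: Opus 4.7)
The plan is to exhibit a concrete parameterized bipartite instance whose price of stability approaches~$2$, matching the upper bound of Theorem~\ref{thm:pos_bipartite}. The guiding idea is to engineer a graph so that achieving high integration in the social optimum forces blue agents to be split across both sides of the bipartition (populating two disjoint red clusters), whereas every swap equilibrium is pulled into a configuration that concentrates all blues on a single side---thereby integrating only one cluster.

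Concretely, I would design $G$ with bipartition $V_1\sqcup V_2$ containing two symmetric ``petals'': a set of red vertices $R'$ whose only route to non-segregation requires at least one blue agent on a specific $V_1$-vertex, and a mirrored set $R''$ whose integration requires a blue on a specific $V_2$-vertex, with $|R'|=|R''|=\Theta(k)$ large and $b$ small. In the optimum $\sp^*$ the $b$ blues are split across the two pivotal positions so both petals are integrated, giving $\hn(\sp^*)\approx b+|R'|+|R''|$. On the other hand, any placement with all blues on a single side---which is automatically a SE by Theorem~\ref{thm:bipartite}, since each side is an independent set---can only integrate one petal, giving $\hn\leq b+|R'|=\Theta(k)$. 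Hence $\hpos(G,b,1/2)\geq 2-o(1)$, provided I also rule out mixed-side SEs with higher $\hn$.

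The heart of the argument is therefore showing that no swap equilibrium on $G$ can have blues distributed across both sides of the bipartition. Here I exploit a crucial collapse specific to bipartite games with $\Lambda=1/2$: for any two adjacent agents $i,j$, formula~(\ref{eq1}) together with $p(x)=p(1-x)$ gives $p(f_j(\ssp))=p(1-f_i(\sp))=p(f_i(\sp))$, so the two agents' utilities simply interchange after swapping and no adjacent swap can be strictly profitable for both. Destabilizing a mixed-side configuration therefore requires a profitable \emph{non-adjacent} swap, which I would force in $\sp^*$ (and in every candidate mixed-side placement with $\hn$ close to $\hn(\sp^*)$) by tuning the degrees of the pivotal vertices so that a non-adjacent swap between a blue-placed pivot and a suitably-placed segregated red strictly improves both parties, while Lemmas~\ref{lemma0} and~\ref{lemma0.1} exclude competing profitable swaps among agents below the peak or at opposite sides of it.

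The main obstacle will be the third step: verifying that \emph{every} mixed-side placement with $\hn$ exceeding $b+|R'|$ actually admits such a profitable non-adjacent swap, while the single-sided SEs remain intact and as ``bad'' as possible (so that the ratio really tends to $2$, not merely to some smaller constant). This requires a careful combinatorial balancing act between the size of the petals, the degrees of the pivotal vertices, and the lengths of the swap-displacement computations in formula~(\ref{eq1}), since the natural bipartite-plus-$\Lambda=1/2$ symmetry that destroys adjacent swaps also tends to protect mixed-side configurations unless the graph is engineered with asymmetric pivots.
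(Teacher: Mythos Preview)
Your proposal outlines a plausible strategy but does not yet constitute a proof: the graph is never concretely built, and the step you flag as ``the main obstacle''---ruling out every mixed-side placement with high $\hn$---remains open. There is also an overclaim: you set out to show that \emph{no} swap equilibrium on $G$ can have blues on both sides of the bipartition, but this is stronger than the PoS bound requires (which is only that every SE satisfies $\hn\leq (1+o(1))|R'|$) and will be false for essentially any $G$ you construct. Your own observation that adjacent swaps are never profitable when $\Lambda=1/2$ implies, for instance, that a blue agent sitting on a degree-$1$ vertex of $V_2$ is at the peak and cannot be dislodged; such mixed-side equilibria with low $\hn$ will survive. What you actually need is the weaker statement that no SE integrates both petals simultaneously, and that is precisely the verification that is missing.

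The paper takes a different and fully concrete route. It builds a path of $b$ high-degree ``hub'' vertices, each attached to $2(b-1)$ leaves above and a pendant path of length~$2$ below; in the optimum all $b$ blues sit on the hubs, integrating $\Theta(b^2)$ agents. The instability mechanism is not ``one-sided blues'' but ``no two blues on adjacent hubs'': whenever two blue hubs are adjacent, one of them, say $i$, has $f_i(\sp)\leq b/(2b+1)<1/2$ and can profitably swap with a segregated red on a pendant leaf (which is shown always to exist), landing at $f_i=1/2$; the red moving to the vacated hub gains a blue hub neighbor and also improves. This single forbidden pattern already forces about half the hubs to be red in any SE, halving the number of integrated leaves and yielding the ratio~$2$ as $b\to\infty$. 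The destabilizing swap is non-adjacent, so your adjacent-swap lemma, while correct, is not the operative tool here.
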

			
				\begin{proof}
					Consider the instance $(G,b,1/2)$ defined in Figure~\ref{fig4}. $G$ consists of a path of $b$ nodes, that we call the base of the graph. Any node in the base of the graph is additionally connected to $2(b-1)$ leaves (depicted on the up) and to a $2$-node path (depicted on the bottom).
					\begin{figure}[t]
						\center
						\includegraphics[width=0.8\linewidth]{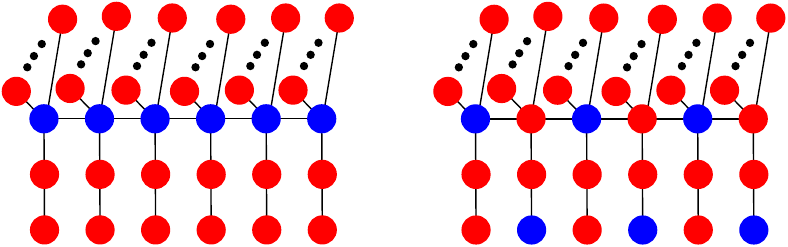}
						\caption{Left: a game with its socially optimal strategy profile $\sp^*$ shown. Right: the SE with maximum social welfare for the same instance.
						}\label{fig4}
					\end{figure}
					For the socially optimal profile~$\sp^*$, we get $$\hn(\sp^*) = 2(b-1)b+2b.$$ 
					However, this is not a SE. In any strategy profile $\sp$ in which two blue agents are adjacent in the base of the graph, like in the socially optimal profile $\sp^*$, one of them, denote this agent by $i$, can swap with a segregated red agent, denoted by~$j$, placed on a leaf node in the lower row. Observe that agent~$j$ is always guaranteed to exists. Since we have $$f_i(\sp)\leq\frac{b}{2b+1}<1/2$$ and $$f_i(\ssp)=1/2,$$ agent $i$ improves her utility. For agent $j$, we have $f_j(\sp)=1$ and $f_j(\ssp)\notin\{0,1\}$, so the swap is profitable and $\sp$ cannot be a SE. The maximum number of agents with non-zero utility that can be obtained by respecting this necessary constraint is $$\hn(\sp) = b(b-1)+\frac{5b}{2},$$ achieved by the SE $\sp$ depicted in Figure~\ref{fig4}~(right). The claim follows by letting $b$ go to infinity. 
				\end{proof}
		
			\paragraph{Almost Regular Graphs}
				We provide upper bounds to the PoS for games played on almost regular graphs. We start by considering the case of graphs with small degree.
				
				\begin{restatable}{theorem}{theoremtwelve}\label{thm:PoS_almost_cubic_graphs}
					For any game $(G,b,\Lambda)$ on an almost regular graph with $\Delta(G)\leq 3$ and $\Lambda\leq 1/2$, $\hpos(G,b,\Lambda)=1$.
				\end{restatable}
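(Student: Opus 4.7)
The plan is to combine the ordinal potential $\Phi$ from Theorem~\ref{thm:conv_regular} with an $\hn$-refinement. Let $\mathcal{O}=\{\sp:\hn(\sp)=\hn(\sp^*)\}$ be the finite, non-empty set of $\hn$-optimal profiles, and choose $\hat\sp\in\mathcal{O}$ minimizing $\Phi$ within $\mathcal{O}$. The goal is to prove that $\hat\sp$ is itself a SE, which immediately yields $\hpos(G,b,\Lambda)=1$.

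Suppose, for contradiction, that $\hat\sp$ admits a profitable swap between agents $i$ and $j$, yielding $\hat\sp_{ij}$. By Theorem~\ref{thm:conv_regular} we have $\Phi(\hat\sp_{ij})<\Phi(\hat\sp)$, so showing $\hat\sp_{ij}\in\mathcal{O}$ would contradict the minimality of $\hat\sp$, closing the argument. The whole proof therefore reduces to: any profitable swap from an $\hn$-optimal profile preserves $\hn$ in this setting. By Lemmas~\ref{lemma0}, \ref{lemma0.1}, and \ref{almost}, both $i$ and $j$ must be strictly above the peak in $\hat\sp$; and since $\Delta(G)\leq 3$ and $G$ is almost regular, every vertex has degree at most $3$, so $f_i(\hat\sp)$ and $f_j(\hat\sp)$ are confined to the small set $\{1/2,2/3,3/4,1\}$ (with the exact admissible subset depending on the degree and on whether $\Lambda<1/2$ or $\Lambda=1/2$), and Equation~\eqref{eq1} pins down $f_i(\hat\sp_{ij})$ and $f_j(\hat\sp_{ij})$ exactly.

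The core of the argument is then a case enumeration on the tuple $(f_i(\hat\sp), f_j(\hat\sp), \i, \delta(\hat\sp(i)), \delta(\hat\sp(j)))$ and on the identity of any agent $k$ that is non-segregated in $\hat\sp$ but becomes segregated in $\hat\sp_{ij}$; such a $k$ must be either $i$, $j$, or an outside agent adjacent to exactly one of $\hat\sp(i), \hat\sp(j)$ with a unique opposite-color neighbor in $\hat\sp$. In each configuration I would either (a) derive a direct contradiction with profitability via the single-peaked inequalities $p(f_i(\hat\sp_{ij}))>p(f_i(\hat\sp))$ and its analog for $j$, or (b) exhibit an alternative profile $\sp'\in\mathcal{O}$ with $\Phi(\sp')<\Phi(\hat\sp)$ obtained by locally recoloring near $k$, typically by swapping $k$ with a conveniently located segregated same-color agent whose existence is forced by the local structure of $\hat\sp$ (here $\hn$-optimality is crucial, since it forces a companion segregated pocket to be present whenever a near-segregated vertex exists).

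The hard part is the breadth of the case enumeration, and the tightest subcase is when both swappers have $f=3/4$ on a $3$-regular patch: the swap then reshuffles colors in two closed neighborhoods of size $4$ simultaneously, so both creation and cancellation of segregated agents can occur on either side, and one must combine the two profitability constraints with the optimality of $\hat\sp$ to rule out a net loss in $\hn$. The hypothesis $\Delta(G)\leq 3$ is essential: it bounds the affected vertex set by $2(\Delta+1)\leq 8$ agents, restricts the relevant fractions to a handful of values, and turns the single-peaked profitability constraints into sharp numerical inequalities with very few solutions, allowing an exhaustive check in bounded time.
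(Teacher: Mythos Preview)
Your extremal framing---choose $\hat\sp\in\mathcal{O}$ minimizing $\Phi$ and argue it is already a SE---is sound and is exactly the non-iterative counterpart of what the paper does via Lemma~\ref{lm:algorithm_for_SE_almost_cubic_graphs}. The reduction ``either the profitable swap cannot exist, or there is a $\Phi$-decreasing move that stays in $\mathcal{O}$'' is also the right dichotomy.

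The gap is in your option (b). You propose to obtain $\sp'\in\mathcal{O}$ by ``swapping $k$ with a conveniently located segregated same-color agent''. A swap between two agents of the \emph{same} color leaves the coloring, and hence both $\Phi$ and $\hn$, unchanged, so this cannot yield $\Phi(\sp')<\Phi(\hat\sp)$. The accompanying heuristic that $\hn$-optimality ``forces a companion segregated pocket to be present'' is also unsupported: an $\hn$-optimal profile may have no segregated agent of either color, and yet a profitable swap from it can still create a newly segregated neighbor $k$. So as written, option (b) does not go through, and the case enumeration you sketch has no mechanism to close the crucial subcase.

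The paper's move is both different and much simpler than your planned enumeration: instead of performing the profitable swap $i\leftrightarrow j$, one swaps $i$ with $k$ (the would-be segregated neighbor of $\sp(i)$). Since $i$ is strictly above the peak and $\Lambda\le 1/2$, agent $i$ has strictly more same-color than opposite-color neighbors, i.e., $x_b>x_r$; together with $\Delta(G)\le 3$ and the fact that $k$ already occupies one neighboring slot, this forces $x_r=0$. From this one reads off directly that the $i\leftrightarrow k$ swap strictly decreases $\Phi$ (by $\delta(\sp(k))-1+x_b-x_r>0$) and creates no new segregated agent in either closed neighborhood, yielding the required $\sp'\in\mathcal{O}$ with $\Phi(\sp')<\Phi(\hat\sp)$. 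This single local replacement handles all cases at once; no tuple enumeration over $(f_i,f_j,1_{ij},\delta(\hat\sp(i)),\delta(\hat\sp(j)))$ is needed, and $\hn$-optimality of $\hat\sp$ plays no role beyond the final contradiction.
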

			
				\begin{proof}
					Let $\sp^*$ be a socially optimal profile. Using Lemma~\ref{lm:algorithm_for_SE_almost_cubic_graphs} we have that there exists a SE $\sp$ satisfying $\hn(\sp)\geq\hn(\sp^*)$.
				\end{proof}
		
				\noindent An analogous result holds for the case in which $b \geq \alpha(G)$.
				\begin{theorem}\label{thm:pos_for_regular_graphs_and_large_b}
					For any game $(G,b,\Lambda)$ on an almost regular graph with $b \geq \alpha(G)$ and $\frac{1}{\delta(G)+1}\leq\Lambda \leq 1/2$, we have $\hpos(G,b,\Lambda)=1$.
				\end{theorem}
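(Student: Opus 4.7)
The plan is to construct a socially optimum profile $\sp^*$ with $\hn(\sp^*)=n$ and to argue it is also a swap equilibrium, yielding $\hpos(G,b,\Lambda)=1$.

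For the construction, from $b\leq n/2\leq r$ and $b\geq\alpha(G)$ we obtain $\alpha(G)\leq\min\{b,r\}$. Fix a maximum independent set $I$ of $G$ (so $|I|=\alpha(G)$), place $\alpha(G)$ red agents on all of $I$, and distribute the remaining $r-\alpha(G)$ reds and all $b$ blues on $V\setminus I$. By maximality of $I$, every vertex of $V\setminus I$ has a neighbor in $I$, which is now red, so any blue agent located in $V\setminus I$ has at least one differently-colored neighbor and is automatically non-segregated. The remaining requirement for $\hn(\sp^*)=n$ is (i) every vertex of $I$ has at least one blue neighbor in $V\setminus I$, and (ii) every red in $V\setminus I$ has at least one blue neighbor in $V\setminus I$. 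I would prove that the $b$ blues can be placed to satisfy (i) and (ii) via a greedy/constructive argument exploiting $b\geq\alpha(G)$ — there are enough blues to ``cover'' all red positions in the sense above. The resulting $\sp^*$ matches the trivial upper bound from Lemma~\ref{optreg}, so $\hn(\sp^*)=n$.

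To show $\sp^*$ is a SE, I would pick $\sp^*$ so as to additionally minimize the potential $\Phi$ among all social optima, and then argue by contradiction. Suppose $\sp^*$ admits a profitable swap between $i$ and $j$. By Lemmas~\ref{lemma0}, \ref{lemma0.1} and \ref{almost}, both agents must be strictly above the peak in $\sp^*$, so $f_i(\sp^*)=x/y>\Lambda$ with $x<y$ (since no agent is segregated in $\sp^*$). Using equation~(\ref{eq1}), the post-swap values $f_j(\ssp^*)=(y+1-x-\i)/y$ and the symmetric expression for $f_i(\ssp^*)$ are both strictly less than $1$, so the swapping agents remain non-segregated. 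By Theorem~\ref{thm:conv_regular}, $\Phi$ strictly decreases under the swap. If we can also show that no third party becomes newly segregated, then $\ssp^*$ is another social optimum with strictly smaller potential, contradicting the minimality of $\Phi(\sp^*)$. Hence $\sp^*$ must be a SE.

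The main obstacle is the sub-claim that the swap between two above-peak agents in $\sp^*$ cannot newly segregate any third party $k$: such $k$ would need its unique differently-colored vertex in its closed neighborhood to be exactly $\sp^*(i)$ or $\sp^*(j)$, with the color there flipping to $c(k)$. I would rule this out by exploiting the structural regularities imposed by the $I$-based construction — the blues are spread across $V\setminus I$ so that no vertex has its unique differently-colored neighbor at a position involved in any profitable swap of two above-peak agents. Proving feasibility of the greedy placement in Step~1 and verifying this third-party invariant in Step~2 are the two non-routine pieces of the proof.
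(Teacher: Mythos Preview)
Your approach inverts the paper's argument and thereby creates two obstacles that the paper sidesteps entirely. The paper does \emph{not} start from a social optimum and try to certify it as a SE; it starts from a global $\Phi$-minimizer $\sp$ (over \emph{all} profiles, not only optima), observes that $\sp$ is automatically a SE by Theorem~\ref{thm:conv_regular}, and then shows $\hn(\sp)=n$. That last step is a two-line contradiction: if some agent $i$ (say blue) is segregated in $\sp$, then either every red sits on an independent set---forcing $r\leq\alpha(G)$, contradicting $r\geq b\geq\alpha(G)$ in the nontrivial case $r>\alpha(G)$---or some red $j$ has a red neighbor, and the (not necessarily profitable) swap of $i$ with $j$ lowers $\Phi$ by at least $\delta(\sp(i))+1-(\delta(\sp(j))-1)\geq 1$, contradicting minimality of $\Phi$. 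The boundary case $b=\alpha(G)$ is handled by placing the \emph{blues} on a maximum independent set and invoking Theorem~\ref{thm:bipartite}.

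Both of your non-routine pieces are genuine gaps. For Step~1, you assert that $b\geq\alpha(G)$ blues in $V\setminus I$ can be arranged to dominate every red vertex, but give no argument; this is a nontrivial combinatorial claim and your $I$-based layout does not make it evident. For Step~2, you minimize $\Phi$ only over social optima, so to derive a contradiction from a profitable swap you must show the post-swap profile is \emph{still} a social optimum, i.e.\ that no third party $k$ becomes segregated. This can in fact fail: on an almost regular graph with $\Lambda=1/2$, a blue agent $i$ above the peak may have a unique red neighbor $k$ whose only blue neighbor is $i$; after $i$ swaps away, $k$ is segregated. The paper's Lemma~\ref{lm:algorithm_for_SE_almost_cubic_graphs} shows that repairing such third-party segregation already requires a delicate swap-replacement argument when $\Delta(G)\leq 3$, and that argument does not extend to larger degree. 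By minimizing $\Phi$ globally rather than within the set of optima, the paper never needs to track third parties at all.
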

				\begin{proof}
					We prove the claim by showing the existence of a SE~$\sp$ such that $\hn(\sp)=n$. Clearly, if $b=\alpha(G)$, then the strategy profile~$\sp$ in which the blue agents occupy all the nodes of a maximum independent set of $G$ is a SE by Theorem \ref{thm:bipartite}, and thus, $\hn(\sp)=n$, and the statement follows. Therefore, in the following we assume that $b,r > \alpha(G)$.
					
					Let $\sp$ be a strategy profile minimizing the value $\Phi(\sp)$ (ties are arbitrarily broken). By Theorem \ref{thm:conv_regular}, $\sp$ is a SE. 
					We now prove that $\hn(\sp)=n$. For the sake of contradiction, assume that $\hn(\sp)<n$, i.e., there is at least a segregated agent, say~$i$, in $\sp$. Assume w.l.o.g. that $i$ is blue. We claim that all red agents are placed on nodes that form an independent set, i.e., $r \leq \alpha(G)$. This allows us to obtain the desired contradiction as $r>\alpha(G)$. If the red agents are not placed on nodes that form an independent set, then there exists a red agent, say~$j$, having at least a red neighbor in $\sp$. The strategy profile $\ssp$ satisfies $$\Phi(\sp)-\Phi(\ssp)\geq \delta(\sp(i))+1-\left(\delta(\sp(j))-1\right)\geq 1,$$ since $|\delta(\sp(i))-\delta(\sp(j))|\leq 1$. Therefore, $\Phi(\ssp)<\Phi(\sp)$, thus contradicting the fact that $\sp$ minimizes $\Phi$. 
				\end{proof}
		
				\noindent A game $(G,b,\Lambda)$ is {\em balanced} if $b = \left\lfloor n/2\right\rfloor$. Using Theorem~\ref{thm:pos_for_regular_graphs_and_large_b}, we show that the PoS is $1$ in balanced games on regular graphs.
		
				\begin{restatable}{corollary}{corollaryfive}
					For any balanced game $(G,b,\Lambda)$ on a $\delta$-regular graph $G$ and $\frac{1}{\delta+1}\leq \Lambda \leq 1/2$, we have $\hpos(G,b,\Lambda)=1$.
				\end{restatable}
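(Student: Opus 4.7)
The plan is to derive this as an immediate consequence of Theorem~\ref{thm:pos_for_regular_graphs_and_large_b}. That theorem gives $\hpos(G,b,\Lambda)=1$ on almost regular graphs whenever $b\geq\alpha(G)$ and $\frac{1}{\delta(G)+1}\leq\Lambda\leq 1/2$. Since a $\delta$-regular graph is trivially almost regular and the $\Lambda$-range is identical to the one in the statement, the only thing left to check is the hypothesis $b\geq\alpha(G)$ under the balancedness assumption $b=\lfloor n/2\rfloor$.

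The key step is therefore to argue that any $\delta$-regular graph $G$ with $\delta\geq 1$ satisfies $\alpha(G)\leq\lfloor n/2\rfloor$. The cleanest way is via the complementary vertex cover bound: if $I$ is an independent set then $V\setminus I$ is a vertex cover, so $\alpha(G)+\tau(G)=n$. Each vertex covers at most $\delta$ edges, hence $\tau(G)\geq |E(G)|/\delta = (n\delta/2)/\delta = n/2$ by the handshake lemma. Thus $\alpha(G)\leq n/2$, and since $\alpha(G)$ is an integer we obtain $\alpha(G)\leq\lfloor n/2\rfloor=b$. Note that the assumption $\Lambda\geq\frac{1}{\delta+1}$ with $\Lambda\leq 1/2$ forces $\delta\geq 1$, so the bound is applicable.

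With $b\geq\alpha(G)$ in hand, I directly invoke Theorem~\ref{thm:pos_for_regular_graphs_and_large_b} to conclude that a SE realizing $\hn=n$ exists, which yields $\hpos(G,b,\Lambda)=1$. There is no real obstacle here beyond recalling the elementary fact that regular graphs have independence number at most $n/2$; the corollary is essentially a packaging of Theorem~\ref{thm:pos_for_regular_graphs_and_large_b} specialized to the balanced setting.
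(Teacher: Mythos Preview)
Your proposal is correct and follows essentially the same route as the paper: both verify $\alpha(G)\leq\lfloor n/2\rfloor$ for $\delta$-regular graphs via the same edge-counting argument (you phrase it through the vertex cover complement, the paper counts edges incident to a maximum independent set directly, but the inequality $\delta\,\alpha(G)\leq \delta n/2$ is identical) and then invoke Theorem~\ref{thm:pos_for_regular_graphs_and_large_b}. Your extra remark that the $\Lambda$-range forces $\delta\geq 1$ is a small but welcome sanity check that the paper leaves implicit.
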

				\begin{proof}
					We have that $b = \lfloor n/2 \rfloor$. We show that $\alpha(G) \leq \lfloor n/2 \rfloor$ using a simple counting argument. This allows us to use Theorem \ref{thm:pos_for_regular_graphs_and_large_b} to prove the claim. 
					
					To show the upper bound on $\alpha(G)$, we count all the edges that are incident to the nodes of a fixed maximum independent set of $G$ and bound this value from above by the number of edges of the graph, thus obtaining the following inequality $\delta\alpha(G) \leq \frac{\delta}{2} n$, i.e., $\alpha(G) \leq n/2$. 
					Using the fact that $\alpha(G)$ is an integer value, we derive $\alpha(G) \leq \lfloor n/2 \rfloor$.
				\end{proof}
		
				\noindent Before proving that the PoS is $O(1/\Lambda)$ when $b<\alpha(G)$, we need to introduce some new definitions and additional technical lemmas based on some well-known optimization cut problems.	
		
				For a given graph $G$ and a subset of nodes of $V$, we denote by $G[U]$ the subgraph of $G$ {\em induced} by $U$. More precisely, the node set of $G[U]$ is $U$ and, for every $u,v \in U$, $G[U]$ contains the edge $(u,v)$ iff $G$ contains the edge $(u,v)$.
		
				The \textsc{$k$-Max-Cut} problem  is an optimization problem in which, given a graph $G$ as input, we want to compute a $k$-partition $\{V_1,\dots,V_k\}$ of the nodes of $G$ that maximizes the number of edges that cross the cut induced by the $k$-partition, that is, the number of edges $(u,v)$ such that $u \in V_t$, $v \in V_h$, and $h\neq t$.	It is well-known that the greedy algorithm for the  \textsc{$k$-Max-Cut} problem computes a $k$-partition $\{V_1,\dots,V_k\}$ of the nodes of a graph such that, for every node $v$, the number of edges incident to $v$ that cross the cut induced by the $k$-partition is at least $\lceil (1-\frac{1}{k})\delta(v)\rceil$ \citep{vazirani}. Using this folklore result, we can derive the following useful lemma.
		
				\begin{observation}\label{obs:max_cut_greedy_algorithm}
					Let $G$ be a graph and $U \subseteq V$ such that $|U|\geq k$. There exists a $k$-partition $\{V_1,\dots,V_k\}$ of $G[U]$ such that, for every $t\in\{1,\dots,k\}$, the degree of each vertex $v \in V_t$ in $G[V_t]$ is at most $\lfloor \frac{\delta(v)}{k} \rfloor$.
				\end{observation}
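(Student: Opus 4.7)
The plan is to invoke the standard local-search algorithm for \textsc{$k$-Max-Cut} applied to the induced subgraph $G[U]$. Concretely, I would start from any $k$-partition $\{V_1,\dots,V_k\}$ of $U$ (well-defined since $|U|\geq k$) and iteratively perform the following local move: as long as there is some vertex $v \in V_t$ whose number of neighbors (in $G[U]$) lying in its own part $V_t$ is strictly greater than its number of neighbors lying in some other part $V_h$, reassign $v$ from $V_t$ to $V_h$.

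Termination follows from a standard potential argument: each such move strictly increases the number of edges of $G[U]$ that cross the partition (equivalently, strictly decreases the number of monochromatic edges with respect to the partition), and this quantity is bounded above by $|E(G[U])|$. Hence the procedure halts in a finite number of steps at a locally optimal $k$-partition.

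At termination, for every $t$ and every $v \in V_t$, the number of neighbors of $v$ in $G[V_t]$ is at most the number of its neighbors in every other $G[V_h]$. Since the $k$ quantities $\#\{$neighbors of $v$ in $V_1\},\dots,\#\{$neighbors of $v$ in $V_k\}$ sum to the degree $\delta_U(v)$ of $v$ in $G[U]$, the minimum among them—and in particular the one corresponding to $V_t$—is at most $\delta_U(v)/k$. Because $\delta_U(v)\leq \delta(v)$ and the count is a nonnegative integer, this yields degree at most $\lfloor \delta(v)/k\rfloor$ in $G[V_t]$, as claimed.

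There is no real obstacle: the result is a direct restatement of the textbook guarantee for the local-search $(1-1/k)$-approximation of \textsc{$k$-Max-Cut} cited in the paper, rephrased from ``at least a $(1-1/k)$-fraction of $\delta(v)$ crosses the cut'' to the equivalent ``at most a $1/k$-fraction of $\delta(v)$ stays inside $v$'s own part'', with the mild observation that degrees in $G[U]$ are dominated by degrees in $G$, which is what lets us write $\lfloor \delta(v)/k\rfloor$ rather than $\lfloor \delta_U(v)/k\rfloor$ in the conclusion.
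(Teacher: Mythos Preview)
Your proposal is correct and follows essentially the same approach as the paper: both invoke the standard \textsc{$k$-Max-Cut} local-search/greedy guarantee and rephrase ``at least a $(1-1/k)$-fraction of incident edges cross the cut'' as ``at most $\lfloor\delta(v)/k\rfloor$ stay inside the own part.'' Your write-up is in fact slightly more careful than the paper's, since you explicitly distinguish $\delta_U(v)$ from $\delta(v)$ and note $\delta_U(v)\le\delta(v)$, whereas the paper tacitly conflates the two.
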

				\begin{proof}
					The greedy algorithm for the \textsc{$k$-Max-Cut} problem computes a $k$-partition $\{V_1,\dots,V_k\}$ of $G[U]$ such that, for every node $v$ of $U$, the number of edges incident to~$v$ that cross the cut induced by the $k$-partition is at least $$\lceil (1-\frac{1}{k})\delta(v)\rceil.$$ As a consequence, for any $v \in V_t$, with $t\in \{1,\dots,k\}$, the number of edges that are incident to~$v$ in $G[V_t]$ is at most $$\delta(v)-\lceil (1-\frac{1}{k})\delta(v)\rceil=\lfloor \delta(v)/k \rfloor.$$
				\end{proof}
		
				\noindent The \textsc{Balanced $k$-Max-Cut} problem is a \textsc{$k$-Max-Cut} in which we additionally require the $k$-partition $\{V_1,\dots,V_k\}$ to be {\em balanced}, i.e., for every $t \in \{1,\dots,k\}$, $|V_t|\geq \lfloor n/k \rfloor$.\footnote{When $n < k$, we might have empty sets in the $k$-partition.} For the \textsc{Balanced $k$-Max-Cut} problem we can prove a useful lemma that is analogous to Observation \ref{obs:max_cut_greedy_algorithm}.
		
				\begin{lemma}\label{lm:balanced_max_cut_greedy_algorithm}
					Let $G$ be a graph and $U \subseteq V$  such that $|U|\geq 2$. There exists a balanced $k$-partition $\{V_1,\dots,V_k\}$ of $U$ such that, for at least one $t \in\{1,\dots,k\}$, $|V_t|\geq 1$ and the degree of every $v \in V_t$ in $G[V_t]$ is at most $\lfloor \delta(v)/k \rfloor$.
				\end{lemma}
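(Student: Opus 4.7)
The plan is to strengthen Observation~\ref{obs:max_cut_greedy_algorithm} by first producing a $k$-partition of $G[U]$ that satisfies the low-degree property on every part, and then rebalancing it without destroying the property on one carefully chosen part. I would dispose of the degenerate case $|U|<k$ immediately by putting each vertex of $U$ in its own singleton part and padding the remaining parts with~$\emptyset$: since $\lfloor|U|/k\rfloor=0$ the partition is vacuously balanced, every nonempty $V_t$ is a single vertex with no induced edges, and $|U|\geq 2$ guarantees that at least one such nonempty part exists.

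For the main case $|U|\geq k$, I would invoke Observation~\ref{obs:max_cut_greedy_algorithm} on $G[U]$ to obtain a $k$-partition $\{V_1',\dots,V_k'\}$ satisfying $\deg_{G[V_t']}(v)\leq\lfloor\delta(v)/k\rfloor$ for every $t$ and every $v\in V_t'$. I would then fix $V_t'$ to be a largest part, so that $|V_t'|\geq\lceil|U|/k\rceil\geq q$, where $q:=\lfloor|U|/k\rfloor$. To enforce balance, I would move vertices out of the surplus parts (those with $|V_j'|>q$) into the deficit parts (those with $|V_j'|<q$), always drawing from the surplus of parts other than $V_t'$ first and tapping $V_t'$ only after the rest of the surplus has been exhausted.

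The structural point that makes the scheme work is that removing vertices from $V_t$ can only delete edges in the induced subgraph, so every remaining $v\in V_t$ automatically inherits $\deg_{G[V_t]}(v)\leq\deg_{G[V_t']}(v)\leq\lfloor\delta(v)/k\rfloor$, with no recomputation needed. The main obstacle is purely combinatorial: I need to verify that $|V_t|\geq q\geq 1$ survives the rebalancing. This reduces to the observation that total surplus exceeds total deficit by exactly $|U|-kq\geq 0$, so by exhausting the non-$V_t'$ surplus first the total number of vertices extracted from $V_t'$ is at most its own excess $|V_t'|-q$, leaving $|V_t|\geq q$ as required.
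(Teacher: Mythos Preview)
Your argument is correct and takes a genuinely different route from the paper. The paper works directly with a balanced $k$-partition that maximizes the number of cut edges and argues by contradiction: if every part contained a vertex violating the degree bound, one could build an auxiliary directed graph on those ``bad'' vertices, find a directed cycle, and cyclically swap them to strictly increase the cut, contradicting maximality. Your approach instead leverages Observation~\ref{obs:max_cut_greedy_algorithm} to get a (possibly unbalanced) partition in which \emph{every} part already satisfies the degree bound, and then observes that rebalancing can be done by only removing vertices from the designated largest part $V_t'$, which is monotone with respect to induced degrees. The counting argument that at most $|V_t'|-q$ vertices need ever leave $V_t'$ is clean and correct. Your proof is more elementary and more constructive (no extremal choice over all balanced partitions, no cycle-finding); the paper's argument, on the other hand, stays entirely within the class of balanced partitions and yields the stronger intermediate fact that \emph{any} cut-maximal balanced partition already has a good part.
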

				\begin{proof}
					In the remainder of this proof, for a given $k$-partition $\{V_1,\dots,V_k\}$ of $U$ and a bijective function $$\rho : \{v_1,\dots,v_k\} \rightarrow \{v_1,\dots,v_k\},$$ with $v_t \in V_t$ for every $t \in \{1,\dots,k\}$, we define the $\rho$-swap as the $k$-partition $\{V_1',\dots,V_k'\}$ in which, for every $t\in\{1,\dots,k\}$, the set $V_t'$ is obtained from $V_t$ by replacing $v_t$ with $\rho(v_t)$ (it may happen that $\rho(v_t)=v_t$). We say that the $\rho$-swap is profitable if the number of edges crossing the cut induced by $\{V_1',\dots,V_k'\}$ is strictly larger than the number of edges crossing the cut induced by $\{V_1,\dots,V_k\}$. A balanced $k$-partition $\{V_1,\dots,V_k\}$ is stable if there is no profitable $\rho$-swap.
					
					Let $\{V_1,\dots,V_k\}$ be a balanced $k$-partition of $U$ that maximizes the number of edges in the cut (ties can be arbitrarily broken). We claim that $\{V_1,\dots,V_k\}$ satisfies all the properties of the lemma statement. This is clearly true when $|U|\leq k$ as all edges of $G$ are in the cut (consider, for instance, the solution in which each set of the $k$-partition contains at most one node). Therefore, we only need to prove the claim when $|U|> k$. This implies that $|V_t|>0$ for every $t\in\{1,\dots,k\}$. As a consequence, we only need to prove that there exists $t \in \{1,\dots,k\}$ such that, for every node $v \in V_t$, the degree of $v$ in $G[V_t]$ is at most $\lfloor \delta(v)/k \rfloor$. 
					
					For the sake of contradiction, assume that there are $k$ nodes $v_1,\dots,v_k$, with $v_t \in V_t$, such that, for every $t\in\{1,\dots,k\}$, the degree of $v_t$ in $G[V_t]$ is strictly larger than $\lfloor \delta(v_t)/k \rfloor$. Consider the graph $H$ on the $k$ nodes $v_1,\dots,v_k$, where we add the direct edge $(v_t,v_h)$ between $v_t$ and $v_h$, with $t\neq h$, iff the number of edges incident to $v_t$ whose other endpoints are in $V_h$ is at most $\lfloor \delta(v_t)/k \rfloor$. We observe that the outdegree of each node in $H$ is at least 1. As a consequence $H$ contains a directed cycle $C$. We define $$\rho:\{v_1,\dots,v_k\} \rightarrow \{v_1,\dots,v_k\}$$ as follows. For each edge $(v_t,v_h)$ of $C$, we define $$\rho(v_t)=v_h,$$ while $$\rho(v_\ell)=v_\ell$$ for every $\ell \in \{1,\dots,k\}$ such that $v_\ell$ is not contained in $C$. Clearly, $\rho$ is a bijective function. Furthermore, the the $\rho$-swap is profitable, as each node $v_t$ that is moved from $V_t$ to $V_h'$, with $h \neq t$, contributes with at least one more edge in the cut. This contradicts the fact  that $\{V_1,\dots,V_k\}$ maximizes the number of edges in the cut.
				\end{proof}
				
				\noindent We are now ready to prove the upper bound on the PoS for games played on almost regular graphs when $b < \alpha(G)$.
		
				\begin{restatable}{theorem}{theoremfourteen}\label{thm:pos_for_regular_graphs_and_small_b}
					For any game $(G,b,\Lambda)$ on an almost regular graph $G$ with $b < \alpha(G)$ and $\Lambda \leq 1/2$, we have $\hpos(G,b, \Lambda)= \min\{\Delta(G)+1,\text{O}(1/\Lambda)\}$.
				\end{restatable}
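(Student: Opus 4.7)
The upper bound $\hpos(G,b,\Lambda) \leq \Delta(G)+1$ is inherited for free from Theorem~\ref{poa-gen}, since $\hpos \leq \hpoa \leq \Delta(G)$ on any graph. Hence the real content of the theorem is the $O(1/\Lambda)$ bound. I would establish it by exhibiting a single swap equilibrium $\sp$ with $\hn(\sp) = \Omega(\Lambda \cdot \hn(\sp^\ast))$; combined with $\hn(\sp^\ast) \leq \min\{n,(\Delta(G)+1)b\}$ from Lemma~\ref{optreg}, this gives $\hpos \leq O(1/\Lambda)$. Moreover, since the bound $\Delta(G)+1$ is already in hand, I may restrict attention to the regime $\Lambda \geq 1/(\delta(G)+1)$: otherwise $1/\Lambda = \Omega(\Delta(G)+1)$ and the first term of the minimum binds. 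In that regime Theorem~\ref{thm:bipartite} becomes applicable to placements in which blue agents occupy an independent set.

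For the construction, I set $k := \lceil 1/\Lambda \rceil \geq 2$ and apply Lemma~\ref{lm:balanced_max_cut_greedy_algorithm} to $V(G)$. This yields a balanced $k$-partition $\{V_1,\ldots,V_k\}$ together with a distinguished part $V_t$ in which every node $v$ has at most $\lfloor \delta(v)/k \rfloor$ neighbors in $V_t$. Balance guarantees $|V_t| \geq \lfloor n/k \rfloor = \Omega(\Lambda n)$, and the degree property forces $G[V_t]$ to have maximum degree $O(\Lambda \Delta(G))$; a greedy selection therefore extracts an independent set $I \subseteq V_t$ of size $\Omega\bigl(|V_t|/(1+\Lambda \Delta(G))\bigr) = \Omega\bigl(n/(\Delta(G)+1/\Lambda)\bigr)$.

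I would then place the $b$ blue agents on an independent set of size $b$ rooted inside $V_t$: a $b$-subset of $I$ when $|I|\geq b$, or $I$ augmented up to size $b$ using the hypothesis $b<\alpha(G)$ otherwise. The resulting profile $\sp$ is a SE by Theorem~\ref{thm:bipartite}. In any such $\sp$ the $b$ blue agents are non-segregated, together with every red agent adjacent to some blue one, so $\hn(\sp) \geq b + |N_G(I) \setminus I|$; a counting argument combining the low internal degree inside $V_t$ with the almost regularity of $G$ then lower-bounds $|N_G(I)\setminus I|$ by $\Omega\bigl(\Lambda \min\{n,(\Delta(G)+1)b\}\bigr)$, completing the argument.

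The main obstacle I expect lies in the augmentation step and in sharp bookkeeping of the neighborhood. When $|I| < b$, growing $I$ to an independent set of size exactly $b$ while keeping $|N_G(I)\setminus I|$ within a constant factor of its original value requires invoking additional structure from the partition, for instance using Observation~\ref{obs:max_cut_greedy_algorithm} to pick extra independent nodes from the remaining parts with controlled overlap with $N_G(I)$. A secondary difficulty is matching the bound simultaneously in the small-$b$ regime, where $\hn(\sp^\ast) = O((\Delta(G)+1)b)$ dominates, and in the large-$b$ regime, where $\hn(\sp^\ast)$ may approach $n$; these two ranges likely need separate treatment before being combined into the unified statement.
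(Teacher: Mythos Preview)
Your approach has a genuine gap in the neighborhood counting step, and the gap is structural rather than a matter of bookkeeping. After placing the $b$ blue agents on a subset $S$ of your independent set $I$, the quantity you must lower-bound is $|N_G(S)\setminus S|$, not $|N_G(I)\setminus I|$. Each blue node contributes at least $\delta(G)\geq\Delta-1$ incident bichromatic edges, but a single red node can absorb up to $\Delta$ of them, so the only general lower bound you get is $|N_G(S)\setminus S|\geq b(\Delta-1)/\Delta=\Omega(b)$; this merely recovers $\hpos\leq O(\Delta)$, which you already had from Theorem~\ref{poa-gen}. The deeper problem is that your construction partitions $V$ obliviously, with no reference to $\sp^\ast$: nothing forces the red neighbors of your blue set to coincide with the set $R$ of non-segregated red agents in the optimum, so you cannot hope to certify $\hn(\sp)=\Omega(\Lambda\,\hn(\sp^\ast))$ this way. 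The augmentation and regime-splitting difficulties you flag are symptoms of this, not the root cause.

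The paper's proof proceeds quite differently. It fixes $\sp^\ast$, lets $B$ and $R$ be its non-segregated blue and red node sets, and aims for a SE with $\hn\geq\Omega(|R|/k)$ where $k=\lceil(\Delta-1)/(\Lambda\Delta-1)\rceil=O(1/\Lambda)$. The central construction is a \emph{hierarchical} application of Observation~\ref{obs:max_cut_greedy_algorithm} to $B$ itself: at each level one retains the part $B_h^1$ that dominates the largest subset ${\mathcal R}(B_h^1)\subseteq R$, and stops at the first level $\ell$ with $|R\setminus{\mathcal R}(B_\ell^1)|\geq k(b-1)$. This simultaneously guarantees that $G[B_\ell^1]$ has internal degree at most $\lfloor\delta(v)/k\rfloor$ and that $|{\mathcal R}(B_\ell^1)|>|R|/k-b$. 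The remaining $b-|B_\ell^1|$ blue agents are then parked on one block $R^1$ of a balanced $k$-partition of $R\setminus{\mathcal R}(B_\ell^1)$ obtained via Lemma~\ref{lm:balanced_max_cut_greedy_algorithm}; because there are no edges between $B_\ell^1$ and $R\setminus{\mathcal R}(B_\ell^1)$, the full blue set has low internal degree and every blue agent sits below the peak. Stability follows from Lemmas~\ref{lemma0} and~\ref{almost} for almost regular graphs---the blue set is \emph{not} required to be independent, only sparse, so Theorem~\ref{thm:bipartite} is not invoked. Finally $\hn(\sp)\geq b+|{\mathcal R}(B_\ell^1)|>|R|/k$ is immediate, since every node of ${\mathcal R}(B_\ell^1)$ is adjacent to a blue node by construction; this is precisely the link to $\sp^\ast$ that your oblivious partition lacks.
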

		
				\begin{proof}
				Let $\Delta=\Delta(G)$. Let $\sp^*$ be a strategy profile that maximizes $\hn(\sp^*)$ for $(G,b, \Lambda)$. Let $B$ and $R$ be the nodes occupied by the non-segregated blue and red agents in $\sp^*$, respectively. 
				Clearly, $\hn(\sp^*)=|B|+|R|$. Moreover, $|B|\geq 1$ iff $|R|\geq 1$. We prove the claim by showing the existence of a strategy profile $\sp$ such that $\sp$ is a SE and $$\hn(\sp) = \Omega(\Lambda(|B|+|R|)).$$
				
				We first rule out the case in which $b \leq \lfloor \Lambda(\Delta-1) \rfloor-1$. In fact, in this case $G[B]$ has maximum degree of at most $\lfloor \Lambda(\Delta-1)\rfloor-1$, which implies that every blue agent is below the peak in $\sp^*$. As a consequence, from Lemma \ref{lemma0} and Lemma~\ref{almost}, $\sp^*$ is also a SE. Therefore, in the following we assume that $b > \lfloor \Lambda(\Delta-1) \rfloor-1$. As $b$ is an integer, we have that $b \geq \Lambda\Delta-\Lambda- 1$, from which we derive
				\begin{equation}\label{eq:b_vs_delta_ratio}
					\Delta \leq b/\Lambda
					+1+1/\Lambda.
				\end{equation}
				
				Next, we rule out the case in which $|B|+|R| = O( b/\Lambda)$, i.e., $b =\Omega(\Lambda(|B|+|R|))$. In fact, in this case let $\sp$ be any SE (that we know to exist by Theorem \ref{thm:conv_regular}). By Lemma \ref{lemmar1}, there is a color for which all agents of that color are not segregated in~$\sp$. Therefore $$\hn(\sp) \geq \min\{b,r\}=b= \omega(\Lambda(|B|+|R|)).$$ Hence, for the rest of the proof, we assume that $|B|+|R| = \omega(b/\Lambda)$. As $|B| \leq b$, it holds 
				\begin{equation}\label{eq:lb_R}
					|R|= \omega(b/\Lambda). 
				\end{equation}
				Finally, we rule out the case in which $\Delta = O(1/\Lambda)$. By Theorem \ref{poa-gen}, $\hpos(G,b,\Lambda) \leq \Delta+1$.
				In the following, for any subset $B'$ of $B$, we denote by ${\mathcal R}(B')$ the subset of nodes of $R$ that are {\em dominated} by $B'$, i.e., ${\mathcal R}(B')$ contains all nodes $u \in R$ for which there exists a node $v$ in $B'$ such that $(u,v)$ is an edge of $G$.
				
				We iteratively use Observation \ref{obs:max_cut_greedy_algorithm} to compute hierarchical $k$-partitions of $B$, with $k=\lceil \frac{\Delta-1}{\Lambda\Delta-1} \rceil$. We observe that $k\geq 1$ as $\Lambda > 1/(\Delta+1)$.
				
				Starting from $B$, we compute the $k$-partition $\{B_0^1,\dots,B_0^k\}$ of $B$ that satisfies the premises of Observation \ref{obs:max_cut_greedy_algorithm}. This is called the $k$-partition of {\em level $0$}. For the remaining part of this proof we use the shortcut $R_0^t$  to denote ${\mathcal R}(B_0^t)$. W.l.o.g., we assume that $|R_0^1|\geq \max_t |R_0^t|$. Now, given the $k$-partition $\{B_{h}^1,\dots,B_{h}^k\}$ of level $h$ such that $|B_{h}^1|\geq 2$ and $|R_h^1|\geq \max_t |R_h^t|$, we compute a $k$-partition $\{B_{h+1}^1,\dots,B_{h+1}^k\}$ of $B_{h}^1$ that satisfies the premises of Observation \ref{obs:max_cut_greedy_algorithm}. This is the $k$-partition of {\em level $h+1$} where, again, we use the shortcut $R_{h+1}^t$  to denote ${\mathcal R}(B_{h+1}^t)$ and, w.l.o.g., we assume that $|R_{h+1}^1|\geq \max_t |R_{h+1}^t|$. Clearly, the $k$-partition $\{B_L^1,\dots,B_L^k\}$ computed in the last level, say $L$, satisfies $|B_L^1|= 1$ and $|R_L^1| \leq \Delta$. We observe that $$|R|\geq |R_0^1|\geq |R_1^1|\geq \dots \geq |R_L^1|$$ by construction.
				
				Let $\ell$ be the minimum index such that $|R\setminus R_\ell^1|\geq k(b-1)$. Such an index always exists because, using inequality, (\ref{eq:lb_R}), inequality (\ref{eq:b_vs_delta_ratio}), and the fact that $\Delta =\omega(1/\Lambda)$, we have
				$$|R\setminus R_L^1|=|R|-|R_L^1|\geq|R|-\Delta= \omega(b/\Lambda) \geq kb.$$
			
				Let $\{R^1,\dots,R^k\}$ be a balanced $k$-partition of $R\setminus R_\ell^1$ that satisfies the premises of Lemma \ref{lm:balanced_max_cut_greedy_algorithm}. We have that $|R^t|\geq b$ for every $t \in \{1,\dots,k\}$. W.l.o.g., we assume that the degree of each node $v \in R^1$ in $G[R^1]$ is at most $\lfloor\delta(v)/k \rfloor$. By construction, also each node $v \in B_\ell^1$ has a degree in $G[B_\ell^1]$ of at most $\lfloor \delta(v)/k\rfloor$. Since there is no edge $(u,v)$ of $G$ such that $u \in B_\ell^1$ and $v \in R\setminus R_\ell^1$, it follows that the degree of each node $v \in B_\ell^1 \cup R^1$ in $G[B_\ell^1 \cup R^1]$ is also upper bounded by $\lfloor \delta(v)/k \rfloor$.
			
				Let $\sp$ be the strategy profile in which exactly $|B_\ell^1|$ blue agents are placed on the nodes of $B_\ell^1$ and the remaining $b-|B_\ell^1|$ blue agents are placed on a subset of nodes of $R^1$ (ties among nodes of $R^1$ are arbitrarily broken). Let $i$ be any blue agent ane let $v$ be the node occupied by $i$ in $\sp$. By the choice of $k$ we have $k \geq \lceil\frac{\delta(v)}{\Lambda(\delta(v)+1)-1}\rceil$. This implies that
				$$
				f_i(\sp)\leq \frac{\lfloor \delta(v)/k \rfloor +1}{\delta(v)+1}\leq \frac{\delta(v)/k+1}{(\delta(v)+1)}\leq \Lambda.
				$$
				Therefore, the blue agent $i$ is below the peak in $\sp$. Since every blue agent is below the peak in $\sp$, from Lemma \ref{lemma0} and Corollary \ref{almost}, $\sp$ is a SE. 
				
				We conclude the proof by showing that $$\hn(\sp) = \Omega(\Lambda(|B|+|R|)).$$ By construction, each blue agent occupies a node that is adjacent to at least one other node occupied by a red agent. Moreover, each red agent that occupies a node of $R_\ell^1$ is in the neighborhood of at least one node of $B_\ell^1$ that is occupied by a blue agent). Therefore, 
				\begin{equation}\label{eq:doi_regular_graphs}
					\hn(\sp) \geq b+|R_\ell^1|.
				\end{equation}
		
				For proof convenience, let us denote by $R_{-1}^1$ the set $R$. By the choice of $\ell$ we know that $|R|-|R_{\ell-1}^1|=|R\setminus R_{\ell-1}^1| < kb$, which implies that $|R_{\ell-1}^1| > |R|-kb$. As a consequence, since $$\sum_{i=1}^k|R_\ell^i| \geq |R_{\ell-1}^1|> |R|-kb$$ and $|R_\ell^1| \geq \max_{i}|R_\ell^i|$, we obtain
				\begin{equation}\label{eq:useful_bound_doi_regular_graphs}
					|R_\ell^1|> \frac{|R|}{k}-b.
				\end{equation}
				Combining the inequalities (\ref{eq:doi_regular_graphs}) and (\ref{eq:useful_bound_doi_regular_graphs}) we obtain $$\hn(\sp) > |R|/k.$$ Using inequality (\ref{eq:lb_R}) and the fact that $k=O(1/\Lambda)$ we finally obtain $\hn(\sp) = \Omega(\Lambda(|B|+|R|))$, as desired.
			\end{proof}
		
			\noindent We can derive the following upper bound to the $\hpos$.
		
			\begin{corollary}\label{cor:pos_almost_regular}
				For any game $(G,b,\Lambda)$ on an almost regular graph with a constant value of $\Lambda \leq 1/2$, we have $\hpos(G,b,\Lambda) =\text{O}(1)$.
			\end{corollary}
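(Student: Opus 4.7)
The plan is to obtain the constant bound by a case analysis on the relation between $b$ and $\alpha(G)$, reducing each case to one of the three previous results: Theorem~\ref{thm:pos_for_regular_graphs_and_large_b}, Theorem~\ref{thm:pos_for_regular_graphs_and_small_b}, or the general PoA bound of Theorem~\ref{poa-gen} (which upper bounds the PoS as well). Throughout, I treat $\Lambda\leq 1/2$ as a fixed constant independent of $n$.

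The first (easy) case is $b<\alpha(G)$. Here Theorem~\ref{thm:pos_for_regular_graphs_and_small_b} directly gives $\hpos(G,b,\Lambda)\leq\min\{\Delta(G)+1,\mathrm{O}(1/\Lambda)\}\leq\mathrm{O}(1/\Lambda)$, which is constant since $\Lambda$ is constant. No further work is needed.

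The second case is $b\geq\alpha(G)$, and here I split on whether $\Lambda\geq\frac{1}{\delta(G)+1}$ or not. If $\Lambda\geq\frac{1}{\delta(G)+1}$, Theorem~\ref{thm:pos_for_regular_graphs_and_large_b} applies verbatim and yields $\hpos(G,b,\Lambda)=1$. Otherwise $\Lambda<\frac{1}{\delta(G)+1}$, which rearranges to $\delta(G)<\frac{1}{\Lambda}-1$. Since $\Lambda$ is constant, this forces $\delta(G)=\mathrm{O}(1)$, and almost regularity then gives $\Delta(G)\leq\delta(G)+1=\mathrm{O}(1)$ as well. Because $\hpos(G,b,\Lambda)\leq\hpoa(G,b,\Lambda)$ whenever a SE exists, and in this case the relevant existence is delivered either by Theorem~\ref{thm:conv_regular} (for $\Lambda\leq 1/2$ on almost regular graphs, whose swap equilibria are always reached), I can invoke Theorem~\ref{poa-gen} to conclude $\hpos(G,b,\Lambda)\leq\Delta(G)=\mathrm{O}(1)$.

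The only subtle step is the second subcase: one must notice that the hypothesis of Theorem~\ref{thm:pos_for_regular_graphs_and_large_b} genuinely excludes tiny $\Lambda$, but precisely in that regime the degrees are forced to be bounded constants, so the crude Theorem~\ref{poa-gen} upper bound suffices. Combining the three subcases gives $\hpos(G,b,\Lambda)\leq\max\{1,\mathrm{O}(1/\Lambda),\mathrm{O}(1)\}=\mathrm{O}(1)$, proving the corollary. I expect no real obstacle beyond carefully ensuring the three cases cover all almost regular graphs with $\Lambda\leq 1/2$.
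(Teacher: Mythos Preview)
Your proof is correct and follows essentially the same approach as the paper: both combine Theorem~\ref{poa-gen} (for bounded degree), Theorem~\ref{thm:pos_for_regular_graphs_and_large_b} (for $b\geq\alpha(G)$), and Theorem~\ref{thm:pos_for_regular_graphs_and_small_b} (for $b<\alpha(G)$). The only difference is organizational---the paper first separates out the case of constant $\Delta(G)$ and then splits on $b$ versus $\alpha(G)$, whereas you split on $b$ versus $\alpha(G)$ first and only inside the $b\geq\alpha(G)$ case check whether $\Lambda\geq\frac{1}{\delta(G)+1}$; your version is arguably more explicit about verifying the hypothesis of Theorem~\ref{thm:pos_for_regular_graphs_and_large_b}.
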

			\begin{proof}
				By Theorem \ref{poa-gen}, the $\hpos$ is constant if $\Delta(G)$ is constant. The result when $\Delta(G)$ is not constant is divided into two cases. For the case $b \geq \alpha(G)$ the claim immediately follows from Theorem \ref{thm:pos_for_regular_graphs_and_large_b}. For the case $b < \alpha(G)$ the claim follows from Theorem \ref{thm:pos_for_regular_graphs_and_small_b} and the fact that $\Lambda$ is constant by assumption.
			\end{proof}

	\section{Computational Complexity Results}\label{sec:complexity}
		In this section we analyze the computational complexity aspects of the game played on both bipartite and regular graphs. More precisely, we provide hardness results for the two problems of computing a social optimum and a SE $\sp$ that maximizes the value $\hn(\sp)$, respectively. 

		\begin{theorem}\label{thm:reduction_dominating_set}
			There is a constant $c>1$ such that, given a game $(G,b,\Lambda)$ played on a cubic graph $G$, the problem of computing a social optimum strategy profile is not $c$-approximable in polynomial time, unless $\mathsf{P}=\mathsf{NP}$.
		\end{theorem}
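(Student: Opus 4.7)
The plan is to give an approximation-preserving reduction from Minimum Dominating Set on cubic graphs, which is known to be APX-hard. A useful first simplification is that the social optimum depends only on the induced bicoloring, not on the utility function $p$ or on $\Lambda$: the value $\hn(\sp)$ counts non-segregated agents, so computing $\sp^*$ amounts to choosing $b$ of the $n$ nodes as blue so as to maximize the number of vertices incident to at least one bichromatic edge. Thus the reduction is purely combinatorial and works for every admissible $\Lambda$ and every admissible $p$.

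Given a cubic instance $H=(V_H,E_H)$ of Minimum Dominating Set, I would construct in polynomial time a cubic graph $G$, an integer $b$, and a baseline value $M$, by attaching to each $v\in V_H$ a small cubic gadget $\Gamma_v$ through a cubicity-preserving connector so that $v$ still has degree $3$ in $G$. The gadgets should be engineered so that: first, whenever the blue agents placed on $V_H$ form a dominating set $D$ of $H$, every vertex of $V_H$ has a neighbor of the opposite color, and each $\Gamma_v$ admits a canonical internal coloring contributing a fixed known number of non-segregated agents; second, if the blue agents on $V_H$ fail to dominate some $u\in V_H$, then the gadget $\Gamma_u$ (together with its connector) is forced to lose at least a constant number of non-segregated agents, no matter how the remaining blue agents are redistributed; third, $b$ is calibrated as (total ``forced'' internal blues)$+|D|$, so that the optimum of $\hn$ on $G$ equals $M-\beta\gamma(H)+O(1)$ for a suitable constant $\beta>0$. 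A $c$-approximation of $\hn(\sp^*)$ then yields a constant-factor approximation of $\gamma(H)$, contradicting APX-hardness.

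The hard part will be designing $\Gamma_v$. Each gadget must simultaneously (a) keep $G$ cubic at the attachment point, (b) carry a canonical default bicoloring whose $\hn$-contribution is known a priori, and (c) be rigid against \emph{blue-agent shuffling}: because $b$ is a globally shared resource, moving a blue agent between two gadgets, or between a gadget and $H$, is always feasible and could be used to mask a bad choice of $D$. The construction must therefore calibrate the gadget size and structure so that any such reallocation is strictly suboptimal in $\hn$, and this locality argument must hold simultaneously for every $v\in V_H$ while preserving a constant gap. A secondary obstacle is that the reduction must handle the constraint that $b\le n/2$; this can typically be arranged by making the red ``mass'' contributed by the gadgets large enough that, together with the freedom to swap the roles of the two colors, the instance remains valid.
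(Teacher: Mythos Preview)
Your high-level instincts are right: reduce from Minimum Dominating Set on cubic graphs, and note that $\hn(\sp)$ depends only on the induced bicoloring, not on $p$ or $\Lambda$. But you then make the reduction far harder than it needs to be, and the gadget route you sketch has a real obstacle.

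The paper's reduction uses no gadgets at all. It takes the cubic instance $G$ itself as the game graph and sets $b=k^*$, the minimum dominating-set size. A red agent at $v$ is non-segregated iff some neighbor of $v$ is blue, i.e., iff $v$ is dominated by the set of blue nodes; and if the blue set is a \emph{minimum} dominating set then no blue vertex can have all three neighbors blue (else it would be redundant), so no blue is segregated either. Hence $\hn(\sp^*)=n$ exactly when the blue nodes form a minimum dominating set. From a $c$-approximate profile one reads off the blue set $D$ with $n'$ undominated vertices; $n/(n-n')\le c$ together with $n\le 4k^*$ gives $n'\le (c'-1)k^*$ for $c=4/(5-c')$, and $D$ plus those $n'$ vertices is a $c'$-approximate dominating set. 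The ``iterate over all $b$'' trick handles not knowing $k^*$.

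Your proposal misses this shortcut and instead tries to attach a gadget $\Gamma_v$ to every vertex of a graph that is \emph{already} cubic. That is where the plan breaks: if $v$ has degree $3$ in $H$, you cannot attach a connector to $v$ and keep its degree at $3$ without deleting or rerouting one of $H$'s original edges, which changes the domination structure you are trying to encode. The phrase ``cubicity-preserving connector'' hides exactly this difficulty, and the downstream worries you flag (calibrating $b$, preventing blue-agent shuffling between gadgets, rigidity of the canonical coloring) are all artifacts of a construction that is not needed. The entire ``hard part'' you identify disappears once you realize the cubic input graph can be used verbatim.
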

		\begin{proof}
			The reduction is from \textsc{Minimum Dominating Set} problem on cubic graphs, an optimization problem in which the goal is to compute a minimum-size set $D$ of nodes of a given cubic graph $G'$ that {\em dominates} $V(G')$, i.e., for every node $v \in V(G')$, $v \in D$ or there is an edge $(u,v) \in E(G')$ such that $u \in D$. It is known that a minimum dominating set on cubic graphs is not approximable within some constant $c' > 1$, unless $\mathsf{P}=\mathsf{NP}$, see \citep{alimonti97}.
			
			Let $G$ be a cubic graph of $n$ nodes that has a minimum dominating set of size $k^*$ and let $b=k^*$. We claim that a strategy profile $\sp^*$ satisfies $\hn(\sp^*)=n$ iff the $b$ blue agents are placed on the nodes that form a minimum dominating set of $G$. Indeed, a red agent placed on a node $v$ is not segregated in $\sp^*$ iff there is a blue agent placed on a node that dominates~$v$. Furthermore, a blue agent placed on a node $u$ is never segregated in $\sp^*$ because of the minimality of the dominating set (i.e., each node of a minimum dominating set $D$ must dominate a node of the graph that is not in $D$).
			
			Let $c=\frac{4}{5-c'}$. Since all the nodes of the graph trivially form a dominating set of size $n\leq 4k^*$ (each node of the dominating set dominates 4 nodes), we have that $c'< 4$ and therefore, $c>1$. 
			
			We complete the proof by showing that, if we were able to compute, in polynomial time, a strategy profile $\sp$ such that $$\frac{\hn(\sp^*)}{\hn(\sp)}\leq c,$$ then we could compute, in polynomial time, a $c'$-approximate dominating set of $G$. 
			
			Let $\sp$ be a strategy profile such that $\frac{\hn(\sp^*)}{\hn(\sp)}\leq c$ and let 
			$D$ be the set of nodes that are occupied by the blue agents in  $\sp$.  Let $n'$ be the nodes of $G$ that are not dominated by $D$. We have that $\frac{n}{n-n'} \leq c$, from which we derive that 
			$$
			n'\leq \frac{c-1}{c}n = \frac{\frac{4}{5-c'}-1}{\frac{4}{5-c'}}n \leq \frac{c'-1}{4}4k^* = (c'-1)k^*.
			$$ 
			We now compute, in polynomial time, a dominating set $D'$ of~$G$ whose size is at most $k^*+(c'-1)k^*=c'k^*$. $D'$ contains~$D$ and all the $n'$ nodes of $G$ that are not dominated by~$D$. Clearly, $D'$ is a dominating set of $G$ that approximates the value $k^*$ within a factor of $c'$. This completes the proof.
		\end{proof}

		\begin{figure}
			\center
			\begin{subfigure}{\linewidth}
				\centering
				\includegraphics[width=1.8cm]{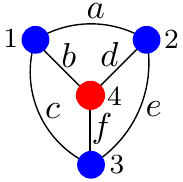}
				\caption{A cubic graph $G'$ with $n' = 4$, $m' = 6$, with a minimum vertex cover $k^* = 3$.  }
			\end{subfigure}	
			\begin{subfigure}{\linewidth}
				\centering
				\includegraphics[width=\linewidth]{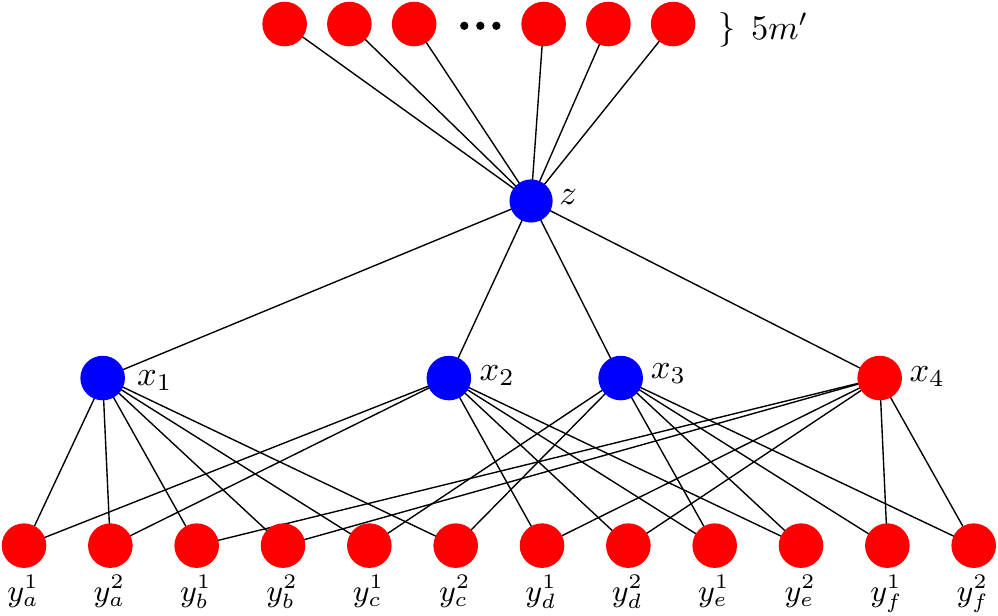}
				\caption{An instance $(G,b)$ of our game constructed from $G'$ with $n = n'+ 7m'+1$ vertices and $b=k^*+1$ blue agents.}
			\end{subfigure}	
			\caption{An example instance of the reduction from Vertex Cover shown in Theorem~\ref{thm:reduction_vertex_cover}.}\label{fig:vertexcover}
		\end{figure}

		\noindent The following lemma allows us to convert any strategy profile into an SE without increasing the number of segregated agents.
		
		\begin{lemma}\label{lm:algorithm_for_SE_almost_cubic_graphs}
			Given a game $(G,b,\Lambda)$ on an almost regular graph, with $\Delta(G)\leq 3$ and  $\Lambda \leq 1/2$, and given a strategy profile $\sp$, we can compute an SE $\sp'$ such that $\hn(\sp') \geq \hn(\sp)$ in polynomial time.
		\end{lemma}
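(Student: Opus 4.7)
The plan is to design an algorithm that, starting from $\sp$, performs a carefully chosen sequence of profitable swaps terminating at an SE $\sp'$ with $\hn(\sp')\geq\hn(\sp)$. Polynomial termination follows immediately from Theorem~\ref{thm:conv_regular}: each profitable swap strictly decreases $\Phi$ by at least one, so at most $m$ swaps are performed, regardless of the selection rule.

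To control $\hn$, I first characterize how a single profitable swap $(i,j)$ affects segregation. By Lemmas~\ref{lemma0}, \ref{lemma0.1}, and~\ref{almost}, both $i$ and $j$ must be above the peak in $\sp$, and the derivation inside the proof of Theorem~\ref{thm:conv_regular} yields $f_i(\ssp)<f_i(\sp)\leq 1$ together with the symmetric inequality for $j$; hence neither $i$ nor $j$ is segregated in $\ssp$. Any decrease in $\hn$ must therefore be caused by some neighbor $k\notin\{i,j\}$ of $\sp(i)$ or $\sp(j)$ becoming newly segregated. Because $\Delta(G)\leq 3$, such a $k$ has closed neighborhood of size at most $4$, so to become segregated $k$ must have had precisely one other-color position in its closed neighborhood and that position must be exactly $\sp(i)$ or $\sp(j)$. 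This gives a pointwise bound of one such ``bad neighbor'' at each of $\sp(i)$ and $\sp(j)$, bounding per-swap $\hn$-loss by $2$.

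The algorithm then selects, at each non-SE state, a profitable swap that does not decrease $\hn$. The key technical step is a finite case analysis: since $\Delta\leq 3$ and $\Lambda\leq 1/2$, above-peak fractions are restricted to $\{2/3,3/4,1\}$ in closed neighborhoods of size $3$ or $4$, so the local configurations at $\sp(i)$ and $\sp(j)$ admit only finitely many types. For each type I would either exhibit a profitable swap that creates no bad neighbor, or show that every bad transition at $\sp(i)$ is matched by a ``good'' transition (a segregated neighbor of $\sp(j)$ becoming non-segregated, or one of $i,j$ being segregated in $\sp$ already) so that $\hn$ does not drop. In the residual configurations where every available swap causes a net segregation, I exploit the fact that a newly-segregated $k$ has utility $0$ and hence admits an immediate profitable follow-up swap with $i$ or $j$ (now non-segregated), combining both steps into one atomic algorithmic step that is $\hn$-neutral or positive.

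The main obstacle I anticipate is completing the case analysis and verifying the follow-up compensation, particularly ensuring that the paired follow-up swap is itself profitable and does not propagate further $\hn$-drops. Polynomial time is preserved throughout: since $\Phi$ strictly decreases per swap and each atomic step consists of at most two swaps, the total number of operations remains $O(m)$.
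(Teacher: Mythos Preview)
Your high–level architecture (iteratively decrease $\Phi$ while never losing $\hn$) is the right one, and your localization of the damage to a single ``bad neighbor'' $k$ adjacent to $\sp(i)$ or $\sp(j)$ is exactly how the paper begins. The gap is your insistence on using \emph{profitable} swaps.

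Your fallback (``$k$ has utility $0$, so swap $k$ with $i$'') does not work: after the profitable swap $(i,j)$, agent $i$ may sit at the peak and then refuses every further swap. Concretely, take $\Lambda=1/2$, $\delta(\sp(i))=\delta(\sp(j))=3$, $\i=0$, $f_i(\sp)=f_j(\sp)=3/4$, and let $k$ be the unique red neighbor of $\sp(i)$ with $f_k(\sp)=3/4$. Then $(i,j)$ is profitable, $k$ becomes segregated, and $f_i(\ssp)=1/2$ gives $\u_i(\ssp)=1$, so no follow-up swap involving $i$ is profitable; the only opposite-color agent among $\{i,j\}$ is $i$, so your compensation step is unavailable. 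More generally, when $\Lambda=1/2$ and $\delta(\sp(i))=\delta(\sp(k))$, the direct swap $(i,k)$ is never profitable either (both agents move from $x/(x+1)$ to $1/(x+1)$, equal utility), so restricting to profitable swaps genuinely blocks you here.

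The paper's remedy is to drop profitability and work directly with $\Phi$. If the profitable swap $(i,j)$ would segregate a (w.l.o.g.\ red) neighbor $k$ of $\sp(i)$, perform instead the \emph{possibly non-profitable} swap $(i,k)$. Writing $x_b,x_r$ for the blue/red neighbors of $\sp(i)$ other than $k$, one has $x_b>x_r$ (since $i$ is above the peak) and hence
\[
\Phi(\sp)-\Phi(\sspp)=\bigl(\delta(\sp(k))-1\bigr)+(x_b-x_r)>0.
\]
The assumption $\Delta(G)\leq 3$ then forces $x_b+x_r\leq 2$, so $x_r=0$; this is what guarantees that the swap $(i,k)$ creates no segregated agent on either side. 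Termination still follows from $\Phi$ dropping by at least one per step, and the stopping condition ``no profitable swap exists'' is exactly the definition of an SE. Your termination argument already uses $\Phi$ rather than profitability, so relaxing to $\Phi$-decreasing swaps costs you nothing and removes the need for both the case analysis and the two-step compensation.
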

		\begin{proof}
			We shall prove the following claim: if a feasible strategy profile $\sp$ is not a SE, then there exists a (not necessarily profitable) swap decreasing the potential function $\Phi$ and not creating new segregated agents. This implies that after a sequence of at most $m=|E(G)|$ swaps of this type, we obtain an SE $\sp'$ such that $\hn(\sp')\geq \hn(\sp)$. Therefore, given $\sp$, we have that $\sp'$ can be computed in polynomial time.
			
			It remains to prove the existence of a (not necessarily profitable) swap $\sp'$ such that $\Phi(\sp')<\Phi(\sp)$ and not creating new segregated agents. Towards this end, fix a non-equilibrium feasible strategy profile $\sp$ and consider a blue agent $i$ and a red agent $j$ possessing a profitable swap in $\sp$. If no segregated agents are created in $\ssp$, then the claim holds. So assume that a segregated agent $k$ is created in $\ssp$. Clearly, by definition of profitable swaps, it must be $k\notin\{i,j\}$. Assume, w.l.o.g, that $k$ is red. Then, since we have $$f_k(\ssp)=\frac{\delta(\sp(k))+1}{\delta(\sp(k))+1}=1,$$ $k$ needs be adjacent to $i$ in $\sp$, i.e. $\k=1$, and $$f_k(\sp)=\frac{\delta(\sp(k))}{\delta(\sp(k))+1}.$$ Let $x_b$ (resp. $x_r$) be the number of blue agents (resp. red agents other than $k$) adjacent to $i$ in~$\sp$. Since profitable swaps in almost regular graphs can only occur between agents above the peak, we have $$\frac{x_b+1}{x_b+x_r+2}>\Lambda$$ which implies $x_r<x_b$ as $\Lambda \leq 1/2$. By swapping $i$ and~$k$, we get $$\Phi(\sp)-\Phi(\sspp)=\delta(\sp(k))-1+x_b-x_r>0.$$ Therefore~$\sspp$ is a swap such that $\Phi(\sspp)<\Phi(\sp)$.  
			
			We are left to prove that no segregated agents are created in $\sspp$. The neighborhood of node $\sp(k)$ in $\sp$ is composed of node $\sp(i)$ and a remaining set of red nodes. Thus, when $\sp(k)$ and $\sp(i)$ exchange their colors in $\sspp$, no segregated agents are created in the closed neighborhood of $\sp(k)$. The neighborhood of node $\sp(i)$ in $\sp$ is composed of node $\sp(k)$ and a remaining set of $x_b$ blue nodes and $x_r$ red nodes (not counting $\sp(k)$), with $x_b>x_r$. As the maximum degree of $G$ is at most $3$ and $\sp(i)$ is adjacent to $\sp(k)$, we have $x_b+x_r\leq 2$, which, since $x_b>x_r$, implies $x_r=0$. Thus, when $\sp(k)$ and $\sp(i)$ exchange their colors in $\sspp$, no segregated agents are created in the closed neighborhood of $\sp(i)$. No other nodes are affected by the swap, thus no segregated agents are created.
		\end{proof}

		\begin{corollary}
			There is a constant $c>1$ such that, given a game $(G,b,\Lambda)$ on a cubic graph $G$, with $\Lambda \leq 1/2$, the problem of computing an SE $\sp$ that maximizes $\hn(\sp)$ is not $c$-approximable in polynomial time, unless $\mathsf{P}=\mathsf{NP}$.
		\end{corollary}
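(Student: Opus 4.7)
The plan is to derive this hardness result directly from Theorem~\ref{thm:reduction_dominating_set} by exploiting the fact that on the graph class in question the Price of Stability collapses to $1$. Since a cubic graph is $3$-regular and hence almost regular with $\Delta(G)\leq 3$, Theorem~\ref{thm:PoS_almost_cubic_graphs} applies and gives $\hpos(G,b,\Lambda)=1$ for every $\Lambda\leq 1/2$. Equivalently, there always exists an SE $\sp$ with $\hn(\sp)=\hn(\sp^*)$, so the optimum value of the problem ``compute an SE maximizing $\hn$'' coincides with the optimum value of ``compute a social optimum''.

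With this identity in hand, the argument is a one-line reduction. First, take the constant $c>1$ furnished by Theorem~\ref{thm:reduction_dominating_set}; this is the very same $c$ that will be shown to be inapproximable for SE as well. Then, suppose toward a contradiction that some polynomial-time algorithm $A$, on input a cubic game $(G,b,\Lambda)$ with $\Lambda\leq 1/2$, returns an SE $\sp_A$ satisfying
\begin{equation*}
\frac{\max_{\sp\in {\sf SE}(G,b,\Lambda)}\hn(\sp)}{\hn(\sp_A)}\leq c.
\end{equation*}
By $\hpos=1$, the numerator equals $\hn(\sp^*)$, so $A$ would in fact be a polynomial-time $c$-approximation for the social optimum on cubic games, contradicting Theorem~\ref{thm:reduction_dominating_set}.

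The only step that requires any care is ensuring that the hypotheses of both cited results line up on the same family of instances: Theorem~\ref{thm:reduction_dominating_set} is stated for cubic graphs without a restriction on $\Lambda$, while Theorem~\ref{thm:PoS_almost_cubic_graphs} additionally requires $\Lambda\leq 1/2$. I would therefore phrase the corollary (and the opening of the proof) so that $\Lambda\leq 1/2$ is assumed throughout, which is already what the statement demands, and observe that the reduction of Theorem~\ref{thm:reduction_dominating_set} works verbatim for any such fixed $\Lambda$ (its argument only uses that $p(0)=0$ and $p(\Lambda)=1$, never the precise value of $\Lambda$). No new machinery is needed: Lemma~\ref{lm:algorithm_for_SE_almost_cubic_graphs} is not even invoked, since we only need the \emph{existential} part of $\hpos=1$, not its constructive form. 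The main conceptual point to highlight is that the inapproximability of the optimum \emph{transfers automatically} to the equilibrium optimum precisely because these two quantities are equal on the considered class.
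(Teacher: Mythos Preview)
Your proposal is correct and essentially matches the paper's proof: both observe that on cubic graphs with $\Lambda\leq 1/2$ the best SE attains $\hn(\sp^*)$, and then reduce to Theorem~\ref{thm:reduction_dominating_set}. The only cosmetic difference is that the paper cites Lemma~\ref{lm:algorithm_for_SE_almost_cubic_graphs} directly to obtain an SE $\sp$ with $\hn(\sp)\geq\hn(\sp^*)$, whereas you invoke its consequence Theorem~\ref{thm:PoS_almost_cubic_graphs} ($\hpos=1$); these are equivalent.
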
		
		\begin{proof}
			Let $\sp^*$ be a strategy profile that maximizes the value $\hn(\sp^*)$. Thanks to Lemma \ref{lm:algorithm_for_SE_almost_cubic_graphs}, we know that there is an SE $\sp$ such that $\hn(\sp) \geq \hn(\sp^*)$. As a consequence, any SE that approximates $\hn(\sp)$ within a factor of $c$ would also approximate $\hn(\sp^*)$ within a factor of $c$. The claim now follows from Theorem \ref{thm:reduction_dominating_set}. 
		\end{proof}
		
		\noindent We now provide analogous results for bipartite graphs.
		
		\begin{theorem}\label{thm:reduction_vertex_cover}
			There is a constant $c>1$ such that, given a game $(G,b,\Lambda)$ on a bipartite graph $G$, the problem of computing a social optimum strategy profile is not $c$-approximable in polynomial time, unless $\mathsf{P}=\mathsf{NP}$.
		\end{theorem}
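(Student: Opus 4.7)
The plan is to reduce from \textsc{Minimum Vertex Cover} on cubic graphs, which is APX-hard and therefore not $c'$-approximable in polynomial time for some constant $c' > 1$ unless $\mathsf{P}=\mathsf{NP}$~\citep{alimonti97}. Starting from a cubic graph $G'$ with $n'$ vertices, $m' = 3n'/2$ edges, and a minimum vertex cover of size $k^*$, I would build the bipartite game $(G,b,\Lambda)$ depicted in Figure~\ref{fig:vertexcover}: $G$ has $n = n' + 7m' + 1$ vertices and $b = k^* + 1$ blue agents.

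The construction keeps the $n'$ original vertices of $G'$ on one side of the bipartition, replaces every edge $e = \{u,v\} \in E(G')$ by a bipartite gadget on $7$ new vertices attached to $u$ and $v$, and adds a single central vertex. The gadget is designed so that (i) $G$ is bipartite, (ii) the $7$ gadget nodes for $e$ can all be non-segregated only if at least one of $u,v$ is occupied by a blue agent, and (iii) the central vertex forces one additional blue agent to be placed away from $V(G')$, in order to avoid one designated segregated agent. Combining these properties, an optimal profile $\sp^*$ achieves $\hn(\sp^*) = n$ if and only if the $k^*$ non-central blue agents occupy a minimum vertex cover of $G'$.

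For the inapproximability transfer, given any strategy profile $\sp$ with $\hn(\sp^*)/\hn(\sp) \leq c$, the number of segregated agents is at most $(1-1/c)n$. By property (ii), each edge of $G'$ left uncovered by the blue agents placed on $V(G')$ contributes at least a constant number of segregated gadget nodes, so the blue positions on $V(G')$ can be extended, by adjoining one endpoint per uncovered edge, to a vertex cover of $G'$ of size at most $k^* + O((1-1/c)n)$. Using $n = \Theta(n')$ and the standard lower bound $k^* \geq n'/2$ for cubic graphs (each cover vertex covers at most $3$ of the $3n'/2$ edges), choosing $c$ sufficiently close to $1$ yields a $c'$-approximation of \textsc{Minimum Vertex Cover}, a contradiction. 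The constant $c$ announced in the theorem is then defined by solving this chain of inequalities, in the same spirit as in the proof of Theorem~\ref{thm:reduction_dominating_set}.

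The main obstacle will be designing the $7$-vertex gadget so that all three requirements above hold simultaneously and are robust against the adversarial placement of the extra blue agent, since the attacker has $b - k^* = 1$ slack to try to integrate gadget nodes without sitting at $u$ or at $v$. In particular, one must verify that no alternative placement of the blue agents inside the gadgets can simulate the effect of a vertex cover of $G'$, and that the count of segregated agents per uncovered edge is linear and uniform enough to preserve the APX-hardness gap from Vertex Cover to $\hn$.
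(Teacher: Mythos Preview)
Your high-level plan matches the paper's: reduce from \textsc{Minimum Vertex Cover} on cubic graphs, build a bipartite instance with $n=n'+7m'+1$ nodes and $b=k^*+1$, argue that $\hn(\sp^*)=n$ iff $G'$ has a vertex cover of size $k^*$, and then convert a $c$-approximate profile into a $c'$-approximate cover by adjoining one endpoint per uncovered edge. The quantitative part of your transfer is also along the right lines (use $k^*\geq m'/3=n'/2$ and $n=\Theta(m')$).

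However, you have misread the construction, and this misreading is exactly what creates the ``main obstacle'' you worry about. The $7m'$ extra nodes are \emph{not} a $7$-vertex gadget per edge. In the paper's graph, each edge $e=\{u,v\}$ of $G'$ contributes only \emph{two} nodes $y_e^1,y_e^2$, each adjacent precisely to $x_u$ and $x_v$; the remaining $5m'$ nodes are pendant \emph{dummies} attached solely to the central vertex $z$. So the count is $n'+2m'+5m'+1$. The role of the dummies is the forcing mechanism you are trying to engineer inside a per-edge gadget: since $b=k^*+1\leq n'+1<5m'$, the dummies cannot all be blue, hence they are red and force $z$ to be blue; the remaining $k^*$ blues must then cover all $y_e^i$, i.e., sit on a vertex cover.

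Two further points you should not leave implicit. First, the adversarial placement you flag (a blue on some $y_e^i$) does occur and must be handled: if both $y_e^1,y_e^2$ are blue, that edge $e$ need not be covered by the blues on $V(G')$; the paper shows that absorbing one endpoint per such $e$ still yields a cover of size $<k^*$, a contradiction. Second, in the approximation direction the paper first argues that any $c$-approximate $\sp$ already has $z$ blue, and then \emph{normalises} $\sp$ into a profile $\sp'$ with all blues on $\{z\}\cup\{x_v\}$ without decreasing $\hn$, before counting uncovered edges; this normalisation step is what makes the ``segregated agents $\leftrightarrow$ uncovered edges'' count clean (exactly two segregated reds per uncovered edge). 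Your sketch jumps straight to the count without it.
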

		\begin{proof}
			The reduction is from \textsc{Minimum Vertex Cover} problem on cubic graphs, an optimization problem in which the goal is to compute a minimum-size set $C$ of nodes of a given cubic graph $G'$ such that every edge $(u,v)$ of $G'$ is {\em covered} by $C$, i.e., $\{u,v\} \cap C \neq \emptyset$. It is known that a minimum vertex cover on cubic graphs is not approximable within some constant $c' > 1$, unless $\mathsf{P}=\mathsf{NP}$, see \citep{alimonti97}.
			
			Let us assume that $n'$ and $m'=\frac{3}{2}n'$ is the number of nodes and edges of the input graph $G'$, respectively. We construct a graph $G$ as follows (see Figure \ref{fig:vertexcover} for an example). $G$ contains $n=n'+7m'+1$ nodes. More precisely, each node $v$ of $G'$ is modeled by a node $x_v$ in $G$, while each edge $e$ of $G'$ is modeled by two nodes $y_e^1$ and $y_e^2$ in $G$. $G$ also contains a special node $z$ and $5m'$ additional dummy nodes. The special node~$z$ is connected by an edge to each of the $5m'$ dummy nodes and the $n'$ nodes $x_v$, with $v$ being a node of $G'$. Finally,~$G$ contains the two edges connecting $x_v$ with $y_e^1$ and $y_e^2$ iff $v$ is an endpoint of the edge $e$ in $G'$. 
			
			By construction, we have that $G$ is a bipartite graph. Let~$k^*$ denote the size of a minimum vertex cover of $G'$. We consider the game $(G,b)$ played on the constructed graph $G$, where $b=k^*+1$.
			
			We claim that a social-optimal strategy profile $\sp^*$ has a $\hn(\sp^*)=n$ iff $G'$ admits a vertex cover of size $k^*$.
			
			$(\Leftarrow)$ Let $C^*$ be a vertex cover of $G'$ of size $k^*$. Consider the strategy profile $\sp^*$ in which one blue agent is placed on the special node $z$, while the remaining $k^*$ blue agents are placed on the nodes $x_v$, with $v \in C^*$. Clearly, the red agents are placed on the remaining nodes of the graph. By construction, one can check that no agent in $G$ is segregated (see Figure \ref{fig:vertexcover} for an example). Therefore, $\hn(\sp^*)=n$.
			
			$(\Rightarrow)$ Let $\sp^*$ be a strategy profile such that $\hn(\sp^*)=n$. First of all, as $k^*+1 \leq n'+1< 5m'$, we have that no dummy node can be occupied by a blue agent. This is because all edges that connect a dummy node with the special node $z$ must be bi-chromatic and the number of blue agents is not sufficient to cover all the dummy nodes. Therefore, all dummy nodes must be occupied by red agents and, as a consequence, the special node $z$ is occupied by a blue agent. We claim that the set $$ C(\sp^*):=\{v \in V(G') \mid x_v \textrm{ is occupied by a blue agent}\}	$$ has size $k^*$ and forms a vertex cover of $G'$. We observe that, by construction, it is enough to prove that $C(\sp^*)$ has size $k^*$ as each node $y_e^i$, with $i \in \{1,2\}$, is adjacent to the nodes $x_v$ such that $v$ covers $e$ in $G'$. For the sake of contradiction, assume that $|C(\sp^*)|<k$. We show the existence of a vertex cover of $G'$ of size strictly smaller than $k^*$. Let $E'$ be the subset of the edges of $G'$ such that, for each $e \in E'$, $y_e^1$ and $y_e^2$ are both occupied by blue agents. Let $C$ be a set of nodes of~$G'$ that contains $C(\sp^*)$ plus any of the two endnodes of $e$, for each $e \in E'$. We now show that $|C|<k^*$. As $\hn(\sp^*)=n$, each node $y_e^i$ that is occupied by a red agent should be adjacent to a node $x_v$ occupied by a blue agent. By construction~$v$ covers $e$ and $v \in C(\sp^*)$. Therefore, $C(\sp^*)$ covers all the edges of $E(G)\setminus E'$. Hence, $C$ is a vertex cover of $G'$ of size strictly smaller than $k^*$.
			
			We complete the proof by showing that there is a constant $c>1$ such that the problem of computing, in polynomial time, a strategy profile $\sp$ that approximates the social optimum $\sp^*$ is not approximable within $c$, unless $\mathsf{P}=\mathsf{NP}$. Let $c' > 1$ be the constant such that the \textsc{Minimum Vertex Cover} problem on cubic graphs is not approximable within a factor of $c'$ in polynomial time. As each node of~$G'$ covers $3$ edges, we have that $k^* \geq \frac{m'}{3}=\frac{n'}{2}$. This implies that the \textsc{Minimum Vertex Cover} problem on cubic graphs is approximable within a factor of 2 (all nodes of the graph suffice). Therefore, $c'<2$. We set $c=13/(14-c')$. Observe that $c>1$ as $1< c'<2$. We now prove that if we were able to  compute, in polynomial time, a strategy profile $\sp$ such that $\hn(\sp^*)/\hn(\sp) \leq c$, then we could $c'$-approximate the \textsc{Minimum Vertex Cover} problem on cubic graphs in polynomial time.  
			
			For the sake of contradiction, let $\sp$ be strategy profile that $c$-approximates $\hn(\sp^*)$ and assume that $\sp$ can be computed in polynomial time. We use $\sp$ to define a new strategy profile $\sp'$ such that (i) $\hn(\sp')\geq \hn(\sp)$, (ii) one blue agent is placed on the special node $z$, and (iii) all the other blue agents are placed on a subset of nodes $x_v$ with $v \in V(G')$.
			
			First of all, we show that the special node $z$ is occupied by a blue agent in $\sp$. If not, there would be at least $$5m'-k^*-1 \geq 4m'$$ dummy nodes occupied by red agents and therefore $\hn(\sp) \leq n-4m'$. As a consequence, using also the fact that $m'=\frac{3}{2}n'$ and $m'\geq 1$, we would obtain
			\begin{align*}
				\frac{\hn(\sp^*)}{\hn(\sp)} & \geq \frac{n}{n-4m'}=1+\frac{4m'}{n-4m'}=1+\frac{4m'}{n'+3m'+1}\\
				& \geq 1+\frac{3m'}{\frac{2}{3}m'+3m'+m'}=\frac{23}{14}>c,
			\end{align*}
			thus contradicting the fact that our solution is $c$-approximate.
			
			The strategy profile $\sp'$ is obtained by modifying $\sp$ as follows. Blue agents that occupy dummy nodes exchange their position with red agents occupying nodes of the form $x_v$, with $v \in V(G')$. At the same time, every blue agent that occupies a node $y_e^i$, with $e \in E(G')$ and $i\in \{1,2\}$, exchanges its position with a red agent occupying a node $x_v$ such that $v \in V(G')$, where we give priority to nodes that cover $e$. Clearly, given $\sp$, strategy profile $\sp'$ can be computed in polynomial time and we have $\hn(\sp')\geq \hn(\sp)$. 
	
			Let $m''$ be the number of edges of $G'$ that are not covered by $$ C(\sp'):=\{v \in V(G') \mid x_v \textrm{ is occupied by a blue agent}\}. $$ We show that $m''\leq (c'-1)k^*$. First of all, we observe  all the $5m'$ dummy nodes, the special node $z$ and all the $n'$ nodes~$x_v$ corresponding to the nodes $v \in V(G')$ are not segregated in $\sp'$. Therefore, the number of uncovered edges of~$G'$ equals twice the number of segregated red agents in~$G$ (two red agents per uncovered edge $e$ of $G'$ that occupy the nodes $y_e^1$ and $y_e^2$). Therefore $\hn(\sp')=n-2m''$. As a consequence
			\begin{align*}
				c > \frac{\hn(\sp^*)}{\hn(\sp')} & \geq \frac{n}{n-2m''},
			\end{align*}
			from which we derive
			\begin{align*}
				m'' & \leq \frac{c-1}{2c}n = \frac{\frac{13}{14-c'}-1}{2\frac{13}{14-c'}}n = \frac{c'-1}{26}(n'+7m'+1) \\
				& \leq \frac{c'-1}{26}\left(\frac{2}{3}m'+7m'+m'\right) = \frac{c'-1}{3}m' \\
				& \leq (c'-1)k^*,
			\end{align*}
			where we use the facts that $m'=\frac{3}{2}n'$ (i.e., $G'$ is cubic) , $1 \leq m'$, and $k^* \geq \frac{m'}{3}$ (each node of $G'$ covers 3 edges).
	
			To complete the proof, let $C$ be a vertex cover of $G'$ that contains $C(\sp')$ and a node that covers each of the edges of~$G'$ that are not covered by $C(\sp')$. Clearly, given $\sp'$, $C$ can be computed in polynomial time. The size of $C$ is upper bounded by the size of $C(\sp')$ plus the number of uncovered edges, i.e., $$|C| \leq k^* + m'' \leq c' k^*. $$ Hence, $C$ is a $c'$-approximate vertex cover of $G'$. This completes the proof.
		\end{proof}

		\begin{theorem}
			There is a constant $c>1$ such that, given a game $(G,b,1/2)$ on a bipartite graph $G$, the problem of computing an SE $\sp$ that maximizes $\hn(\sp)$ is not $c$-approximable in polynomial time, unless $\mathsf{P}=\mathsf{NP}$.
		\end{theorem}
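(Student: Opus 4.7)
The plan is to reuse the reduction of Theorem~\ref{thm:reduction_vertex_cover} verbatim and argue that the socially optimal profile $\sp^*$ produced by that reduction---a blue agent at $z$ and blue agents at $\{x_v : v\in C^*\}$, where $C^*$ is a minimum vertex cover of the input cubic graph $G'$---is already a SE for the constructed game $(G,b,1/2)$. Since $\hn(\sp^*)=n$, this collapses the PoS to $1$ on these instances, pinning the best SE at $\hn=n$, so any $c$-approximation of the best SE is automatically a $c$-approximation of the social optimum on the same instance, and the desired hardness follows from Theorem~\ref{thm:reduction_vertex_cover} with the same constant $c=13/(14-c')$.

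The core of the argument is verifying that no profitable swap exists in $\sp^*$; I would do this by showing that the red side never strictly improves. Using $\Lambda=1/2$ and the ensuing symmetry $p(f)=p(1-f)$ for $f\in[1/2,1]$, the red agents in $\sp^*$ fall into three classes: each of the $5m'$ dummies sits at the peak with maximum utility $1$ (so no swap can strictly improve them); each red $x_u$ with $u\notin C^*$ has $f=7/8$ and utility $p(1/8)$; and each red $y_e^i$ has $f\in\{1/3,2/3\}$ (depending on whether one or both endpoints of $e$ lie in $C^*$) and utility $p(1/3)$. A short case analysis over the two possible blue targets ($z$ or some $x_v$ with $v\in C^*$) then shows that, in every case, the red agent's new utility either equals the old one by the symmetry of $p$ around $1/2$---this occurs precisely for the $x_u\leftrightarrow x_v$ swap, where neither agent's utility changes---or is strictly smaller. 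Concretely, if a red moves onto $z$, its new utility equals $p((k^*+t)/(5m'+n'+1))$ for some $t\in\{0,1\}$, which on cubic graphs satisfies $(k^*+t)/(5m'+n'+1)\leq n'/(8.5n'+1)<1/8$ (using $k^*\leq n'-1$ and $m'=3n'/2$), so this utility is strictly below both $p(1/8)$ and $p(1/3)$; and if a red $y_e^i$ moves onto some $x_v$, its new $f$ is $3/4$ or $7/8$, giving utility $p(1/4)$ or $p(1/8)$, both strictly below $p(1/3)$. As no red strictly improves, no swap is profitable and $\sp^*$ is a SE.

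Given the above, any polynomial-time $c$-approximation algorithm for the best SE returns a strategy profile $\sp$ satisfying $\hn(\sp^*)/\hn(\sp)\leq c$, and the remaining analysis in the proof of Theorem~\ref{thm:reduction_vertex_cover}, which only needs $\sp$ to be a strategy profile (not necessarily an SE), converts $\sp$ into a $c'$-approximate vertex cover of $G'$. This contradicts the NP-hardness of approximating \textsc{Minimum Vertex Cover} on cubic graphs within $c'$, completing the argument. The main obstacle is the SE verification of $\sp^*$ above: it is an instance-specific case analysis that leverages both the structural regularities of the gadget (the degree profile of $z$ dwarfs $\delta(v)$ for all other $v$, making any move onto $z$ push $f$ close to $1$) and the symmetry of $p$ around $1/2$; the rest of the argument is inherited from Theorem~\ref{thm:reduction_vertex_cover}.
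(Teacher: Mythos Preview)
Your proposal is correct and follows essentially the same approach as the paper: both reuse the reduction of Theorem~\ref{thm:reduction_vertex_cover}, verify that the socially optimal profile $\sp^*$ (blue at $z$ and at $\{x_v:v\in C^*\}$) is already a SE for $\Lambda=1/2$, and then inherit the inapproximability constant. The only difference is presentational---the paper dispatches some swap cases via Lemmas~\ref{lemma0.1} and~\ref{lemma4}, while you carry out the same case analysis by direct utility computation using the symmetry $p(f)=p(1-f)$; both routes are valid and your bounds (e.g., $(k^*+t)/(5m'+n'+1)<1/8$) are correct.
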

		\begin{proof}
			We consider exactly the same reduction that we used in the proof of Theorem \ref{thm:reduction_vertex_cover} and show  the existence of a strategy profile $\sp^*$ that maximizes $\hn(\sp^*)$ which is also an SE. Observe that, once we prove that $\sp^*$ is an SE, the rest of the proof can be derived from Theorem \ref{thm:reduction_vertex_cover}.
			
			Consider the strategy profile $\sp^*$ in which a blue agent occupies the special node $z$, while the remaining $k^*$ blue agents are placed on nodes of the form $x_v$ such that $v \in C^*$ and $C^*$ is an optimal vertex cover of $G'$ (see also Figure \ref{fig:vertexcover}). In the proof of Theorem \ref{thm:reduction_vertex_cover} we already showed that $\hn(\sp^*)=n$. In the following we show that $\sp^*$ is also a SE.
	
			The red agents on the dummy nodes have maximum utility, so they never swap.
			
			Let $j$ be a red agent that is placed on a node of the form~$x_v$, with $v \in V(G')$.  The strategy $\ssp$ where $i$ is a blue agent placed on a node $x_v$, with $v \in V(G')$, is not a profitable swap by Lemma \ref{lemma4}. The strategy $\ssp$ where $i$ is the blue agent placed on the special node $z$ is not a profitable swap by Lemma \ref{lemma0.1}.
	
			Finally, consider any red agent $j$ that is placed on a node of the form $y_e^\ell$ with $e \in E(G')$ and $\ell \in \{1,2\}$. This agent has a utility of either $p(1/3)$ or $p(2/3)$. But $p(1/3) = p(2/3)$ whenever $\Lambda \leq 1/2$. The strategy $\ssp$, where $i$ is a blue agent placed on a node $x_v$ with $v \in V(G')$, is not a profitable swap either by Lemma \ref{lemma0.1} (when $\sigma(i)$ is adjacent to $\sigma(j)$) or simply because the utility of $j$ in $\ssp$ is $p(7/8)<p(2/3)$. The blue agent $i$ on the special node $z$ has a utility that is strictly smaller than $p(1/6)$ as $$\frac{k^*+1}{5m'+n'+1} \leq \frac{n'}{5n'+n'+1}< 1/6.$$ Therefore, $\ssp$ is not a profitable swap because the utility of~$j$ in $\ssp$ is upper bounded by $p(5/6)<p(2/3)$.
		\end{proof}
	
	\section{Conclusion and Future Work}
		We study game-theoretic residential segregation with integration-oriented agents and thereby open up the novel research direction of considering non-monotone utility functions. Our results clearly show that moving from monotone to non-monotone utilities yields novel structural properties and different results in terms of equilibrium existence and quality. We have equilibrium existence for a larger class of graphs, compared to~\citep{AEGISV21}, and it is an important open problem to prove or disprove if swap equilibria for our model with $\Lambda\leq \frac12$ are guaranteed to exist on any graph. 
	
		So far we considered single-peaked utilities that are supported by data from real-world sociological polls. 
		However, also other natural types of non-monotone utilities could be studied. Also ties in the utility function could be resolved by breaking them consistently towards favoring being in the minority or being in the majority. The non-existence example of swap equilibria used in the proof of Theorem~\ref{thm:noSE} also applies to the case with $\Lambda = \frac12$ and breaking ties towards being in the majority. Interestingly, by breaking ties the other way we get the same existence results as without tie-breaking and also our other results hold in this case. This is another indication that tolerance helps with stability.
	
		Moreover, all our existence results also hold for utility functions having a symmetric plateau shape around~$\Lambda$. Investigating the PoA for these utility functions is open.
	
		Regarding the quality of the equilibria, we analyze the degree of integration as social welfare function, as this is in-line with considering integration-oriented agents. Of course, studying the quality of the equilibria in terms of the standard utilitarian social welfare, i.e., $\ut(\sp) = \sum_{i=1}^n \u_i(\sp)$, would also be interesting. We note in passing that on ring topologies the PoA and the PoS with respect to both social welfare functions coincide. 
	
\bibliographystyle{named}
\bibliography{singlepeak}

\end{document}